%% file: main.tex
\documentclass[orivec]{lmcs}

\keywords{
Closure Spaces;
Spatial Logics;
Spatial Bisimilarity;
Branching Bisimilarity;
Spatial Model Checking;
Minimal Model.}

\usepackage{color}
\usepackage[table,dvipsnames]{xcolor}
\usepackage{amsmath}
\usepackage{amssymb} 
\usepackage{wasysym}
\usepackage{graphicx}
\usepackage{xspace}
\usepackage{subfig}
\usepackage{stmaryrd}
\usepackage{enumitem}
\usepackage{tikz}
\usepackage{pgfplotstable}
\usepackage{pgfplots}
\pgfplotsset{compat=newest}
\usepackage{booktabs}
\usepackage[normalem]{ulem}
\usepackage[cal=rsfso]{mathalpha}
\usepackage{lineno}

\usetikzlibrary{automata, positioning, arrows, shapes.misc,decorations.markings}
\tikzset{%
  ->,
  >=stealth',
  node distance=2cm,
  every state/.style={%
    thick, minimum size=4.5mm, inner sep=1.5pt, label distance=4mm},
  every edge/.style={%
    draw, semithick},
  initial text={}, % sets the text that appears on the start arrow
}

\usepackage[T1]{fontenc}

\input{DLmacros}

\begin{document}

%\linenumbers
%\mainmatter

%\title{Minimisation of Spatial Models\\ using Branching Bisimilarity}
\title[Spatial Model Checking via Minimised Models and Branching Bisimilarity]{Spatial Model Checking of Images\\ via Minimised Models and Branching Bisimilarity}

\titlecomment{{\lsuper*}Research partially funded by the Italian MUR Projects PRIN 2017FTXR7S, “IT- MaTTerS”, PRIN 2020TL3X8X “T-LADIES”, and Next Generation EU - MUR Project PNRR PRI ECS00000017 “THE - Tuscany Health Ecosystem”.}

\thanks{The authors are listed in alphabetical order as they equally contributed to the work presented in this paper.
The present paper has been produced without the help of any AI system.}	%optional

% affiliations are numbered automatically with a, b, c (see below)
% use the optional argument to indicate the affiliation(s) of each author
% omit the argument if there is only one author, or only one affiliation

\author[Ciancia]{Vincenzo Ciancia\lmcsorcid{0000-0003-1314-0574}}[a]
\author[Groote]{Jan~Friso~Groote\lmcsorcid{0000-0003-2196-6587}}[b]
\author[Latella]{Diego~Latella\lmcsorcid{0000-0002-3257-9059}}[c]
\author[Massink]{Mieke~Massink\lmcsorcid{0000-0001-5089-002X}}[a]
\author[de Vink]{Erik P. de Vink\lmcsorcid{0000-0001-9514-2260}}[b]

% affiliation 1 (automatically numbered a)
\address{Istituto di Scienza e Tecnologie dell'Informazione ``A. Faedo'', Consiglio Nazionale delle Ricerche, Pisa, Italy}	%optional
\email{Vincenzo.Ciancia@cnr.it, Mieke.Massink@cnr.it}  %optional

% affiliation 2 (automatically numbered b)
\address{Eindhoven University of Technology, Eindhoven, the Netherlands} %optional
\email{J.F.Groote@tue.nl, E.P.d.Vink@tue.nl} %optional

% affiliation 3 (automatically numbered c)
\address{Formerly with Istituto di Scienza e Tecnologie dell'Informazione ``A. Faedo'', Consiglio Nazionale delle Ricerche, Pisa, Italy. Retired}	%optional
\email{diego.latella@actiones.eu} %optional

%% required for running head on odd and even pages, use suitable
%% abbreviations in case of long titles and many authors:

%%%%%%%%%%%%%%%%%%%%%%%%%%%%%%%%%%%%%%%%%%%%%%%%%%%%%%%%%%%%%%%%%%%%%%%%%%%

%% the abstract has to PRECEDE the command \maketitle:
%% be sure not to issue the \maketitle command twice!

\begin{abstract}
Spatial models are of increasing interest in traditional computer science domains and beyond. Spatial minimisation procedures are crucial for efficient model checking of such models that are often large in size. For the recent notion of spatial bisimilarity for quasi-discrete closure models, called ``Compatible Paths'' (\cop) bisimilarity, an effective minimisation method is proposed, and shown to be correct.
Reasoning about space represented by quasi-discrete closure models involves two different conditional reachability modalities: a forward reachability, similar to that used in temporal logic, and a backward modality, representing the fact that a point can be reached from another point, under certain conditions. The core of our minimisation method is the encoding of closure models as labelled transition systems, enabling minimisation algorithms for branching bisimilarity to compute \cop{} equivalence classes. A prototype toolchain, \voxminx{,} is proposed to validate the minimisation method. \voxminx\ preserves the relationship between equivalence classes and sets of pixels in the original image. Experimental validation of the toolchain via benchmark examples demonstrates a promising speed-up in model checking of spatial properties for models of realistic size. 
\end{abstract}

\maketitle

%% start the paper here:

\input{Introduction}

\input{Preliminaries}

\input{CoPabisimilarity}
\input{Translation}

\input{FeasibilityStudy}

\input{Conclusions}

\section*{Acknowledgment}
Research partially supported by bilateral project between CNR (Italy)
and SRNSFG (Georgia) ``Model Checking for Polyhedral Logic''
(\#CNR-22-010); European Union -- Next GenerationEU -- National Recovery
and Resilience Plan (NRRP), Investment~1.5 Ecosystems of Innovation,
Project “Tuscany Health Ecosystem” (THE), CUP: B83C22003930001;
European Union -- Next-GenerationEU -- National Recovery and Resilience
Plan (NRRP) – MISSION~4 COMPONENT~2, INVESTMENT N.~1.1, CALL PRIN 2022
D.D.\ 104 02-02-2022 – (Stendhal) CUP N.~B53D23012850006; MUR project
PRIN 2020TL3X8X ``T-LADIES''; CNR project "Formal Methods in Software
Engineering 2.0", CUP B53C24000720005; Shota Rustaveli National
Science Foundation of Georgia grant
\#FR\nobreakdash-22\nobreakdash-6700.

% \section{Acknowledgements}
% We thank the anonymous reviewers for their valuable suggestions for
% improvement of this work.

%\vfill \mbox{}

%\pagebreak

\bibliographystyle{alpha}
\bibliography{abbr,QdCM2LTS2WAY,xref}

\appendix

\input{AppendixExample}

\end{document}

%% file: DLmacros.tex
\makeatletter
\DeclareRobustCommand{\cev}[1]{%
  \mathpalette\do@cev{#1}%
}
\newcommand{\do@cev}[2]{%
  \fix@cev{#1}{+}%
  \reflectbox{$\m@th#1\vec{\reflectbox{$\fix@cev{#1}{-}\m@th#1#2\fix@cev{#1}{+}$}}$}%
  \fix@cev{#1}{-}%
}
\newcommand{\fix@cev}[2]{%
  \ifx#1\displaystyle
    \mkern#23mu
  \else
    \ifx#1\textstyle
      \mkern#23mu
    \else
      \ifx#1\scriptstyle
        \mkern#22mu
      \else
        \mkern#22mu
      \fi
    \fi
  \fi
}

\makeatother

\newcommand{\RED}[1]{\textcolor{red}{#1}}

\newcommand{\BROWN}[1]{\textcolor{brown}{#1}}

\newcommand{\MAGENTA}[1]{\textcolor{magenta}{#1}}

\newcounter{dgnot} % Notes from Diego
\newenvironment{dgnot}[1][]{\refstepcounter{dgnot}\par\medskip
   \noindent \textbf{\RED{NfD~\thedgnot.}  #1} \rmfamily}{\medskip}

\newcounter{mknot} % Notes from Mieke
\newenvironment{mknot}[1][]{\refstepcounter{mknot}\par\medskip
   \noindent \textbf{\MAGENTA{NfM~\themknot.}  #1} \rmfamily}{\medskip}

\newcounter{vcnot} % Notes from Vincenzo
\newenvironment{vcnot}[1][]{\refstepcounter{vcnot}\par\medskip
   \noindent \textbf{\BROWN{NfV~\thevcnot.}  #1} \rmfamily}{\medskip}

\newcounter{eknot} % Notes from Erik
\newenvironment{eknot}[1][]{\refstepcounter{vcnot}\par\medskip
   \noindent \textbf{\MAGENTA{NfE~\thevcnot.}  #1} \rmfamily}{\medskip}

\newcommand{\calA}{\mathcal{A}}

\newcommand{\calC}{\mathcal{C}}

\newcommand{\calL}{\mathcal{L}}
\newcommand{\calM}{\mathcal{M}}

\newcommand{\calP}{\mathcal{P}}

\newcommand{\calS}{\mathcal{S}}

\newcommand{\calV}{\mathcal{V}}

\newcommand{\SET}[1]{\{#1\}}
\newcommand{\ZET}[2]{\lbrace \, {#1} \mid {#2} \, \rbrace}
\newcommand{\pws}{\mathbf{\calP}}

\newcommand{\nats}{\mathbb{N}}

\newcommand{\cs}{CS}
\newcommand{\qdcs}{QdCS}

\newcommand{\closure}{\calC}

\newcommand{\pthlen}{\mathtt{len}}

\newcommand{\cm}{CM}
\newcommand{\cmc}{CMC}

\newcommand{\cop}{CoPa}
\newcommand{\qdcm}{QdCM}
\newcommand{\ap}{{\tt AP}}
\newcommand{\model}{\calM}
\newcommand{\peval}{\calV}

\newcommand{\eqsign}{\rightleftharpoons}

\newcommand{\copbis}{\eqsign_{\mathtt{\cop}}}

\newcommand{\bis}{\mathrel{\,%
  \raisebox{.3ex}{$\underline{\makebox[.7em]{$\leftrightarrow$}}$}\,}}

\newcommand{\irl}{{\tt IRL}}

\newcommand{\icrl}{{\tt ICRL}}
\newcommand{\slcs}{{\tt SLCS}}

\newcommand{\voxlogica}{{\tt VoxLogicA}}
\newcommand{\imgql}{{\tt ImgQL}}
\newcommand{\graphlogica}{{\tt GraphLogicA}}
\newcommand{\grql}{{\tt GrQL}}
\newcommand{\mCRLT}{{\tt mCRL2}}
\newcommand{\voxmin}{{\tt VoxMin}}
\newcommand{\voxminx}{{\tt VoxMinX}}
\newcommand{\form}{\Phi}

\newcommand{\lneg}{\neg}
\newcommand{\liand}{\bigwedge}

\newcommand{\lstothru}{\vec{\zeta}}
\newcommand{\lsfromthru}{\cev{\zeta}}

\newcommand{\icrleq}{\simeq_{\icrl}}

\newcommand{\closedefi}{}

\newcommand{\lts}{LTS}
\newcommand{\LTS}{\calS}
\newcommand{\ltsT}{\mathtt{LTS}_{\mkern1mu \mathrm{gen}}}
\newcommand{\ltsTS}{\mathtt{LTS_{\mkern1mu \mathrm{sym}}}}
\newcommand{\act}{\mathtt{Act}}
\renewcommand{\act}{\mathcal{A}}
\newcommand{\Mtrans}[1]{\stackrel{#1}{\longrightarrow}}
\newcommand{\pijl}[1]{\Mtrans{#1}}
\newcommand{\eD}{\mathtt{dr}}
\newcommand{\eC}{\mathtt{cv}}
\newcommand{\ch}{\mathtt{ch}}

\newcommand{\sem}[1]{[\![#1]\!]}

\newcommand{\bisb}{\bis_b}
\newcommand{\cevp}[1]{\cev{#1}\mkern1mu{}'\mkern-2mu}

\newcommand{\cevx}{\cev{x}}
\newcommand{\cevxp}{\cevp{x}}
\newcommand{\cevy}{\cev{y}}

\newcommand{\depth}[1]{|\mkern1mu{#1}\mkern1mu|}
\newcommand{\eqV}{=_{\mkern1mu \peval}}

\newcommand{\ie}{\emph{i.e.}}

\newcommand{\mopB}{\mathop{\mkern-2mu \textit{B}}}
\newcommand{\neqV}{\neq_{\mkern1mu \peval}}
\newcommand{\pijlch}{\pijl{\ch}}
\newcommand{\pijlp}{\pijl{p}}
\newcommand{\pijlster}[1]{\mathbin{\stackrel{#1}{\longrightarrow}\!\!{}^\ast}}
\newcommand{\pijlstertau}{\pijlster{\tau}}
\newcommand{\pijltau}{\pijl{\tau}}
\newcommand{\vecp}[1]{\vec{#1}\mkern2mu{}'\mkern-2mu}

\newcommand{\vecx}{\vec{x}}
\newcommand{\vecxp}{\vecp{x}}
\newcommand{\vecy}{\vec{y}}
\newcommand{\vecyp}{\vecp{y}}
\newcommand{\xbar}{\bar{x}}
\newcommand{\xbarp}{\bar{x}'}

\makeatletter
\def\mathcenterto#1#2{\mathclap{\phantom{#1}\mathclap{#2}}\phantom{#1}}
\let\old@widetilde\widetilde
\def\widetildeto#1#2{\mathcenterto{#2}{\old@widetilde{\mathcenterto{#1}{#2\,}}}}
\let\old@widehat\widehat
\def\widehatto#1#2{\mathcenterto{#2}{\old@widehat{\mathcenterto{#1}{#2\,}}}}
\makeatother

\newcommand{\wtC}{\widetildeto{X}{\mathcal{C}} \mkern-2mu}
\newcommand{\wtM}{\widetildeto{X}{\mathcal{M}}}
\newcommand{\wtV}{\mkern1mu \widetildeto{X}{\peval} \mkern-2mu}
\newcommand{\wtX}{\widetilde{X}}
\newcommand{\wtpi}{\widetilde{\pi}}

%% file: Introduction.tex
\section{Introduction}
\label{sec:introduction}

Spatial model checking consists in the automatic verification of
properties, expressed in a suitable spatial logic, on each point of a
suitable spatial model.  
In~\cite{Ci+14} the Spatial Logic for Closure Spaces (\slcs{}) was introduced and further developed in~\cite{Ci+16}.
Closure spaces, or \v{C}ech closure spaces~\cite{Cec66}, are a generalisation of topological spaces suitable to model many kinds of spatial objects, ranging from topological objects in continuous spaces, such as Euclidean spaces, to discrete spatial objects, such as general and regular graphs. 
The latter are particularly useful to represent digital images. 
Closure spaces (\cs{}) and the sub-class of quasi-discrete closure spaces, \qdcs{s} for short, form a convenient theoretical framework because of their generality and relative simplicity. 
A practical demonstration of this is the tool \voxlogica{}, a recently developed spatial model checker that can efficiently check \slcs{} properties of large digital images represented as \emph{symmetric} quasi-discrete closure models---\qdcm{s}, i.e.\ models with \qdcs{s} as underlying spaces~\cite{Be+19,Be+19a,Be+21}.

Spatial and spatio-temporal model checking have been successfully employed, in the past years, in a variety of application areas, ranging from Collective Adaptive Systems \cite{Ci+16a,Ci+18} to signals~\cite{Ne+18}, images \cite{Ci+16,Ha+15,Ba+20} and polyhedra~\cite{Be+21a}, just to mention a few. 
These  methods for spatial analysis are enjoying an increasing interest in computer science and beyond, also in unexpected  domains such as medical imaging \cite{Be+19,Be+21}. 
Medical images are obtained from diagnostic instruments such as magnetic resonance images (MRI), computer tomography scans, positron emission tomography, or dermoscopic images. 
Such images usually consist of millions of pixels, in 2D, or voxels (volumetric pixels) in 3D images.

\begin{figure}%[h!]
\begin{center}
\includegraphics[height=0.15\textheight]{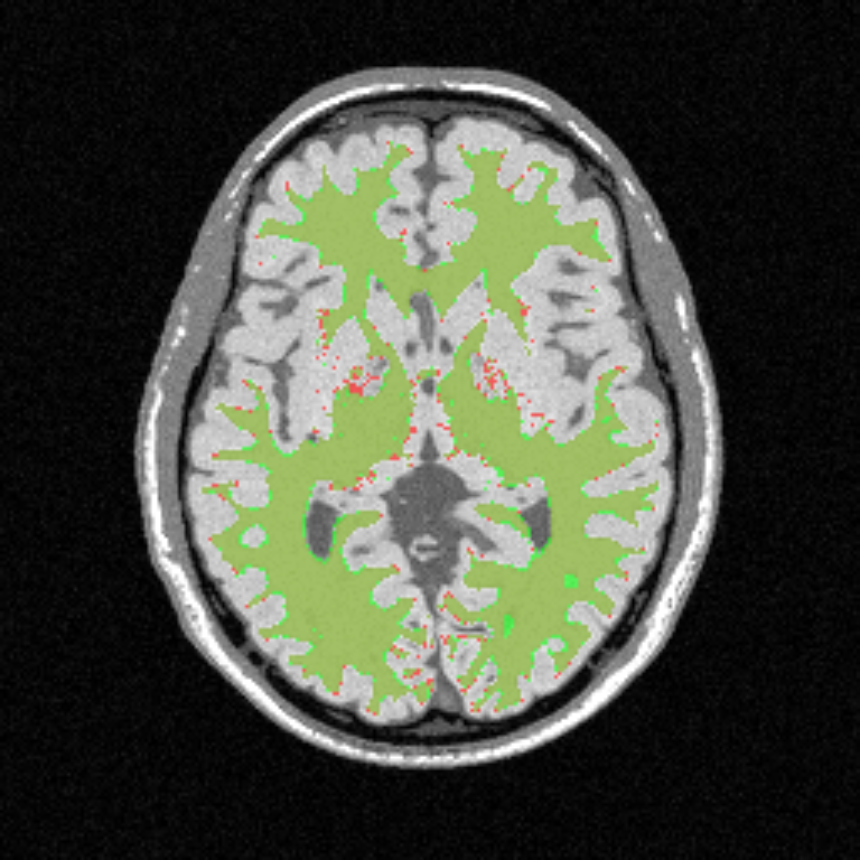}\quad
\includegraphics[height=0.15\textheight]{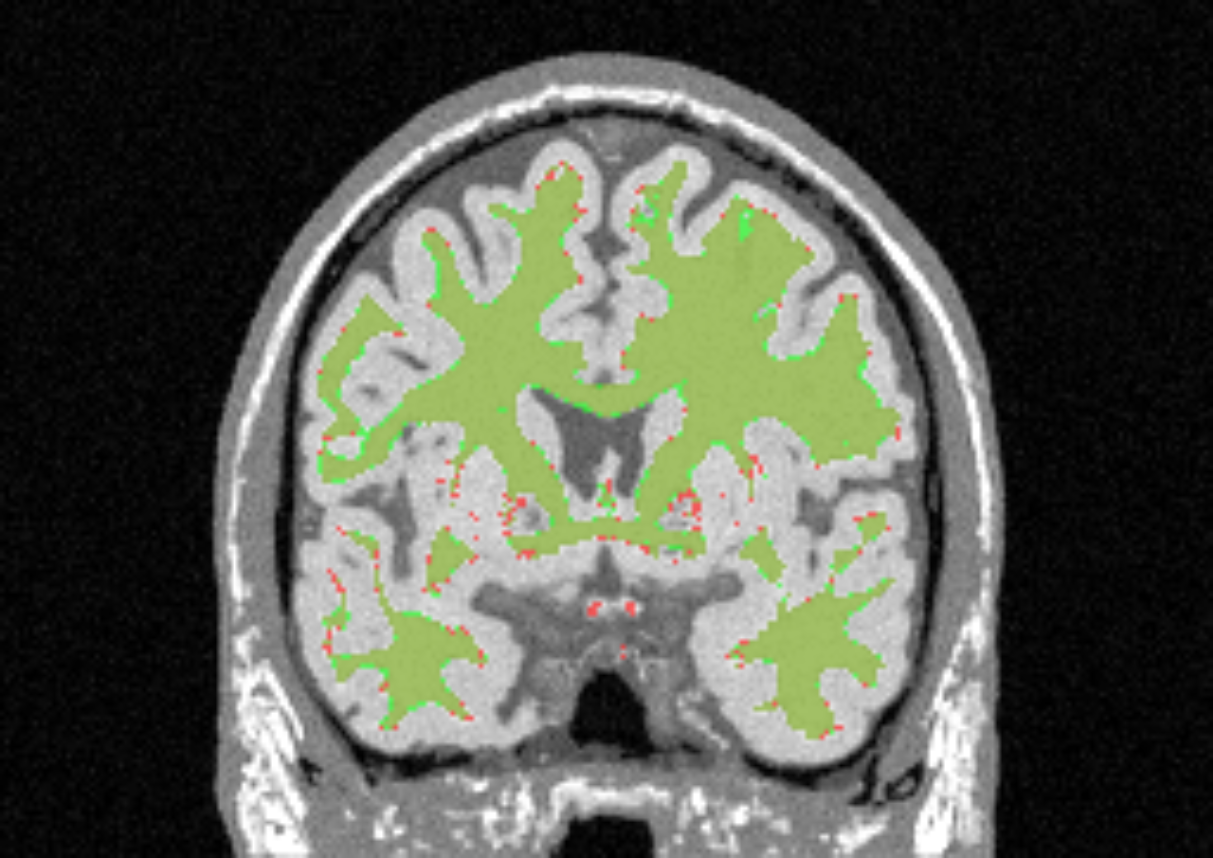}\quad
\includegraphics[height=0.15\textheight]{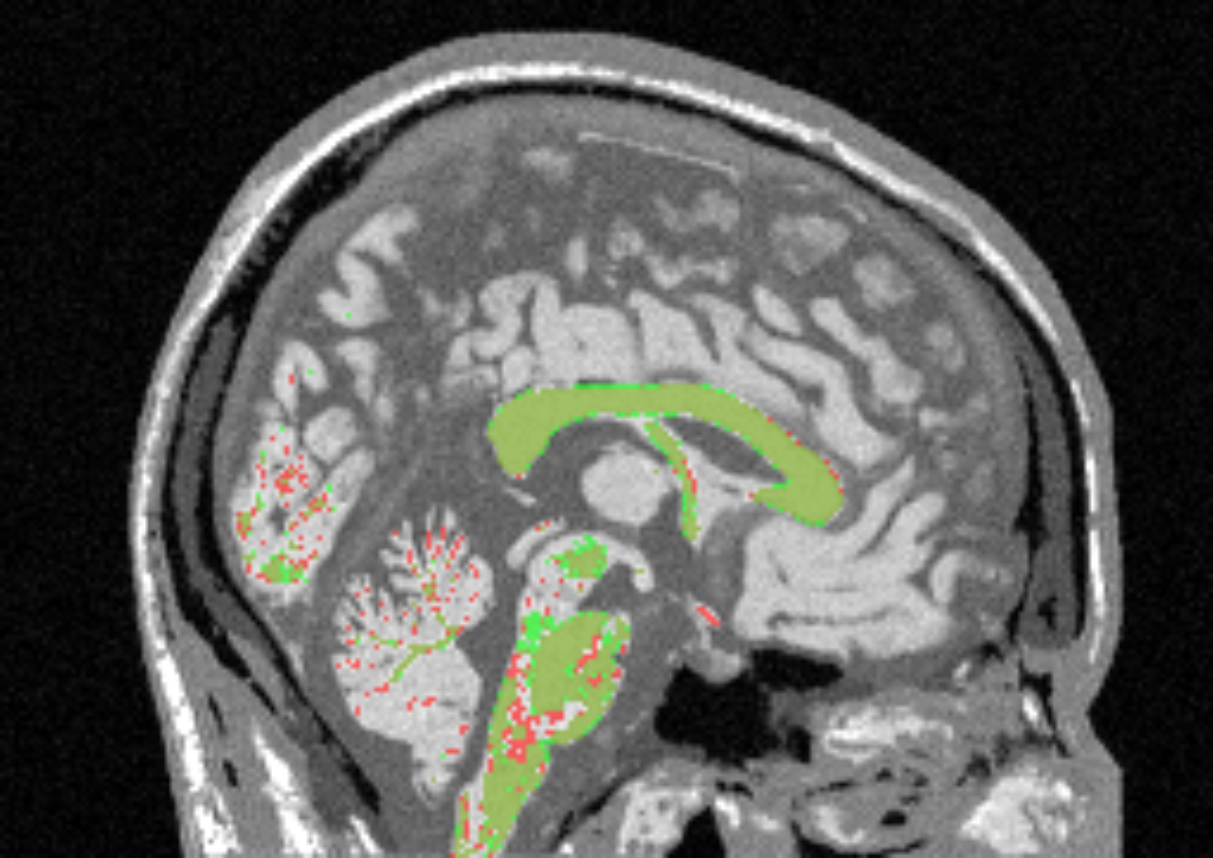}
\end{center}
\vspace*{-0.5\baselineskip}
\caption{Cross section of a dataset element of BrainWeb~\cite{Au+06} pat04 MRI at slice $(x,y,z)=(129,147,78)$, (axial view left, coronal view middle, sagittal view right): \voxlogica{} analysis of the segmentation of white matter, shown as a green overlay on top of a red overlay representing the ground truth.} \label{fig:brain}
\end{figure}

For example, in our earlier work~\cite{Be+19a}, the 3D MRI image of a healthy brain shown in Figure~\ref{fig:brain} consists of circa 12M~voxels (i.e.\ $256 \times 256 \times 181$) requiring approximately 10 seconds to analyse using \voxlogica{} on a desktop computer.\footnote{In that work we used an Intel Core I9~9900K processor (with 8~cores) and 32GB of~RAM\@.} 
\voxlogica{} checks a logical specification for \emph{every} point in the model exploiting parallel execution, memoization, and state-of-the-art imaging libraries~\cite{Be+19}. 
\voxlogica{}  internally creates a \qdcm{} of the pixel-based, or voxel-based, image that it is given as input. 
Such models take the form of a symmetric regular 8-adjacency graph, in which nodes are representing pixels (voxels) and edges between nodes are representing the adjacency between pixels represented by the nodes. 
Both adjacency in north-south direction and in east-west direction are considered, as well as diagonal adjacency. 
Figure~\ref{fig:img} shows a simple pixel-based image and Figure~\ref{fig:QdCMsym} its related \qdcm{} (except for the self-loops, that have been omitted to avoid clutter in the figure).

\begin{figure}
\centering
\subfloat[][]
{
  \input{pict/img_grid02.tex}
  \label{fig:img}
}\qquad \quad\quad\quad
\centering
\subfloat[][]
{
  \input{pict/grid02.tex}
  \label{fig:QdCMsym}
}\quad
\caption{Figure~\ref{fig:img}: A pixel based image with green and red pixels. Figure~\ref{fig:QdCMsym}: A finite, symmetric \qdcm{} in the shape of a regular graph with nodes representing pixels and edges representing the 8-adjacency relation between the pixels of the image in Figure~\ref{fig:img}.}\label{fig:VLExample}
\end{figure}

A detailed description of \voxlogica{}, its application to the contouring of high grade glioblastoma, a malignant type of brain tumour tissue, and to the identification of grey and white matter in the brain can be found in~\cite{BelmonteCM25}. 
In the same work also a hybrid contouring method has been proposed, combining the symbolic spatial model checking method with {\sf nnU-Net}~\cite{IsenseeJKPM21,IsenseeWUBRMJ24}, a sub-symbolic deep learning method. 
Such a hybrid approach may provide a human explainable method for the contouring while gaining in precision enhancing the results with the deep learning approach. 

A way to increase the time and space efficiency of spatial model checking is to
exploit suitable model minimisation algorithms based on \emph{spatial}
bisimilarity. 
For that purpose several spatial bisimilarities have been proposed in~\cite{CLMdV22,CLMdV25}. 
In particular, \cop-bisimilarity, based on a notion of ``path-compatibility'' is promising. 
The notion of path compatibility essentially requires that two paths, in order to be compatible, have to be both composed of a (non-empty) sequence of an equal number of non-empty adjacent ``zones'', such that each point in one zone of one path must be related, by the bisimulation relation, to every point in the corresponding zone of the other path (see the illustration in Figure~\ref{fig:Zones}).

In~\cite{CLMdV22,CLMdV25}, a logical characterisation of \cop-bisimilarity has
been given. 
More precisely, Infinitary Compatible Reachability Logic (\icrl) has been defined that is a modal logic with infinitary conjunction and two modalities, $\lstothru$ and~$\lsfromthru$, expressing conditional forward and backward reachability, respectively.  
Given two \icrl{} formulas $\form_1$ and~$\form_2$, a point~$x$ satisfies $\lstothru \form_1 [\form_2]$ if~$x$ satisfies~$\form_1$, or~$x$ satisfies~$\form_2$ and there is a path from~$x$ to a point~$y$ satisfying~$\form_1$ where all the points on the path between~$x$ and~$y$ satisfy~$\form_2$. 
Similarly for $\lsfromthru \form_1 [\form_2]$, which is satisfied by~$x$ if~$x$ satisfies $\form_1$, or ~$x$ satisfies~$\form_2$ and there is a path from a point~$y$ satisfying~$\form_1$, to~$x$, where all the points on the path between~$y$ and~$x$ satisfy~$\form_2$.\footnote{Note that,
  different from the context of classical temporal logics, in the
  context of space, and in particular when dealing with notions of
  directionality (e.g. one way roads, public area gates), it is
  important to be able to distinguish between the concept of
  “reaching” and that of “being reached”. The interested reader is
  referred to~\cite{CLMdV22,CLMdV25} for a discussion on the issue.}

Building on our previous work~\cite{CianciaGLMV23}, this paper includes two original contributions, one of a more theoretical nature and the other more practical in kind.

\paragraph{Theoretical contribution.} 
The paper introduces an encoding of finite Closure Models (\cm{}s), a sub-class of \qdcm{}s, into Labelled Transition Systems (LTS) that preserves \cop-bisimilarity. 
More precisely, two points in the input \cm{} are \cop-bisimilar if
and only if the states they are mapped to by the encoding are branching bisimilar \cite{GlW96,Gr+17,JGKW20,GJ25}. 
Thus, given a finite \cm, the encoding makes it possible to effectively compute the minimal model with respect to \cop-bisimilarity via the composition of the encoding and a very efficient minimisation algorithm for branching bisimilarity, proposed in~\cite{Gr+17,JGKW20,GJ25} and implemented as part of the \mCRLT\ tool set~\cite{Bu+19}. 
In addition, detailed correctness proofs, including those concerning the encoding, are provided.

\paragraph{Practical contribution.}
To validate the approach, we apply it to digital images, which are a
special case of (symmetric) finite closure models. A digital image is usually composed of a large number of pixels (i.e.\ points in the closure space representing the image) and, therefore, digital images represent  an interesting benchmark for the minimisation procedure. 
In particular, a prototype experimental implementation of the encoding has been developed which, in turn, is part of a more complex toolchain, \voxminx, used for the analysis of digital images. 
The toolchain comprises the implementation of the encoding developed in the theoretical part and produces a minimal model, exploiting the branching bisimulation minimisation procedure provided by the tool set \mCRLT~\cite{Bu+19}. 
The resulting minimal model is suitable for model checking with \graphlogica, a spatial model checker for the finitary version of \icrl, that takes decorated graphs as input, representing general finite closure models.  
The minimal model produced by \voxminx\ maintains the relationship between the equivalence classes, that are represented by states of the minimal model, and the respective sets of pixels of the original digital image. 
In this way, the spatial model checking results can directly be visualised on the original image, for example by highlighting the pixels that satisfy a spatial property of interest. 
\voxminx\ extends the preliminary toolchain that was introduced in our previous work~\cite{CianciaGLMV23}. 
The \voxminx\ toolchain is evaluated on a benchmark of images at various resolutions. 
In this evaluation the model checking performance of \voxminx\ is compared to that of  \voxlogica,  since \voxlogica\ is an optimised spatial model checker for non-minimised digital images, and therefore it forms a state-of-the-art basis for comparison. 
The evaluation shows that a considerable speed-up can be obtained in model checking time, especially when comparing the times for larger images, suggesting interesting directions for future research and applications.

\paragraph{Related work.} Qualitative reasoning about spatial entities~\cite{CohnR2008} has been, and still is, a very active area of research in which the theory of topology and closure spaces play a important role. 
Prominent examples of that area are the region connection calculi, such as~RCC8D\@.
An embedding of the latter in the collective variant of \slcs\ was presented in~\cite{Ci+19b}.
Our work is mainly inspired by spatial logics (see~\cite{HBSL07} for
an extensive overview), with seminal work dating back to Tarski and McKinsey in the forties of the previous century.
The work on \emph{spatial model checking} for logics with reachability originated in \cite{Ci+16}, which includes a comparison to the work of Aiello on spatial \emph{until} operators (see e.g.~\cite{A02}). 
In~\cite{A03},
Aiello envisaged practical applications of topological logics with \emph{until} to minimisation of images. The present paper builds on and extends that vision. 
Bisimilarity for spatial logics with reachability is a relatively new subject. 
In~\cite{Li+20}, a
bisimulation relation that is correct with respect to \slcs{} has been presented. 
Such a definition has not yet been proved complete and is aimed at characterising the logic including the \emph{near} operator, therefore, not quotienting up-to reachability, as done in the present paper. 
The work in~\cite{Be+21a,Ci+23c,Be+24} and that in~\cite{LoQ23} introduce
bisimulation relations that characterise spatial logics with reachability in polyhedral models and in simplicial complexes, respectively. 
It will be interesting future work to apply the minimisation techniques we present also to such relevant classes of models. 
First results in that direction can be found in~\cite{BezhanishviliBCGJLMV26} where weak simplicial bisimilarity and related minimisation procedures are presented for use in polyhedral model checking.

In the Computer Science literature, other kinds of spatial logics have been proposed that typically describe situations in which modal operators are interpreted \emph{syntactically} against the structure of agents in a process calculus. 
We refer to~\cite{CaG00,CaCC03} for some classical examples. 
Along the same lines, a recent example is given in~\cite{TPGN15}, concerning model checking of security aspects in cyber-physical systems, in a spatial context based on the idea of bigraphical reactive systems introduced by Milner~\cite{Mil09}. 
A bigraph consists of two graphs: A place graph, i.e. a forest defined over a set of nodes which is intended to represent entities and their locality in terms of a containment structure, and a link graph, a hypergraph composed over the same set of nodes representing arbitrary linking among those entities. 
The \qdcs\ models that are the topic of the present paper, instead, address space from a topological point of view rather than as a containment structure for spatial entities.

There is also active research on the improvement of the performance of branching bisimularity minimisation algorithms. A recent contribution can be found in~\cite{MaL26}.

The structure of the paper is as follows. Section~\ref{sec:Preliminaries} recalls relevant  concepts and introduces notation.  Section~\ref{sec:COPAbisimilarity} recalls \cop-bisimilarity for \qdcm{s}. In Section~\ref{sec:translation} the encoding of 
finite \qdcm{s} into \lts{s} is presented, together with the correctness results. 
Section~\ref{sec:FisStudy} describes a feasibility study and experimental evaluation of \voxminx\  applying it to three families of representative benchmark examples. 
Appendix~\ref{apx:running} shows a small running example for the transformations and minimisation performed by the \voxminx\ toolchain.

%% file: pict/img_grid02.tex
%% file grid02.tex used in CoPa section %%

\scriptsize
\scalebox{0.5}{
\begin{tikzpicture}[baseline=1,
  <->, >=stealth', semithick, shorten >=0.5pt, shorten <=0.5pt,
  every state/.style={
    %circle, minimum size=12pt, inner sep=0.5pt, draw},
    rectangle, minimum size=42pt, inner sep=0.5pt, draw},
  ]
  
  \node [state, fill=green!50] (x11) at (1.0,1.0) {\phantom{21}};
  \node [state, fill=green!50] (x12) at (1.0,2.5) {\phantom{16}};
  \node [state, fill=green!50] (x13) at (1.0,4.0) {\phantom{11}};
  \node [state, fill=green!50] (x14) at (1.0,5.5) {\phantom{6}};
  \node [state, fill=green!50] (x15) at (1.0,7.0) {\phantom{1}};

  \node [state, fill=green!50] (x21) at (2.5,1.0) {\phantom{22}};
  \node [state, fill=red!50  ] (x22) at (2.5,2.5) {\phantom{17}};
  \node [state, fill=red!50  ] (x23) at (2.5,4.0) {\phantom{12}};
  \node [state, fill=red!50  ] (x24) at (2.5,5.5) {\phantom{7}};
  \node [state, fill=green!50] (x25) at (2.5,7.0) {\phantom{2}};

  \node [state, fill=green!50] (x31) at (4.0,1.0) {\phantom{23}};
  \node [state, fill=red!50  ] (x32) at (4.0,2.5) {\phantom{18}};
  \node [state, fill=red!50  ] (x33) at (4.0,4.0) {\phantom{13}};
  \node [state, fill=red!50  ] (x34) at (4.0,5.5) {\phantom{8}};
  \node [state, fill=green!50] (x35) at (4.0,7.0) {\phantom{3}};

  \node [state, fill=green!50] (x41) at (5.5,1.0) {\phantom{24}};
  \node [state, fill=red!50  ] (x42) at (5.5,2.5) {\phantom{19}};
  \node [state, fill=red!50  ] (x43) at (5.5,4.0) {\phantom{14}};
  \node [state, fill=red!50  ] (x44) at (5.5,5.5) {\phantom{9}};
  \node [state, fill=green!50] (x45) at (5.5,7.0) {\phantom{4}};

  \node [state, fill=green!50] (x51) at (7.0,1.0) {\phantom{25}};
  \node [state, fill=green!50] (x52) at (7.0,2.5) {\phantom{20}};
  \node [state, fill=green!50] (x53) at (7.0,4.0) {\phantom{15}};
  \node [state, fill=green!50] (x54) at (7.0,5.5) {\phantom{10}};
  \node [state, fill=green!50] (x55) at (7.0,7.0) {\phantom{5}};

\end{tikzpicture}
}

%% file: pict/grid02.tex
%% file grid02.tex used in CoPa section %%

%\scriptsize

\scalebox{0.5}{
\begin{tikzpicture}[baseline=1,
  <->, >=stealth', semithick, shorten >=0.5pt, shorten <=0.5pt,
  every state/.style={
    circle, minimum size=20pt, inner sep=0.5pt, draw},
  ]
  
  \node [state, fill=green!50] (x11) at (1.0,1.0) {21};
  \node [state, fill=green!50] (x12) at (1.0,2.5) {16};
  \node [state, fill=green!50] (x13) at (1.0,4.0) {11};
  \node [state, fill=green!50] (x14) at (1.0,5.5) {6};
  \node [state, fill=green!50] (x15) at (1.0,7.0) {1};

  \node [state, fill=green!50] (x21) at (2.5,1.0) {22};
  \node [state, fill=red!50  ] (x22) at (2.5,2.5) {17};
  \node [state, fill=red!50  ] (x23) at (2.5,4.0) {12};
  \node [state, fill=red!50  ] (x24) at (2.5,5.5) {7};
  \node [state, fill=green!50] (x25) at (2.5,7.0) {2};

  \node [state, fill=green!50] (x31) at (4.0,1.0) {23};
  \node [state, fill=red!50  ] (x32) at (4.0,2.5) {18};
  \node [state, fill=red!50  ] (x33) at (4.0,4.0) {13};
  \node [state, fill=red!50  ] (x34) at (4.0,5.5) {8};
  \node [state, fill=green!50] (x35) at (4.0,7.0) {3};

  \node [state, fill=green!50] (x41) at (5.5,1.0) {24};
  \node [state, fill=red!50  ] (x42) at (5.5,2.5) {19};
  \node [state, fill=red!50  ] (x43) at (5.5,4.0) {14};
  \node [state, fill=red!50  ] (x44) at (5.5,5.5) {9};
  \node [state, fill=green!50] (x45) at (5.5,7.0) {4};

  \node [state, fill=green!50] (x51) at (7.0,1.0) {25};
  \node [state, fill=green!50] (x52) at (7.0,2.5) {20};
  \node [state, fill=green!50] (x53) at (7.0,4.0) {15};
  \node [state, fill=green!50] (x54) at (7.0,5.5) {10};
  \node [state, fill=green!50] (x55) at (7.0,7.0) {5};

  %% south-north %%
  \draw (x11) edge (x12) ;  \draw (x12) edge (x13) ;
  \draw (x13) edge (x14) ;  \draw (x14) edge (x15) ;

  \draw (x21) edge (x22) ;  \draw (x22) edge (x23) ;
  \draw (x23) edge (x24) ;  \draw (x24) edge (x25) ;

  \draw (x31) edge (x32) ;  \draw (x32) edge (x33) ;
  \draw (x33) edge (x34) ;  \draw (x34) edge (x35) ;

  \draw (x41) edge (x42) ;  \draw (x42) edge (x43) ;
  \draw (x43) edge (x44) ;  \draw (x44) edge (x45) ;

  \draw (x51) edge (x52) ;  \draw (x52) edge (x53) ;
  \draw (x53) edge (x54) ;  \draw (x54) edge (x55) ;

  %% south-west north-east %%
  \draw (x14) edge (x25) ;

  \draw (x13) edge (x24) ;
  \draw (x24) edge (x35) ;

  \draw (x12) edge (x23) ;
  \draw (x23) edge (x34) ;
  \draw (x34) edge (x45) ;

  \draw (x11) edge (x22) ;
  \draw (x22) edge (x33) ;
  \draw (x33) edge (x44) ;
  \draw (x44) edge (x55) ;

  \draw (x21) edge (x32) ;
  \draw (x32) edge (x43) ;
  \draw (x43) edge (x54) ;

  \draw (x31) edge (x42) ;
  \draw (x42) edge (x53) ;

  \draw (x41) edge (x52) ;

  %% west-east %%
  \draw (x11) edge (x21) ; \draw (x21) edge (x31) ;
  \draw (x31) edge (x41) ; \draw (x41) edge (x51) ;

  \draw (x12) edge (x22) ; \draw (x22) edge (x32) ;
  \draw (x32) edge (x42) ; \draw (x42) edge (x52) ;

  \draw (x13) edge (x23) ; \draw (x23) edge (x33) ;
  \draw (x33) edge (x43) ; \draw (x43) edge (x53) ;

  \draw (x14) edge (x24) ; \draw (x24) edge (x34) ;
  \draw (x34) edge (x44) ; \draw (x44) edge (x54) ;

  \draw (x15) edge (x25) ; \draw (x25) edge (x35) ;
  \draw (x35) edge (x45) ; \draw (x45) edge (x55) ;

  %% north-west south-east %%
  \draw (x12) edge (x21) ;

  \draw (x13) edge (x22) ;
  \draw (x22) edge (x31) ;

  \draw (x14) edge (x23) ;
  \draw (x23) edge (x32) ;
  \draw (x32) edge (x41) ;

  \draw (x15) edge (x24) ;
  \draw (x24) edge (x33) ;
  \draw (x33) edge (x42) ;
  \draw (x42) edge (x51) ;

  \draw (x25) edge (x34) ;
  \draw (x34) edge (x43) ;
  \draw (x43) edge (x52) ;

  \draw (x35) edge (x44) ;
  \draw (x44) edge (x53) ;

  \draw (x45) edge (x54) ;
\end{tikzpicture}
}

%% file: Preliminaries.tex
 \section{Preliminaries}\label{sec:Preliminaries}

We first introduce some relevant concepts and notation, in particular recalling \lts{s}, branching bisimilarity~\cite{GlW96,Gr+17,JGKW20}, (quasi-discrete) closure spaces, and closure models and paths therein.

Given a set $X$, 
$\pws(X)$ denotes the powerset of $X$.
The set of natural numbers is denoted by $\nats$. For $n,m \in \nats$ we often use the interval notation 
$[m,n]$ denoting the set $\ZET{i \in \nats}{m \leqslant i \leqslant n}$,
$[m,n)$ denoting the set $\ZET{i \in \nats}{m \leqslant i < n}$, and similarly for
$(m,n]$ and~$(m,n)$.
 
In the sequel, branching bisimilarity~\cite{GlW96,Gr+17,JGKW20} of states of \lts{s} plays a central role. Below we recall the relevant definitions.

\begin{defi}[Labelled Transition System - \lts{}]
  \label{def:lts}
  A \emph{Labelled Transition System}, \lts{} for short, is a tuple
  $(S, \act, {\rightarrow})$ where $S$ and~$\act$ are non-empty sets of
  \emph{states}, and \emph{actions}, respectively, and relation
  ${\rightarrow} \, \subseteq \, S \times \act \times S$ is the \emph{transition relation}.  \closedefi
\end{defi}

\noindent
As usual, we distinguish an action $\tau \in \act$ that models a ``silent move'' in the~\lts.
Moreover, we call the elements of the relation~$\rightarrow$ \emph{transitions}, and we write $s \pijl{\alpha} s'$ whenever $(s,\alpha,s')\in {\rightarrow}$.
  A \emph{computation} in the~\lts{} is an alternating sequence  $s_0 \pijl{\alpha_1} s_1 \, \cdots \, s_{n-1} \pijl{\alpha_n} s_n$   of states and actions where~$n \geqslant 0$ and $s_{i{-}1}   \pijl{\alpha_i} s_i$ for $i=1, \ldots, n$.
  We have occasion to write $s \pijlstertau s'$ if with respect to   the above   situation we have $s = s_0$, $\alpha_i = \tau$ for $i=1, \ldots, n$, and~$s' = s_n$.

\begin{defi}[Branching bisimilarity -- $\bis_b$]\label{def:dbs}
  Given an \lts{} $\LTS ={(S,\act,{\rightarrow})}$, a symmetric
  relation $B \subseteq S \times S$ is a \emph{branching bisimulation
    for $\LTS$} iff, for $s,t,s' \in S$ and $\alpha \in \act$,
  whenever $s \mopB t$ and $s \pijl{\alpha} s'$, it holds that
  \begin{enumerate} [label=(\roman*)]
  \item $s' \mopB t$ and~$\alpha = \tau$, or 
  \item $s \mopB \bar{t}$,
  $s' \mopB t'$ and $t \pijlstertau \bar{t}$,
  $\bar{t} \pijl{\alpha} t'$ for some $\bar{t},t' \in S$
  \end{enumerate}
  Two states $s, t \in S$ are called \emph{branching bisimilar}
  in~$\LTS$ if $s \mopB t$ for some branching bisimulation~$B$
  for~$\LTS$. Notation,~$s \,\bis_b^{\mkern1mu \LTS}\, t$. \closedefi
\end{defi}

\noindent
From now on, for readability, we omit the superscript 
${\LTS}$ in $\bis_b^{\mkern1mu \LTS}$, when this does not cause confusion.

Our  framework  for modelling space is based on the notion of a \emph{\v{C}ech closure space}~\cite{Cec66}, \cs{} for short, that provides a convenient common framework for the study of several different kinds of spatial models, including models of both discrete and continuous space~\cite{SmW07}. We briefly recall definitions and results on
\cs{s}, that are relevant for this paper --- most of which are borrowed from~\cite{Gal03} (see also~\cite{CLMdV22,CLMdV25,Ci+16}).

\begin{defi}[Closure Space -- \cs]\label{def:ClosureSpace}
  A \emph{closure space} is a pair $(X,\closure)$
  where $X$ is a 
  set (of \emph{points}) and \emph{$\closure : \pws(X) \to \pws(X)$} is the \emph{closure operator}, i.e., a function
     satisfying the
  following axioms:
  \begin{enumerate} [label=(\roman*)]
\item $\closure(\emptyset)=\emptyset$,
\item $A \subseteq \closure(A)$ for all  $A \subseteq X$, and
\item $\closure(A_1 \cup A_2) = \closure(A_1) \cup \closure(A_2)$ for
  all $A_1,A_2\subseteq X$.
  \end{enumerate} 
\closedefi
\end{defi}

\noindent
It is worth pointing out that \cs{s} are a generalisation of
topological spaces. In fact, the latter coincide with \cs{s} that
satisfy the \emph{idempotence} axiom, i.e.,
$\closure(\closure(A)) = \closure(A)$ for all $A \subseteq X$.

\begin{defi}[Quasi-discrete closure space -- \qdcs]
  \label{def:QDClosureSpace}
  A \emph{quasi-discrete closure space} is a \cs{} $(X,\closure)$ such
  that 
for each $A \subseteq X$ it holds that
    $\closure(A) = \bigcup_{x\in A}\closure(\SET{x})$.
    \closedefi
\end{defi}

\noindent
Thus, the closure operator in a \qdcs\ is determined by its
  value for the singleton. For brevity, we have occasion to write
  $\closure(x)$ instead of~$\closure(\{x\})$.

Given a relation $R \subseteq X \times X$, define the function
$\closure_{R}: \pws(X) \to \pws(X)$ as follows: for all~$A \subseteq X$,
we put $ \closure_R(A) = A \cup \ZET{x \in X}{\exists
  \mkern1mu a \in A : \,a \mathop{\mbox{$R$}} x}.  $ It is easy to see that, for any~$R$,
 $\closure_{R}$ satisfies all the axioms of
Definition~\ref{def:ClosureSpace} and so $(X, \closure_{R})$ is a~\cs. 
An example of the result of applying the closure operator~$\closure_R$ induced by a relation~$R$ to a set~$A$ is shown in Figure~\ref{fig:CloExample}.

\begin{figure}
\centering
\subfloat[][]
{
  \input{pict/PhiEV}
  \label{fig:Phi}
}\qquad \quad\quad\quad
\centering
\subfloat[][]
{
  \input{pict/NPhiEV}
  \label{fig:NPhi}
}\quad
\caption{Figure~\ref{fig:Phi}: A finite \qdcs{} $(X,\closure_R)$. The
  arrows represent the relation $R$ underlying~$\closure_R$. The
  points of the set $A \subseteq X$ are shown in white,
  remaining points are shown in black. Figure~\ref{fig:NPhi}:
 Points in~$\closure_R(A) \backslash A$
  are shown in grey.}
\label{fig:CloExample}
\end{figure}

The following theorem is a standard result in the theory of
\cs{s}~\cite{Gal03}.
\begin{thm}
  A \cs{} $(X, \closure)$ is quasi-discrete if and only if there is a
  relation $R \subseteq X \times X$ such that $\closure =
  \closure_{R}$.\qed
\end{thm}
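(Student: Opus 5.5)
The plan is to prove the two implications separately; both are direct from the definitions.

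For the ``if'' direction, assume $\closure = \closure_R$ for some $R \subseteq X \times X$. The paper already notes that $\closure_R$ satisfies the axioms of Definition~\ref{def:ClosureSpace}, so it remains only to check quasi-discreteness. Unfolding the definition for a set $A$ gives
\[
\closure_R(A) \;=\; A \cup \ZET{x \in X}{\exists\, a \in A : a \mathrel{R} x} \;=\; \bigcup_{a \in A} \bigl( \SET{a} \cup \ZET{x \in X}{a \mathrel{R} x} \bigr) \;=\; \bigcup_{a \in A} \closure_R(\SET{a}).
\]
This is exactly the condition of Definition~\ref{def:QDClosureSpace}.

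For the ``only if'' direction, assume $(X,\closure)$ is quasi-discrete. Define $R \subseteq X \times X$ by $a \mathrel{R} x$ iff $x \in \closure(\SET{a})$. For any $A \subseteq X$ we then compute
\[
\closure_R(A) \;=\; A \cup \bigcup_{a \in A} \closure(\SET{a}) \;=\; A \cup \closure(A) \;=\; \closure(A).
\]
The second equality is quasi-discreteness. The third uses extensivity, axiom~(ii), which gives $A \subseteq \closure(A)$.

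The empty set is covered by these same computations: both sides yield $\emptyset$, in agreement with axiom~(i). I do not expect any real obstacle. The only step that needs care is the role of extensivity in the last equality, which is what lets the extra ``$A \cup$'' in the definition of $\closure_R$ be absorbed into $\closure(A)$.
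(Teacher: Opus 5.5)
Your proof is correct in both directions. In the ``if'' direction, unfolding $\closure_R$ shows it is the union of $\closure_R(\SET{a})$ over $a \in A$. In the ``only if'' direction, the relation $a \mathrel{R} x \Leftrightarrow x \in \closure(\SET{a})$ works: quasi-discreteness gives $\bigcup_{a\in A}\closure(\SET{a}) = \closure(A)$, and extensivity absorbs the extra $A \cup {}$.

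The paper gives no proof of this theorem. It states it as a standard result and defers to Galton's work on closure spaces. Because the paper takes the union-of-singleton-closures condition as its definition of quasi-discreteness, the theorem is essentially definitional here, and your direct verification is what the citation stands in for. If quasi-discreteness were instead defined via minimal neighbourhoods, as in parts of the literature, you would additionally need its equivalence with the union condition.
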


\noindent
The above theorem implies that any graph coincides with a \qdcs{} where the relation underlying the closure operator is exactly the edge relation of the graph.
We prefer to treat graphs as
\qdcs{s} since in this way we can formulate key definitions at the
level of closure spaces leading to a uniform treatment for graphs
and other kinds of models for space (e.g. topological spaces)~\cite{SmW07}.
Furthermore, if $X$~is finite, any closure space~$(X,\closure)$ is quasi-discrete:
  For~$A \subseteq X$ it holds that
  \begin{displaymath}
    \closure(A) =
    \closure( \textstyle \bigcup \ZET{\{x\}}{x \in A} ) =
    \textstyle \bigcup \ZET{\closure(\{x\})}{x \in A}
  \end{displaymath}
  by axiom~(iii) of Definition~\ref{def:ClosureSpace} and finiteness of~$A$.
In the sequel, we consider only \emph{finite} \cs{s},
hence only \qdcs{s}.
and often refrain
from explicitly writing the subscript~$R$ in~$\closure_R$, when this
does not cause confusion. Finally, we say that $(X,\closure)$ is
a \emph{symmetric}  \qdcs\ if $\closure = \closure_R$ for a symmetric relation~$R$. 
In such a case, it holds that $y \in \closure(\{x\})
  \Leftrightarrow x \in \closure(\{y\})$ for $x,y \in X$.

In the context of the present paper, \emph{paths} over \cs{s} play an
important role. Following the tradition in topology, in the theory of~\cs{s}
paths are defined as continuous functions from an appropriate index space to the \cs{} at hand.
For finite \cs{s}, it is sufficient to consider 
finite paths.

\begin{defi}[Finite path]
A (finite) path in a  \cs{} $(X,\closure)$ is a 
function $\pi : [0,\ell] \to X$, for some $\ell \in \nats$,
such that $\pi(i{+}1) \in \closure(\SET{\pi(i)})$ for $i=0,\ldots,\ell{-}1$. 
We call~$\ell$ the \emph{length} of~$\pi$ and we denote it by~$\pthlen(\pi)$.
\closedefi
\end{defi}

\noindent
In the sequel, we will say that a path $\pi : [0,\ell] \to X$ in a finite \cs{} $(X,\calC)$ 
is a {\em path from} $x \in X$ if $\pi(0)=x$ and it is a {\em path to} $x$ if
$\pi(\ell)=x$. Furthermore, we will use the notation $(x_i)_{i=0}^{\ell}$ for
the path $\pi : [0,\ell] \to X$ such that $\pi(i)=x_i$ for all $i \in [0,\ell]$.

\begin{rem}\label{rem:OnPaths}
  It is worth pointing out that the notion of path in a \qdcs{} is
  similar to that of a path in a graph or of a computation in an \lts, but
  it is \emph{not} the same. In particular, due to axiom~(ii) of
  closure operator~$\closure$ and the requirement
  $\pi(i{+}1) \in \closure({\pi(i)})$, paths in \cs{s} allow
  \emph{stuttering};
  in other words, for \qdcs\ $(X,\closure)$, $x\in X$,
  and path~$\pi$, it may happen that $\pi(i)=\pi(i{+}1)=x$, for
  $i<\pi(\pthlen(\pi))$ \emph{even when} $(x,x)$~is \emph{not} an
  element of the relation $R\subseteq X \times X$
  underlying~$\closure$.
  This is different for a path $\ldots n_1n_2 \ldots$ in a
  graph~$(N,E)$, where in order for nodes $n_1$ and~$n_2$ in~$N$ to be
  adjacent, it is required that $(n_1,n_2)$~is an element of the edge
  relation~$E$.  A similar issue arises when comparing paths in
  \qdcs{s} with traces in \lts{s}. In fact, for \lts{}
  $(S,\act,{\rightarrow})$, two states $s_1$ and~$s_2$ can be adjacent
  in a computation $\cdots s_1\pijl{\alpha}s_2 \cdots$ \emph{only} if
  $(s_1,\alpha,s_2) \in {\rightarrow}$, and this holds also if~$s_2=s_1$.
\end{rem}

\noindent
We assume a set~$\ap$ of \emph{atomic proposition letters} is given and 
 introduce the notion of closure {\em model} (\cm{} for short).
\begin{defi}[Closure model -- \cm{}]
  \label{def:ClosureModel}
  A \emph{closure model} is a tuple
  $\model = (X,\closure, \peval)$, with $(X,\closure)$ a \cs, and
  $\peval: \ap \to \pws(X)$ the  
  \emph{valuation function},
  assigning to each $p\in \ap$ the set of points where $p$
  holds.
  \closedefi
\end{defi}

\noindent
All  definitions for \cs{s} also apply to \cm{s};
thus, a \emph{quasi-discrete closure model} (\qdcm{} for short) is a
\cm{} $\model=(X, \closure,\peval)$ where $(X,\closure)$~is a
\qdcs.  Similarly, a {\em symmetric} \qdcm{} is a \qdcm{} $(X,\closure,\peval)$ where $(X,\closure)$~is a symmetric \qdcs. For a closure model $\model=(X,\closure,\peval)$ we may
write $x \in \model$ when~$x \in X$. Similarly, we speak of paths in~$\model$ meaning paths in $(X,\closure)$. 
Finaly, with respect to~$\model$, for $x,y \in X$, we have
  $x \eqV y$ if ${x \in \peval(p)} \Leftrightarrow {y \in \peval(p)}$
  for all~$p \in \ap$, and $x \neqV y$ if not $x \eqV y$.

In the sequel, for a logic~$\calL$, a formula $\form \in \calL$, and a
model $\model=(X,\closure,\peval)$ we let
$\sem{\form}^{\model}_{\calL}$ denote the set
$\ZET{x\in X}{\model,x \models_{\calL} \form}$ of all the points in
$\model$ that satisfy $\form$, where $\models_{\calL}$ is the
satisfaction relation for $\calL$. For the sake of readability, we
refrain from writing the subscript~$\calL$ when this does not
cause confusion.

%% file: pict/PhiEV.tex
%% file PhiEV

\scalebox{0.90}{%
  \begin{tikzpicture}[>=stealth,
    every state/.style = {minimum size=2.0mm, inner sep=0pt},
  scale=0.5]
  \node [state, fill=black!80] (a2) at (0.2,3) {} ;
  \node [state, fill=black!80] (a1) at (-0.1,1.7) {} ;
  \node [state, fill=black!1] (b2) at (1,2.0) {} ;
  \node [state, fill=black!80] (b1) at (1,0.3) {} ;
  \node [state, fill=black!80] (c5) at (1.4,4) {} ;
  \node [state, fill=black!1] (c4) at (1.8,3.2) {} ;
  \node [state, fill=black!1] (c3) at (2,2.2) {} ;
  \node [state, fill=black!1] (c2) at (2.1,1) {} ;
  \node [state, fill=black!80] (c1) at (1.9,0) {} ;
  \node [state, fill=black!1] (d4) at (3.5,3.1) {} ;
  \node [state, fill=black!1] (d3) at (3.1,2.1) {} ;
  \node [state, fill=black!1] (d2) at (3.3,0.9) {} ;
  \node [state, fill=black!80] (d1) at (3.6,0.1) {} ;
  \node [state, fill=black!80] (e3) at (4.5,3) {} ;
  \node [state, fill=black!80] (e2) at (4.1,1.8) {} ;
  \node [state, fill=black!80] (e1) at (4.7,1.1) {} ;
  \draw (a1) edge (b2) ;
  \draw (b2) edge (a2) ;
  \draw (b2) edge (c3) ;
  \draw (c4) edge (d4) ;
  \draw (c3) edge (c4) ;
  \draw (c3) edge (d3) ;
  \draw (c3) edge (c2) ;
  \draw (c2) edge (b1) ;
  \draw (c1) edge (c2) ;
  \draw (c1) edge (d1) ;
  \draw (d4) edge (e3) ;
  \draw (d3) edge (d4) ;
  \draw (d3) edge (e2) ;
  \draw (d2) edge (c3) ;
  \draw (d2) edge (d1) ;
  \draw (e2) edge (e1) ;
  \draw (e1) edge (d2) ;

\end{tikzpicture}
} %% scalebox

%% file: pict/NPhiEV.tex
%% file NPhiEV

\scalebox{0.90}{%
  \begin{tikzpicture}[>=stealth,
    every state/.style = {minimum size=2.0mm, inner sep=0pt},
  scale=0.5]
  \node [state, fill=black!30] (a2) at (0.2,3) {} ;
  \node [state, fill=black!80] (a1) at (-0.1,1.7) {} ;
  \node [state, fill=black!1] (b2) at (1,2.0) {} ;
  \node [state, fill=black!30] (b1) at (1,0.3) {} ;
  \node [state, fill=black!80] (c5) at (1.4,4) {} ;
  \node [state, fill=black!1] (c4) at (1.8,3.2) {} ;
  \node [state, fill=black!1] (c3) at (2,2.2) {} ;
  \node [state, fill=black!1] (c2) at (2.1,1) {} ;
  \node [state, fill=black!80] (c1) at (1.9,0) {} ;
  \node [state, fill=black!1] (d4) at (3.5,3.1) {} ;
  \node [state, fill=black!1] (d3) at (3.1,2.1) {} ;
  \node [state, fill=black!1] (d2) at (3.3,0.9) {} ;
  \node [state, fill=black!30] (d1) at (3.6,0.1) {} ;
  \node [state, fill=black!30] (e3) at (4.5,3) {} ;
  \node [state, fill=black!30] (e2) at (4.1,1.8) {} ;
  \node [state, fill=black!80] (e1) at (4.7,1.1) {} ;
  \draw (a1) edge (b2) ;
  \draw (b2) edge (a2) ;
  \draw (b2) edge (c3) ;
  \draw (c4) edge (d4) ;
  \draw (c3) edge (c4) ;
  \draw (c3) edge (d3) ;
  \draw (c3) edge (c2) ;
  \draw (c2) edge (b1) ;
  \draw (c1) edge (c2) ;
  \draw (c1) edge (d1) ;
  \draw (d4) edge (e3) ;
  \draw (d3) edge (d4) ;
  \draw (d3) edge (e2) ;
  \draw (d2) edge (c3) ;
  \draw (d2) edge (d1) ;
  \draw (e2) edge (e1) ;
  \draw (e1) edge (d2) ;

\end{tikzpicture}
} %% scalebox

%% file: CoPabisimilarity.tex
\section{\cop-Bisimilarity for \qdcm{}}
\label{sec:COPAbisimilarity}

In~\cite{CLMdV22,CLMdV25} several notions of spatial bisimilarity for closure
models have been investigated. In particular, \cm-bisimilarity, and
its refinement for \qdcm{s} \cmc-bisimilarity, are a fundamental
starting point for the study of spatial bisimilarity because of their
strong links to topo-bisimilarity. However, \cm\
  and~\cmc\ are rather
fine-grained relations for reasoning about general properties of
space, since they are directly based on the closure
operator.\footnote{Or its dual operator called `interior'.} For
instance, with reference to the model of Figure~\ref{fig:FiveByFive},
where all black points satisfy only atomic proposition~$b$ while the
grey ones satisfy only~$g$, the point at the center of the model is
\emph{not} \cmc-bisimilar to any other black point. This is because
\cmc-bisimilarity is based on the fact that points reachable ``in one
step'' --- i.e. contained in the closure --- are taken into consideration.
This, in turn, gives bisimilarity a flavour of ``counting'', that
goes against the idea that, for instance in Figure~\ref{fig:FiveByFive}, all black points in the
model are to be considered spatially equivalent. In fact, they are all
black and all can reach black or grey points. Furthermore, they could
be considered equivalent to the black point of a smaller model
consisting of just one black and one grey point mutually
connected---that would, in fact, be a ``minimal''  --- in a sense that will be made clear later in the paper --- representation of the
closure model.

\begin{figure}
\def\samplesz{2.5cm}
\centering
\subfloat[][]
{
\includegraphics[height=2.3cm]{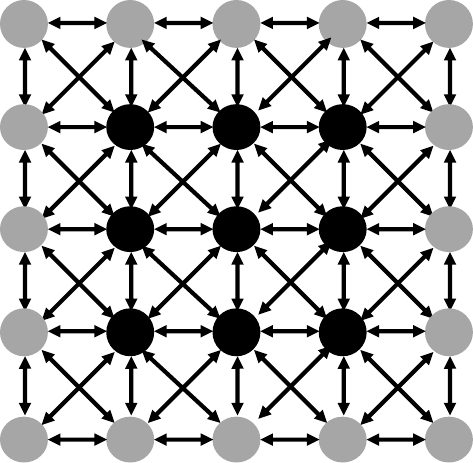}
\label{fig:FiveByFive}
}\quad \quad \quad
\centering
\subfloat[][]
{
\includegraphics[height=2.3cm]{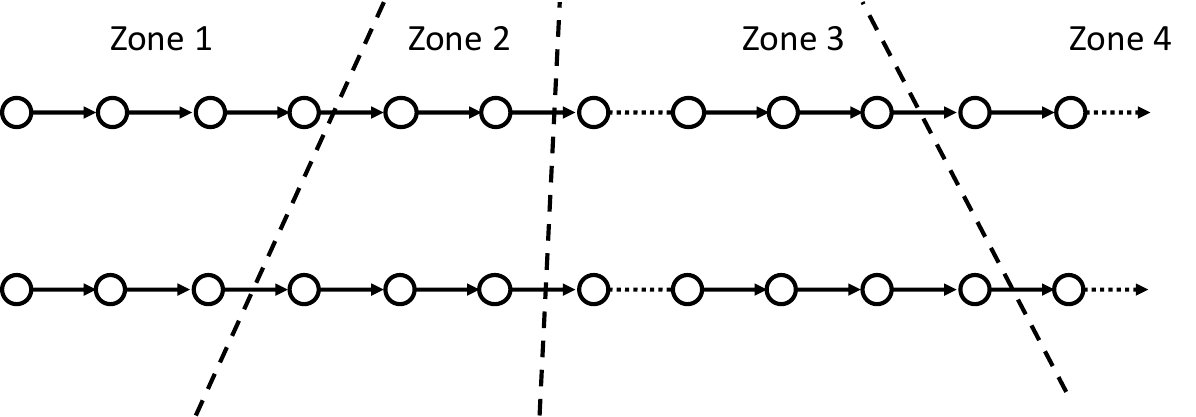}
\label{fig:Zones}
}

\caption{A model (a); zones in paths (b).}
\end{figure}

\noindent
In order to relax the ``counting'' capability of
\cmc-bisimilarity for~\qdcs{s} as
mentioned, a weaker notion of bisimilarity has been introduced
in~\cite{CLMdV22,CLMdV25} that is based on paths, instead of membership of closures, together with a notion of ``compatibility'' between relevant paths that essentially requires each of them be composed of a non-empty sequence of non-empty, adjacent ``zones''. 
More precisely, both paths under consideration in a transfer condition should share the same structure, as follows (see Figure~\ref{fig:Zones}):
\begin{itemize}
\item both paths are composed by a sequence of (non-empty) ``zones'';
\item the number of zones should be the same in both paths, \emph{but}
\item the length of sub-path in ``corresponding'' zones can be different, \emph{as well as}
 the length of each of the two paths;
\item \emph{each} point in a zone of a path should be related by the bisimulation to \emph{every} point in the corresponding zone of the other path.
\end{itemize}
This notion of compatibility gives rise to the notion of
\emph{Compatible Path bisimulation}, more briefly \cop-bisimulation, recalled below for \qdcm{s}.

\begin{defi} [path-compatibility]
  \label{def:path-compatability}
  Given \qdcm~$\model = (X, \closure, \peval)$ and relation
  $B \subseteq {X \mathop\times X}$, two paths
  $\pi = (x_i)_{i{=}0}^\ell$ and~$\varrho = (y_j)_{j{=}0}^m$ in~$X$
  are \emph{compatible with respect to~$B$} if $N > 0$ and two
  monotone surjections $f : [0,\ell] \to [1,N]$ and
  $g : [0,m] \to [1,N]$ exist such that $x_i \mopB y_j$ for all
  indices $0 \leqslant i \leqslant \ell$ and
  $0 \leqslant j \leqslant m$ with $f(i) = g(j)$.
  \closedefi
\end{defi}

\noindent
The functions $f$ and~$g$ are referred to as \emph{matching functions}
for $\pi$ and~$\varrho$.
The choice for $N$, $f$, and~$g$ is not unique.
The minimal number $N > 0$ for which matching functions exist is
defined to be the number of zones for the paths $\pi$ and~$\varrho$.

\begin{defi} [\cop-bisimilarity]
  \label{def:CoPabisimilarity}
  A symmetric relation $B \subseteq X \times X$ is a
  \emph{\cop-bisimulation relation} for \qdcs{}
  $\model = (X, \closure, \peval)$ if, whenever $x,y \in X$ satisfy
  $x \mopB y$, the following holds:
  \begin{enumerate}
  \item $x \eqV y$;
  \item \label{fwd-transfer} for every path~$\pi$ from~$x$ exists a
    $B$-compatible path~$\varrho$ from~$y \mkern1mu$;
  \item \label{bwd-transfer} for every path~$\pi$ to~$x$ exists a
    $B$-compatible path~$\varrho$ to~$y$.
  \end{enumerate}
  Two points $x,y \in X$ are called \emph{\cop-bisimilar} in $\model$
  if $x \mopB y$ for some \cop-bisimulation relation~$B$ for
  $\model$. Notation, $x \copbis y$.
  \closedefi
\end{defi}

\noindent
For a model~$\model = (X, \closure, \peval)$, it
  is immediate from the definition that \cop-bisimilarity
${\copbis} \subseteq {X \times X}$ in~$\model$, being the union of all
\cop-bisimulation relations, is a \cop-bisimulation relation
itself, actually the largest bisimulation relation. Moreover,
as a consequence of Theorem~\ref{thm:CoPabisEqIcrleq} below, \cop-bisimilarity $\copbis$ is an equivalence
relation.

\begin{lem}
  \label{lem-single-step-bisimilarity-suffices}
  Let $\model = (X, \closure, \peval)$ be a \qdcm\ and let
  $B \subseteq {X \times X}$ be a symmetric relation such that for all
  $x, y \in X$ satisfying $x \mopB y$ it holds that
  \begin{enumerate}
  \item $x \eqV y$;
  \item 
   if $x' \in \closure(x)$ for some $x' \in X$, then a path $(y_j)_{j{=}0}^m$
   from~$y$ exists for some $m \geq 0$ such that $y_j \, B \,x$ for all $j \in [0,m)$ and $y_m \, B \, x'$;
  \item %\label{bwd-transfer}
   if $x \in \closure(x')$ for some $x' \in X$, then a path $(y_j)_{j{=}0}^m$
   to~$y$ exists for some $m \geq 0$ such that $y_j \, B \,x$ for all $j \in (0,m]$ and $y_0 \, B \, x'$;
  \end{enumerate}
  Then $B$~is a \cop-bisimulation for~$\model$. 
\end{lem}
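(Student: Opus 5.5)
We have to show that $B$ satisfies the three conditions of Definition~\ref{def:CoPabisimilarity}. Condition~1 is given directly. Symmetry of $B$ is assumed. So the work is in the forward and backward transfer conditions; the backward case is symmetric to the forward one, using hypothesis~3 in place of hypothesis~2 and reversing paths. We therefore concentrate on the forward case.

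Fix $x \mopB y$ and a path $\pi = (x_i)_{i=0}^{\ell}$ from~$x$. We proceed by induction on~$\ell$, and the claim we prove is slightly stronger than needed:
\begin{quote}
for all $x,y$ with $x \mopB y$ and every path $\pi$ from $x$ of length $\ell$, there is a path $\varrho$ from $y$ together with matching functions $f,g$ witnessing $B$-compatibility, such that $f(\ell)=g(m)=N$ is the last zone.
\end{quote}
The last-zone condition is automatic for monotone surjections, but keeping the zones explicit helps.

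\textbf{Base case.} For $\ell=0$, take $\varrho = (y)$, $N=1$ and $f(0)=g(0)=1$; compatibility holds since $x \mopB y$.

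\textbf{Inductive step.} Let $\pi$ have length $\ell+1$ and let $x_{\ell+1} \in \closure(x_\ell)$. By the induction hypothesis applied to the prefix $(x_i)_{i=0}^{\ell}$, we obtain a compatible path $\varrho=(y_j)_{j=0}^m$ from~$y$ with matching functions $f,g$ into $[1,N]$. In particular $x_\ell \mopB y_m$, since $f(\ell)=g(m)=N$, and by symmetry $y_m \mopB x_\ell$.

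Now apply hypothesis~2 to the pair $x_\ell \mopB y_m$ and the point $x_{\ell+1}\in\closure(x_\ell)$. This yields a path $(z_k)_{k=0}^{p}$ from $y_m$ with $z_k \mopB x_\ell$ for $k<p$ and $z_p \mopB x_{\ell+1}$. (If $B$ is read as ``$z_k \mopB x_\ell$'', symmetry converts it to $x_\ell \mopB z_k$.) Concatenate to get $\varrho' = y_0\cdots y_m z_1 \cdots z_p$; this is a path because $z_0=y_m$. Extend the matching functions as follows.
\begin{itemize}
\item[(a)] If $p=0$, then $z_0=y_m \mopB x_{\ell+1}$, and we keep $\varrho'=\varrho$ but need $x_{\ell+1}$ in a new zone. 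We cannot just add a zone $N+1$ containing $x_{\ell+1}$ on one side and nothing on the other. Instead, duplicate $y_m$ by stuttering: $y_m$ followed by $y_m$ is a valid path since $y_m\in\closure(y_m)$. Set $f(\ell+1)=N+1$ and $g(m+1)=N+1$; this works because $x_{\ell+1}\mopB y_m$.
\item[(b)] If $p>0$, put $f(\ell+1)=N+2$, and assign zone $N+1$ to the single point $x_\ell$ together with $z_1,\dots,z_{p-1}$, and zone $N+2$ to $z_p$.
\end{itemize}

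The main obstacle is step~(b). Zone $N+1$ on the $\pi$-side must be nonempty, and $x_\ell$ already sits in zone $N$, while zone $N$ must also keep being matched with $y$'s. The cleanest fix is to reuse stuttering again: insert a repeated copy of $x_\ell$ into $\pi$. That is not allowed, because $\pi$ is given. So instead, merge $z_1,\dots,z_{p-1}$ into zone $N$. But that requires them to be $B$-related to \emph{every} point of zone $N$ of $\pi$, not only to $x_\ell$.

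To repair this, strengthen the induction invariant so that the last zone of $\pi$ is the singleton $\{x_\ell\}$. This can be arranged: when appending, the new last $\pi$-zone is $\{x_{\ell+1}\}$. Then $z_1,\dots,z_{p-1}$ join zone $N$, whose only $\pi$-point is $x_\ell$, and $z_p$ forms zone $N+1$ with $x_{\ell+1}$. Case~(a) then uses the stutter $y_m y_m$. Under this invariant every required relation is provided by hypothesis~2 and symmetry, which completes the induction.

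The backward condition follows by the mirrored induction, building from the end of the path toward the start, with the first zone kept a singleton.
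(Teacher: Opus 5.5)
Your proof is correct, but it decomposes the path the other way round from the paper, which splits off the \emph{first} step. There, hypothesis~2 at the given pair $x \mathrel{B} y$ yields $y_0,\dots,y_m$ with $y_j \mathrel{B} x_0$ for $j<m$ and $y_m \mathrel{B} x_1$. The induction hypothesis is then applied to the tail $(x_i)_{i=1}^{\ell}$ at the new pair $(x_1,y_m)$. Because $x_0$ together with $y_0,\dots,y_{m-1}$ forms a fresh first zone and the tail's zones are simply shifted, plain compatibility suffices as the inductive statement.

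You instead apply the induction hypothesis to the \emph{prefix} at the same pair, and use hypothesis~2 only at the far end, at $(x_\ell,y_m)$. The new points $z_1,\dots,z_{p-1}$ are known to be related only to $x_\ell$, so they must join $x_\ell$'s zone. That is exactly why you need the strengthened invariant that the last $\pi$-zone is $\{x_\ell\}$. With it, and with the stutter $y_m\,y_m$ for $p=0$, every required relation is supplied by hypothesis~2 and symmetry. Monotonicity and surjectivity of the extended matching functions are then immediate.

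Your version makes the zone bookkeeping fully explicit, including the degenerate case. The paper's ``straightforwardly verified'' concatenation glosses over its own degenerate case $m=0$: then $\varrho=\varrho_1$ and $x_0$ has no $y$'s of its own, so one must stutter $y_0$ or split $y_0$ off the first zone of $\varrho_1$. In exchange, the paper's route avoids strengthening the induction statement.

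Your backward sketch is the correct mirror image: induct on the suffix, prepend the segment from hypothesis~3, and keep the first zone a singleton. In a write-up, make three changes:
\begin{itemize}
\item say that the construction is mirrored, not that paths are reversed, since paths cannot be reversed in a non-symmetric model;
\item state the singleton invariant from the start;
\item drop the abandoned first version of case~(b).
\end{itemize}
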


\begin{proof}
  We verify conditions (\ref{fwd-transfer}) and~(\ref{bwd-transfer})
  of Definition~\ref{def:CoPabisimilarity}.

  For the forward transfer condition (\ref{fwd-transfer}), assume
  $x \mopB y$ and let $\pi = ( x_i)_{i{=}0}^\ell$ be a path
  from~$x$. We show by induction on~$\ell$ that a $B$-compatible
  path~$\varrho$ from~$y$ exists. Basis, $\ell = 0$: The single point
  path $\varrho = (y)$ from~$y$ is $B$-compatible
  with~$\pi$. Induction step, $\ell > 0$: Consider the path
  $\pi_0 = (x_0,x_1)$ of length~$1$ from~$x_0$ and the path
  $\pi_1 = ( x_i )_{i{=}1}^\ell$ of length~$\ell{-}1$
  from~$x_1$. Because $x_0 \mopB y$ and $x_1 \in
    \closure(x)$,
  by property~(2) of~$B$, a path
  $\varrho_0 = ( y_j)_{j{=}0}^m$ from~$y$ for some~$m \geqslant 0$
  exists with
  $y_j \mopB x_0$ for $j \in [0,m)$ and $y_m \mopB x_1$, i.e.,
    the path~$\varrho_0$
  is $B$-compatible with~$\pi_0$. As
  $x_1 \mopB y_m$, by induction hypothesis, a path
  $\varrho_1 = (y_j)_{j{=}m}^{m{+}k}$ from~$y_m$ for
  some~$k \geqslant 0$ exists that is $B$-compatible
  with~$\pi_1$. Then the path $\varrho = ( y_j )_{j{=}0}^{m{+}k}$
  from~$y$ is $B$-compatible with~$\pi = ( x_i)_{i{=}0}^\ell$ as can
  be straightforwardly verified.
  
  For the backward transfer condition~(\ref{bwd-transfer}), $x \mopB y$ and let
  $\pi=( x_i)_{i{=}0^\ell}$ be a path to~$x$. We show by induction on~$\ell$ that a $B$-compatible path~$\varrho$
  to~$y$ exists.
  Basis, $\ell = 0$: The single point path $\varrho = (y)$ to~$y$ is $B$-compatible with~$\pi$.
  Induction step, $\ell > 0$: Consider the path $\pi_0 = (x_{\ell{-}1},x_{\ell})$ of length~$1$ to~$x$ 
  and the path $\pi_1 = ( x_i )_{i{=}0}^{\ell{-}1}$ of length~$\ell{-}1$ to~$x_{\ell{-}1}$.
  Because by hypothesis $x_{\ell} \mopB y$ and $x_{\ell} \in \closure(x_{\ell{-}1})$, since 
  since $\pi$ is a path, we have, by property~(3) of~$B$, that a path  $\varrho_0 = ( y_j)_{j{=}0}^m$ to~$y$ exists for
  some $m\geq 0$ such that $y_0 \mopB x_{\ell{-}1}$ and  $y_j \mopB x_{\ell}$ for all $j \in (0,m]$.
  Thus, path~$\varrho_0$ is $B$-compatible with~$\pi_0$. 
  As $x_{\ell{-}1} \mopB y_o$, by induction hypothesis, a
  path $\varrho_1 = (y'_j)_{j{=}0}^{k}$ to~$y_0$ exists, for
  some~$k \geqslant 0$, that is $B$-compatible with~$\pi_1$.
  Then, the path $\varrho$
  $$
  \varrho(j) = 
  \left\{
  \begin{array}{l}
  y'_j, \mbox{ if } j \in [0,k], \\\\
  y_{j-k}, \mbox{ if } j \in  (k,k+m].
  \end{array}
  \right.
  $$
 to $y$ is $B$-compatible with~$\pi$ as can be straightforwardly verified.
%  For the backward transfer condition~(\ref{bwd-transfer}), assume
%  $x \mopB y$ and let $\pi = ( x_i)_{i{=}\ell}^0$ be a path to~$x$. We
%  show by induction on~$\ell$ that a $B$-compatible path~$\varrho$
%  to~$y$ exists. Basis, $\ell = 0$: The single point path
%  $\varrho = (y)$ to~$y$ is $B$-compatible with~$\pi$. Induction step,
%  $\ell > 0$: Consider the path $\pi_0 = (x_1,x_0)$ of length~$1$
%  to~$x_0$ and the path $\pi_1 = ( x_i )_{i{=}\ell}^1$ of
%  length~$\ell{-}1$ to~$x_1$. Because $x_0 \mopB y$ \evadded{and $x_1
%    \in \closure(x_0)$}, by property~(3)
%  of~$B$, a path $\varrho_0 = ( y_j)_{j{=}m}^0$ to~$y$ for
%  some~$m \geqslant 0$ exists \evadded{such} that \evremoved{is $B$-compatible with~$\pi_0$. We can assume} $x_1 \mopB y_m$ and $x_0 \mopB y_j$ for
%  $m > j \geqslant 0$.
%  \evadded{Thus, path~$\varrho_0$ is $B$-compatible with~$\pi_0$.}
%%
%  As $x_1 \mopB y_m$, by induction hypothesis, a
%  path $\varrho_1 = (y_j)_{j{=}m{+}k}^{m}$ to~$y_m$ for
%  some~$k \geqslant 0$ exists that is $B$-compatible
%  with~$\pi_1$. Then the path $\varrho = ( y_j )_{j{=}m{+}k}^{0}$
%  to~$y$ is $B$-compatible with~$\pi = ( x_i)_{i{=}\ell}^0$ as can be
%  straightforwardly verified.
\end{proof}

\noindent
The logic~\icrl\ introduced in~\cite{CLMdV22,CLMdV25} provides a
logical characterisation of \cop-bisimilarity.
Besides atomic propositions, negations, and conjunctions, the logic
features two modalities, the forward modality~$\lstothru$ and the
backward modality~$\lsfromthru$.
Formulas~$\form$ are given by
\begin{displaymath}
  \form ::= p \mid \lneg \mkern2mu \form \mid 
  \textstyle{\liand_{i {\in} I}} \: \form_i \mid 
  \lstothru \mkern2mu \form_1 [\form_2] \mid
  \lsfromthru \mkern2mu \form_1 [\form_2]
\end{displaymath}
Here, $p$~ranges over~$\ap$, and $I$~ranges over a collection of finite
and countably infinite index sets.
The satisfaction relation of~\icrl\ in a point~$x$ of a \qdcm~$\model$
is
\begin{displaymath}
  \def\arraystretch{1.2}
  \begin{array}{r@{\,}c@{\,}l c l}
  \model,x & \models_{\icrl} & p & \Leftrightarrow &
  x  \in \peval(p)
\\
  \model,x & \models_{\icrl} & \lneg \mkern2mu \form & \Leftrightarrow &
  \text{$\model,x  \models_{\icrl} \form$ does not hold}
\\
  \model,x & \models_{\icrl} & \liand_{i {\in} I} \form_i & \Leftrightarrow &
  \text{$\model,x  \models_{\irl} \form_i$ for all $i \in I$}
\\
  \model,x & \models_{\icrl} & \lstothru \mkern2mu \form_1 [\form_2] & 
  \Leftrightarrow &
  \text{a path~$\pi$ and an index~$\ell$ exist such that $\pi(0) = x$},              \\ &&&&
  \text{$\model, \pi(\ell) \models_{\icrl} \form_1$, and 
  $\model, \pi(j) \models_{\icrl} \form_2$ for $0 \leqslant j < \ell$}
\\
  \model,x & \models_{\icrl} & \lsfromthru \mkern2mu \form_1 [\form_2] & 
  \Leftrightarrow &
  \text{a path~$\pi$ and an index~$\ell$ exist such that $\pi(\ell) =  x$},
\\ &&&&
  \text{$\model,\pi(0) \models_{\icrl} \form_1$, and
  $\model, \pi(j) \models_{\icrl} \form_2$ for $0 < j  \leqslant \ell$}
  \end{array}
  \def\arraystretch{1.0}
\end{displaymath}

\noindent
Logical equivalence~ $\icrleq$ for~\icrl{} is defined as expected:
With respect to a \qdcm~$\model$, we have that points $x$ and~$y$ are
logically equivalent for~\icrl, notation $x \icrleq y$, iff
${\model,x \models_{\icrl} \form} \Leftrightarrow {\model,y
  \models_{\icrl} \form}$.  The following result, proven
in~\cite{CLMdV25}, establishes the relationship between
\cop-bisimilarity and \icrl-equivalence.
\begin{thm}
  \label{thm:CoPabisEqIcrleq}
  For every \qdcm~$\model$ it holds that \icrl-equivalence~$\icrleq$
  coincides with \cop-bisimilarity~$\copbis$.
  \qed
\end{thm}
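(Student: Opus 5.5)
We have to prove Theorem~\ref{thm:CoPabisEqIcrleq}: on every \qdcm~$\model$, \icrl-equivalence~$\icrleq$ coincides with \cop-bisimilarity~$\copbis$. The plan is to prove the two inclusions separately. Soundness (${\copbis} \subseteq {\icrleq}$) goes by structural induction on formulas. Completeness (${\icrleq} \subseteq {\copbis}$) goes by showing that $\icrleq$ is itself a \cop-bisimulation, using the single-step criterion of Lemma~\ref{lem-single-step-bisimilarity-suffices}.

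\textbf{Soundness.} Fix a \cop-bisimulation~$B$ with $x \mopB y$. We prove by induction on~$\form$ that $\model,x \models \form$ implies $\model,y \models \form$; symmetry of~$B$ then gives the converse.
\begin{itemize}
\item Atomic propositions follow from condition~(1) of Definition~\ref{def:CoPabisimilarity}.
\item Negation and (infinitary) conjunction are immediate from the induction hypothesis, since it is formulated for all $B$-related pairs and $B$ is symmetric.
\item For $\lstothru \form_1[\form_2]$, take a witnessing path $\pi=(x_i)_{i=0}^{\ell}$ from~$x$, with $x_\ell \models \form_1$ and $x_i \models \form_2$ for $i<\ell$. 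Truncating is harmless, so we may assume $\pi$ ends at the witness index. Condition~(\ref{fwd-transfer}) gives a compatible path $\varrho=(y_j)_{j=0}^m$ from~$y$ with matching functions $f,g$ onto $[1,N]$. Monotonicity and surjectivity give $f(\ell)=g(m)=N$, so $x_\ell \mopB y_m$ and hence $y_m\models\form_1$.
\item Every $j<m$ has $g(j) \leqslant N$. If $g(j)<N$, then $y_j$ is related to some $x_i$ with $f(i)=g(j)<N$, hence $i<\ell$, so $y_j\models\form_2$. If $g(j)=N$ with $j<m$, then $y_j \mopB x_\ell$ and only $y_j\models\form_1$ follows. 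In that case we cut $\varrho$ at the first index~$j$ with $g(j)=N$, which yields a valid witness.
\item The backward modality is symmetric, using condition~(\ref{bwd-transfer}) and $f(0)=g(0)=1$.
\end{itemize}

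\textbf{Completeness.} Let $B={\icrleq}$, which is symmetric, and verify the hypotheses of Lemma~\ref{lem-single-step-bisimilarity-suffices}. Condition~(1) is immediate from atomic formulas.

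For condition~(2), suppose $x \icrleq y$ and $x'\in\closure(x)$. Since $\model$ is finite, for each point~$z$ the equivalence class $[z]$ of~$\icrleq$ is definable by a formula~$\chi_{[z]}$. This is a finite conjunction, over the finitely many other classes, of formulas separating $z$ from a representative of each; infinitary conjunction also suffices in general. Then $x \models \lstothru \chi_{[x']}[\chi_{[x]}]$, witnessed by the path $(x,x')$. Hence $y$ satisfies the same formula, giving a path $(y_j)_{j=0}^m$ from~$y$ with $y_j\models\chi_{[x]}$ for $j<m$ and $y_m\models\chi_{[x']}$. That is, $y_j \icrleq x$ for $j<m$ and $y_m\icrleq x'$, which is exactly condition~(2).

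Condition~(3) is dual, using $\lsfromthru \chi_{[x']}[\chi_{[x]}]$ with witness path $(x',x)$. By Lemma~\ref{lem-single-step-bisimilarity-suffices}, $\icrleq$ is a \cop-bisimulation, so ${\icrleq}\subseteq{\copbis}$.

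\textbf{Main obstacle.} The delicate steps are, first, the zone bookkeeping in the soundness case: points of~$\varrho$ in the last zone before index~$m$ need not satisfy~$\form_2$, which is handled by cutting at the first index in the last zone. Second, if the statement is meant for infinite \qdcm{s}, the characteristic formulas $\chi_{[z]}$ require countable infinitary conjunctions. That in turn requires a countability argument, or restriction to the finite models considered in this paper; I would first prove the finite case as above.
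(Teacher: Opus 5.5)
The paper gives no proof of this theorem; it is imported from \cite{CLMdV25}, where it is stated for arbitrary \qdcm{s}, which is why \icrl\ carries infinitary conjunctions. Your argument is a correct, self-contained proof for \emph{finite} \qdcm{s}, which are the only ones the paper works with.

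\textbf{Soundness.} The structural induction is sound, and your zone bookkeeping is right. From $f(\ell)=g(m)=N$, $\form_1$ transfers to the end of~$\varrho$. Every index of~$\varrho$ in a zone below~$N$ is matched by an index of~$\pi$ below~$\ell$. Taking the first index of~$\varrho$ in zone~$N$ as the witness index disposes of the last zone. The case of $\lsfromthru$ is dual, using the last index in zone~$1$.

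\textbf{Completeness.} Routing this half through Lemma~\ref{lem-single-step-bisimilarity-suffices} is what keeps it short. You only check single closure steps, so one formula $\lstothru \chi_{[x']}[\chi_{[x]}]$ (resp.\ $\lsfromthru \chi_{[x']}[\chi_{[x]}]$) with characteristic subformulas suffices. You never have to handle multi-zone paths directly.

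\textbf{A bonus of your route.} Run the same argument with equivalence for the \emph{finitary} fragment in place of $\icrleq$. Then the characteristic formulas are finite conjunctions of finitary formulas, and you obtain that on finite models finitary \icrl\ already characterises $\copbis$. The paper tacitly relies on exactly this fact when it confines itself to the finitary fragment right after the theorem.

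\textbf{The limitation.} You flag it yourself, but your parenthetical ``infinitary conjunction also suffices in general'' overstates what you have. The paper's \icrl\ only allows countable index sets, and an infinite \qdcm\ may have uncountably many $\icrleq$-classes. In that case $\chi_{[z]}$, as you define it, is not a formula of the logic. Your proof therefore does not establish the statement for every \qdcm\ as literally written. That would need conjunctions over arbitrary index sets, or a different argument. For the use the paper makes of the theorem, on finite models only, this does not matter.
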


\noindent
In the remainder of the paper, since we are concerned with finite
models only, we confine to the finitary fragment of \icrl, i.e.\ the
part where $I$~is a finite index set.

\bigskip

\noindent
In this work, given a \qdcm\ $\model=(X , \closure, \peval)$, we aim at running the spatial model checking algorithm of~\cite{Ci+16} on the quotient of~$\model$ with respect to~$\copbis$. 
The remainder of this paper is devoted to explain how to compute this quotient. It is a natural question at this point, whether the minimal model exists in the class of \qdcm{s}. 
In other words, one needs to show that the set of equivalence classes of~$\copbis$ \emph{can} be endowed with a quasi-discrete closure operator, in such a way that logical truth is preserved and reflected. 
We do so in Theorem~\ref{prop:logical-minimization} below.
With respect to~$\model$, for a point~$x \in X$, the notation~$[x]$ is
used to the denote the equivalence class of~$x$ modulo~$\copbis$

\begin{thm}
\label{prop:logical-minimization}
  Given \qdcs~$\model = (X, \closure, \peval)$, let
  $\wtM = (\wtX, \wtC, \wtV)$ where $\wtX = X / {\copbis}$,
  $\wtC([x]) = \ZET{[y']}{\exists \mkern1mu y \in [x] \colon y' \in
    \closure(y)}$ for~$x \in X$, and 
  $\wtV(p)= \ZET{[x]}{x \in \peval(p)}$ for~$p \in \ap$.  Then
  $\wtM$~is the smallest \qdcs\ such that
  ${\model,x \models \form} \Leftrightarrow {\wtM,[x] \models \form}$
  for $x \in X$, $\form \in \icrl$.
\end{thm}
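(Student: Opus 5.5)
The plan has three parts. First, check that $\wtM$ is a well-defined \qdcm. Second, prove the truth-preservation property by structural induction on \icrl\ formulas. Third, argue minimality.

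\emph{Well-definedness.} $\wtC([x])$ is defined through all representatives $y \in [x]$, so it depends only on the class $[x]$. Extending $\wtC$ to sets by unions gives a quasi-discrete closure operator. It is induced by the relation $[y] \mathrel{\widetilde R} [y']$ iff $y' \in \closure(y)$ for some representatives. The operator is extensive since $y \in \closure(y)$. For $\wtV$, we need $[x] \in \wtV(p)$ iff $x \in \peval(p)$. This holds because $\copbis$-related points satisfy $x \eqV y$ by clause~1 of Definition~\ref{def:CoPabisimilarity}.

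\emph{Truth preservation.} I would show by induction on $\form$ that $\model,x \models \form$ iff $\wtM,[x] \models \form$.
\begin{itemize}
\item Atomic propositions, negation and conjunction are immediate from the definition of $\wtV$.
\item Forward modality $\lstothru \form_1[\form_2]$, direction $\Rightarrow$. Project a witnessing path $\pi$ of $\model$ pointwise to $[\pi(i)]$. This is a path in $\wtM$, since $\pi(i{+}1) \in \closure(\pi(i))$ gives $[\pi(i{+}1)] \in \wtC([\pi(i)])$. The induction hypothesis transfers $\form_1$ and $\form_2$ along it.
\item Forward modality, direction $\Leftarrow$, which I expect to be the main obstacle. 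A path $[x_0],\dots,[x_\ell]$ in $\wtM$ with $[x_0]=[x]$ only provides, for each step, some $y_i \copbis x_i$ and $y_i' \copbis x_{i+1}$ with $y_i' \in \closure(y_i)$. These fragments do not chain into a single path of $\model$. I would repair this with the forward transfer of $\copbis$, inductively from the end of the path. The claim is: for every $z \copbis x_i$ there is a path in $\model$ from $z$ that ends in a point satisfying $\form_1$ and whose earlier points satisfy $\form_2$.
\item The key single step is as follows. From $z \copbis y_i$, transfer the two-point path $(y_i,y_i')$ to a compatible path from $z$. By compatibility, every point of this path is bisimilar to $y_i$ or to $y_i'$, and its last point is bisimilar to $y_i'$, hence to $x_{i+1}$. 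The zones bisimilar to $x_i$ satisfy $\form_2$ by Theorem~\ref{thm:CoPabisEqIcrleq}, because $x_i$ does. Concatenate with the path provided by the inductive claim at $x_{i+1}$.
\item A subtlety arises because compatibility uses monotone surjections: the last point may lie in the first zone if $N=1$. In that case it is bisimilar to both $y_i$ and $y_i'$, which is still fine.
\item The backward modality $\lsfromthru$ is symmetric. It uses backward transfer, chaining from the start of the path.
\end{itemize}

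\emph{Minimality.} Here ``smallest'' means that $\wtM$ has the fewest points among \qdcm{s} into which $\model$ maps with truth preserved and reflected. Take any such $\model'$ with a map $h$ satisfying $\model,x\models\form \Leftrightarrow \model',h(x)\models\form$. If $h(x)=h(y)$, then $x$ and $y$ satisfy the same formulas, so $x \icrleq y$. By Theorem~\ref{thm:CoPabisEqIcrleq} this gives $x \copbis y$. Hence the number of classes $|X/{\copbis}|$ bounds the size of the image, and no smaller model exists.
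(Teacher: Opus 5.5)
Your proposal is correct and follows essentially the paper's proof. Both use a structural induction with an inner induction on the length of the path in the quotient model, transfer the two-point path $(y_i,y_i')$ along $\copbis$, and concatenate; the one cosmetic difference is that you quantify the inner claim over all $z \copbis x_i$, whereas the paper moves $\lstothru\form_1[\form_2]$ from $\bar{x}_1$ to $y_m$ via Theorem~\ref{thm:CoPabisEqIcrleq}. Your minimality argument is the same counting argument as the paper's.

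One detail you should make explicit: cut the transferred path at its first point bisimilar to $x_{i+1}$, which is the paper's form $y_j \copbis \bar{x}'_0$ for $j<m$ and $y_m \copbis \bar{x}'_1$. This is needed because in the final step $i+1=\ell$ the remaining second-zone points are bisimilar to $x_\ell$, which satisfies $\form_1$ but need not satisfy $\form_2$.
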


\begin{proof}
  Clearly, $\wtM$~is a \qdcm. We verify by induction on the structure
  of a \icrl-formula~$\Phi$ that
  ${\model,x \models \form} \Leftrightarrow {\wtM,[x] \models \form}$.

  \begin{enumerate} [align=left]
  \item [Case~$p$:] For $p \in \ap$, we have $\model,x \models p$ iff
    $x \in \peval(p)$ iff $[x] \in \wtV(p)$ iff $\wtM,[x] \models p$.

  \item [Cases~$\neg \form$ and $\form \land \Psi$:] These cases follow directly
    from the induction hypothesis.

  \item [Case~{$\lstothru \mkern2mu \form [\Psi]$}:] Suppose
    $\model,x \models \lstothru \mkern2mu \form [\Psi]$. Let the path
    $\pi = (x_i)_{i{=}0}^\ell$ be such that $x_0 = x$,
    $\model,x_\ell \models \form$, and $\model,x_i \models \Psi$ for
    $0 \leqslant i < \ell$. Let $\wtpi = ([x_i])_{i{=}0}^\ell$. Then
    $[x_{i{+}1}] \in \wtC([x_i])$ since $x_{i{+}1} \in \closure(x_i)$
    for $0 \leqslant i < \ell$. So, $\wtpi$~is a path
    in~$\wtX$. Moverover, $[x_0] = [x]$,
    $\wtM,[x_\ell] \models \form$, and $\wtM,[x_i] \models \Psi$ for
    $0 \leqslant i < \ell$ by induction hypothesis for $\form$
    and~$\Psi$, respectively. 

    Reversely, suppose $\wtM,[x] \models \lstothru \form [\Psi]$. Let
    the path $\wtpi = ( [\xbar_i] )_{i{=}0}^\ell$, where $\xbar_i \in
    X$ for $0 \leqslant i \leqslant \ell$, be such that
    $[\xbar_0] = [x]$, $\wtM,[\xbar_\ell] \models \form$, and
    $\wtM,[\xbar_i] \models \Psi$ for $0 \leqslant i < \ell$. We
    proceed by induction on~$\ell$ to show that, for
    some~$m, k \geqslant 0$, a path $\pi = ( y_j )_{j{=}0}^{m{+}k}$
    exists with $y_0 = x$, $\model,y_{m{+}k} \models \form$, and
    $\model,y_j \models \Psi$ for $0 \leqslant j <
    m{+}k$. Basis,~$\ell = 0$:
    Apparently, $\wtM,[x] \models \Phi$. Hence, $\model,x
      \models \Phi$ by induction hypothesis for~$\Phi$. So,
    the single point path~$(x)$ fulfills the requirements.

    Induction step,~$\ell > 0$: We have
    $[\xbar_1] \in \wtC([\xbar_0])$. So, $\xbarp_0, \xbarp_1 \in X$
    exist with $\xbarp_0 \copbis \xbar_0 \copbis x$,
    $\xbarp_1 \in \closure(\xbarp_0)$, and $\xbarp_1 \copbis
    \xbar_1$.
  In view of the path $(\xbarp_0,\xbarp_1)$ and $\xbarp_0
      \copbis \xbar_0$,   let the path $\pi_0 = (y_j)_{j{=}0}^m$ be such that
    $y_0 = x$, $y_j \copbis \xbarp_0$ for $0 \leqslant j < m$, and
    $y_m \copbis \xbarp_1$. Hence $y_j \copbis \xbar_0$ for
    $0 \leqslant j < m$. Because $\wtM,[\xbar_0] \models \Psi$, it follows that
      $\model,\xbar_0 \models \Psi$ by induction hypothesis
      for~$\Psi$. So $\model, y_j \models \Psi$ for $0 \leqslant j < m$ by
    Theorem~\ref{thm:CoPabisEqIcrleq}. Note, we have
    $\wtM,[\xbar_1] \models \lstothru \form [\Psi]$ as witnessed by
    the path $([\xbar_i])_{i{=}1}^\ell$ of length~$\ell{-}1$. So,
    $\model, \xbar_1 \models \lstothru \form [\Psi]$ by induction
    hypothesis for~$\ell$. Therefore,
    $\model, y_m \models \lstothru \form [\Psi]$ because
    $y_m \copbis \xbarp_1 \copbis \xbar_1$ and again
    Theorem~\ref{thm:CoPabisEqIcrleq}. Let the path
    $\pi_1 = (y_j)_{j{=}m}^{m{+}k}$ from~$y_m$ be such that
    $\model, y_{m{+}k} \models \form$ and $\model, y_j \models \Psi$
    for $m \leqslant j < m{+}k$. Then the combined path
    $\pi = (y_j)_{j{=}0}^{m{+}k}$ from~$x$ satisfies the required, and
    $\model,x \models \lstothru \form [\Psi]$.

  \item [Case~{$\lsfromthru \mkern2mu \form [\Psi]$}:] Suppose
    $\model,x \models \lsfromthru \mkern2mu \form [\Psi]$. Let the path
    $\pi = (x_i)_{i{=}0}^\ell$ be such that $x_\ell = x$,
    $\model,x_0 \models \form$, and $\model,x_i \models \Psi$ for
    $0 < i \leqslant \ell$. Put $\wtpi = ([x_i])_{i{=}0}^\ell$. Then
    $[x_{i{+}1}] \in \wtC([x_i])$ since $x_{i{+}1} \in \closure(x_i)$
    for $0 \leqslant i < \ell$. Thus, $\wtpi$~is a path
    in~$\wtX$. Moverover, $[x_\ell] = [x]$,
    $\wtM,[x_0] \models \form$, and $\wtM,[x_i] \models \Psi$ for
    $0 < i \leqslant \ell$ by induction hypothesis for $\form$
    and~$\Psi$, respectively.

    For the other direction, suppose
    $\wtM,[x] \models \lsfromthru \form [\Psi]$. Let the path
    $\wtpi = ( [\xbar_i] )_{i{=}0}^\ell$, where $\xbar_i \in X$ for
    $0 \leqslant i \leqslant \ell$, be such that $[\xbar_\ell] = [x]$,
    $\wtM,[\xbar_0] \models \form$, and $\wtM,[\xbar_i] \models \Psi$
    for $0 < i \leqslant \ell$.

    We proceed by induction on~$\ell$ to
    verify that, for some~$k,m \geqslant 0$, a path
    $\pi = ( y_j )_{j{=}0}^{k+m}$ to~$x$ exists with
    $\model,y_{0} \models \form$ and $\model,y_j \models \Psi$ for
    $0 < j \leqslant k{+}m$. Basis,~$\ell = 0$: The single point
    path~$(x)$ fulfills the requirements, as apparently,
    $\wtM,[x] \models \Phi$ and therefore $\model,x \models \Phi$ by
    induction hypothesis for~$\Phi$.

    Induction step,~$\ell > 0$: We have
    $[\xbar_\ell] \in \wtC([\xbar_{\ell{-}1}])$. Therefore, we can
    find $\xbarp_{\ell{-}1}, \xbarp_\ell \in X$  with
    $\xbarp_{\ell{-}1} \copbis \xbar_{\ell{-}1}$, 
    $\xbarp_\ell \in \closure(\xbarp_{\ell{-}1})$, and
    $\xbarp_\ell \copbis \xbar_\ell$. In view of the path
    $(\xbarp_{\ell{-}1},\xbarp_\ell)$ and
    $\xbarp_\ell \copbis \xbar_\ell$, let the path
    $\pi'' = (y''_j)_{j{=}0}^m$ to $\xbar_\ell $ be such that
    $y''_0 \copbis \xbarp_{\ell{-}1}$, $y''_j \copbis \xbarp_\ell$ for
    $0 < j \leqslant m$.  Hence
    $y''_j \copbis \xbar_\ell$ for $0 < j \leqslant m$. Because
    $\wtM,[\xbar_\ell] \models \Psi$, we have
    $\model,\xbar_\ell \models \Psi$ by induction hypothesis
    for~$\Psi$ and $\model, y''_j \models \Psi$ for $0 < j \leqslant m$
    by Theorem~\ref{thm:CoPabisEqIcrleq}. Note, we have
    $\wtM,[\xbar_{\ell{-}1}] \models \lsfromthru \form [\Psi]$ as
    witnessed by the path $([\xbar_i])_{i{=}0}^{\ell{-}1}$ of
    length~$\ell{-}1$. So,
    $\model, \xbar_{\ell{-}1} \models \lsfromthru \form [\Psi]$ by
    induction hypothesis for~$\ell$. Therefore,
    $\model, y''_0 \models \lsfromthru \form [\Psi]$ because
    $y''_0 \copbis \xbarp_{\ell{-}1} \copbis \xbar_{\ell{-}1}$ and
    another application of Theorem~\ref{thm:CoPabisEqIcrleq}. Let the
    path $\pi' = (y'_j)_{j{=}0}^{k}$ to~$y''_0$ be such that
    $\model, y'_{0} \models \form$ and $\model, y'_j \models \Psi$
    for $0 < j \leqslant k$. Let the combined path
    $\pi = (y_j)_{j{=}0}^{k{+}m}$ be given by $y_j = y'_j$ for $0
      \leqslant j \leqslant k$ and $y_j = y''_{j-k}$ for $0 \leqslant j
      \leqslant m$. Note, $y'_k = y''_0$.
   Then the path~$\pi$ is as required: It holds that
      $\model,y_{0} \models \form$ as $y_0 = y'_0$.
      Moreover,
      $\model,y_j \models \Psi$ for $0 < j \leqslant k$ and
      $k < j \leqslant k{+}m$ as $\model,y'_j \models \Psi$ for
      $0 < j \leqslant k$ and $\model,y''_{j{-}k} \models \Psi$ for
      $k < j \leqslant k{+}m$, respectively.
      Finally, $y_{k{+}m} = y''_m = \xbar_\ell = x$. Thus, path~$\pi$
      is a path to~$x$ that witnesses
      $\model,x \models \lsfromthru \form [\Psi]$, as was to be shown.
  \end{enumerate}
  As to $\wtM$ being the \qdcm\ with the smallest number of elements,
  suppose $\mathcal{N} = (Y, \mathcal{D}, \mathcal{W})$ is a \qdcm\
  and $f : X \to Y$ is a surjection such that
  ${\model,x \models \form} \Leftrightarrow {\mathcal{N},f(x) \models
    \form}$ for $x \in X$, $\form \in \icrl$. Define the mapping
  $g : Y \to \wtX$ by $g(y) = [x]$ if~$f(x) = y$. This is
  well-defined: If $f(x) = f(x')$, then $x \icrleq x'$ by the assumed
  property of~$\mathcal{N}$. Thus, $x \copbis x'$ by
  Theorem~\ref{thm:CoPabisEqIcrleq} and $[x] = [x']$. By surjectivity
  of~$f$, also $g$~is surjective. Therefore, $|Y| \geqslant |\wtX|$.
\end{proof}

\begin{exa}
Consider the simple  \qdcm\ in Figure~\ref{fig:QdCM1}. It has five
equivalence classes, namely $C_1 = \{x_1,x_2\}, C_2 = \{x_3\},
C_3 = \{x_4\}, C_4 = \{x_5\}$, and $C_5 = \{x_6\}$. 
To verify this, it is in view of
  Theorem~\ref{thm:CoPabisEqIcrleq} convenient to use logical equivalence rather than \cop-bisimilarity.
The points $x_1$ and~$x_2$ are in the same class, but $x_5$~is not in that class. 
This is so because from $x_1$ and~$x_2$, that satisfy the same atomic proposition~$r$, one can reach point~$x_3$, labelled by~$g$ that, in turn, can be \emph{reached from}~$x_4$ which is labelled by~$b$. 
This does not hold for~$x_5$. 
The formula that distinguishes $x_1$ (and~$x_2$) from~$x_5$ is $\lstothru \, (\lsfromthru \, b [g]) [r] $. 
In fact, from $x_4$ there is a path $\pi$, with $\pi(0)=x_4$, labelled by~$b$, and with $\pi(1) = x_3$, labelled by~$g$, so $x_3$~satisfies $\lsfromthru \, b [g]$. 
This formula is not satisfied by~$x_6$ since there is no path, starting in a point with label~$b$ and then going to~$x_6$. 
Furthermore, there is a path $\pi'$, with $\pi'(0)=x_1$, $\pi'(1)=x_2$ and $\pi'(2)=x_3$, so $x_1$ (and~$x_2$) satisfy $\lstothru \, (\lsfromthru \, b [g]) [r] $.
Also $x_3$ and~$x_6$, both labelled by~$g$ are not in the same class because, as we have seen, $x_3$~can be reached from a point satisfying~$b$, but $x_6$~cannot. 
The minimal model, applying Theorem~\ref{prop:logical-minimization}, is shown in Figure~\ref{fig:QdCMmin}.
\end{exa}

\begin{figure}
    \centering 
    \begin{tikzpicture}[%
      every state/.append style={inner sep=1pt, minimum size=7mm}]
      
    \node[state, label=above:{$\SET{r}$}] (s1) {$x_1$};
    \node[state, label=above:{$\SET{r}$}, right of=s1] (s2) {$x_2$};
    \node[state, label=above:{$\SET{g}$}, right of=s2] (s3) {$x_3$};
    \node[state, label=above:{$\SET{b}$}, right of=s3] (s4) {$x_4$};
    \node[state, label=above:{$\SET{r}$}, right of=s4, xshift=5mm] (s5) {$x_5$};
    \node[state, label=above:{$\SET{g}$}, right of=s5] (s6) {$x_6$};
    \draw   
            (s1) edge [thick] node{} (s2)
            (s2) edge [thick] node{} (s3)
            (s4) edge [thick] node{} (s3)            
            (s5) edge [thick] node{} (s6);            
    \end{tikzpicture}
    \caption{A finite \qdcm{}}
    \label{fig:QdCM1}
\end{figure}

\begin{figure}
    \centering 
    \begin{tikzpicture}[%
      every state/.append style={inner sep=1pt, minimum size=7mm}]
      
    \node[state, label=above:{$\SET{r}$}] (s7) {$C_1$};
    %\node[state, label=above:{$\SET{r}$}, right of=s1] (s2) {$s_2$};
    \node[state, label=above:{$\SET{g}$}, right of=s7] (s8) {$C_2$};
    \node[state, label=above:{$\SET{b}$}, right of=s8] (s9) {$C_3$};
    \node[state, label=above:{$\SET{r}$}, right of=s9, xshift=5mm] (s10) {$C_4$};
    \node[state, label=above:{$\SET{g}$}, right of=s10] (s11) {$C_5$};
    \draw   
            (s7) edge [thick] node{} (s8)
 %           (s8) edge [thick] node{} (s9)
            (s9) edge [thick] node{} (s8)            
            (s10) edge [thick] node{} (s11);            
    \end{tikzpicture}
    \caption{Minimal \qdcm{} of the \qdcm\ of Figure~\ref{fig:QdCM1}.}
    \label{fig:QdCMmin}
\end{figure}

\noindent
We close this section remarking that if a \qdcm\ $\model$ is
symmetric, i.e., the relation~$R$ underlying the
 closure operator of~$\model$ is a symmetric relation, then $\model,x\,   \models_{\icrl}\,  \lstothru \, \form_1 [\form_2]$ if and only if $\model,x\, \models_{\icrl}\, \lsfromthru \, \form_1 [\form_2]$, for all $x \in \model$ and \icrl\ formulas $\form_1$ and $\form_2$. 

%% file: Translation.tex
\section{From \qdcm{s} to Labelled Transition Systems}
\label{sec:translation}

In this section we show how a finite \qdcm{} can be encoded as an
\lts{} in such a way that \cop-bisimilarity in the \qdcm{} is
preserved and reflected by branching bisimilarity in the LTS: two
points in the \qdcm{} are mapped to branching bisimilar states in the
\lts{} precisely when the two points are \cop-bisimilar in the first
place. We first present the encoding for general \qdcs{s}, i.e.,
\qdcs{s} that may or may not be symmetric. Next, the encoding or
symmetric \qdcs{s} is given. Because of the symmetry in these models,
a more compact encoding is possible.

\subsection{\lts\ encoding of general \qdcs{s}}

 \mbox{} \smallskip

\noindent
The idea of the encoding of a \qdcm\ $\model = (X,\closure,\peval)$
that is not necessarily symmetric is as follows: (i)~each element~$x \in X$
corresponds to exactly two states in the \lts, a forward
state~$\vec{x}$ and a backward state~$\cev{x}$ that have transitions
back and forth between them labelled $\eC$ and~$\eD$,
standing for ``converse'' and ``direct'', 
(ii)~for each proposition letter $p$ that is satisfied in the \qdcm{} by $x$ there is a selfloop
in the forward state~$\vec{x}$ labelled by $p$, (iii)
there are $\tau$-transitions between two forward states $\vec{x}$ and~$\vecp{x}$
and between two backward states $\cev{x}$ and~$\cevp{x}$ if $x$ and~$x'$
satisfy the same proposition letters, and (iv)~ between two forward states $\vec{x}$ and~$\vecp{x}$ --- 
respectively, two backward states $\cev{x}$ and~$\cevp{x}$ --- there are transitions with a special label~$\ch$, indicating change,
if $x$ and~$x'$ do not satisfy the same proposition letters.

\begin{defi}[Encoding, general case]
\label{def:QdCM2LTS}
Let $\model = (X,\closure,\peval)$ be a finite \cm\@.
Define labelled transition system $\ltsT(\model) = (S, \act, {\rightarrow})$ by
\begin{enumerate} [label=(\roman*)]
\item $S = \ZET{\vec{x} ,\, \cev{x} }{x \in X}$;
\item $\act = \ap \cup \SET{\eC, \eD, \ch, \tau}$ with $\eC$, $\eD$,
  $\ch$, $\tau$ fresh labels;
\item $ {\rightarrow} \subseteq {S \times \act \times S}$ such that
  \vspace*{-1.0\baselineskip}
  \def\arraystretch{1.2}
\begin{displaymath}
\begin{array}{r@{,\:}ll}
\multicolumn{2}{c}{\vec{x} \pijl{p} \vec{x}} 
& \text{for $x \in X$, $p \in \ap$ such that $x \in \peval(p)$} 
\\
\vec{x} \pijl{\eC} \cev{x} 
& \cev{x} \pijl{\eD} \vec{x} 
& \text{for $x \in X$} 
\\
\vec{x} \pijl{\tau} \vecp{x} 
& \cevp{x} \pijl{\tau} \cev{x} 
& \text{for $x,x' \in X$ such that $x' \in \closure(x)$ and $x \eqV x'$}
\\
\vec{x} \pijl{\ch} \vecp{x} 
& \cevp{x} \pijl{\ch} \cev{x} 
& \text{for $x,x' \in X$ such that $x' \in \closure(x)$ and $x \neqV x'$}
\\
\end{array}
\end{displaymath}
  \def\arraystretch{1.0}
\end{enumerate}
\closedefi
\end{defi}

\begin{figure}
  \centering
  \small
  \input{fig-encoding-example}
  \caption{(a) \qdcm{} of Figure~\ref{fig:QdCM1} repeated; (b)~its
    encoding as \lts\@, where $\tau$-selfloops are omitted for readability reasons.}
  \label{fig:LTSQdCM1}
\end{figure}

\begin{exa}
  \label{exe-encoding}
  Figure~\ref{fig:LTSQdCM1} depicts at the top the \qdcm{} of
  Figure~\ref{fig:QdCM1} and at the bottom its encoding as \lts\@.
  Its six elements $x_1, \ldots, x_6$ give rise to the states
  $\vec{x}_1, \ldots, \vec{x}_6$ and $\cev{x}_1 , \ldots, \cev{x}_6$
  with transitions with label $\eC$ and~$\eD$ between them.
  Since the elements $x_1$, $x_2$, and~$x_5$ satisfy proposition
  letter~$r$, the states $\vec{x}_1$, $\vec{x}_2$, and~$\vec{x}_5$
  have a selfloop with label~$r$. Similarly, the elements $x_3$
  and~$x_6$ with proposition letter~$g$ yield the selfloops with
  label~$g$ for $\vec{x}_3$ and~$\vec{x}_6$, the element~$x_4$ with
  proposition letter~$b$ yields the selfloop with label~$b$.
  As indicated by the arrow, we have that $x_1 \in
  \closure(x_2)$. Moreover, both $x_1$ and~$x_2$ satisfy proposition
  letter~$r$ only. So, in the \lts{} we have the transitions
  $\vec{x}_1 \pijltau \vec{x}_2$ and 
  $\cev{x}_2 \pijltau \cev{x}_1$.
  Since $x_2, x_4 \in \closure(x_3)$ and $x_5 \in \closure(x_6)$ and
  different proposition letters are satisfied by the pairs of
  elements, we have $\ch$-transitions $\vec{x}_2 \pijlch \vec{x}_3$,
  $\vec{x}_4 \pijlch \vec{x}_3$, and $\vec{x}_5 \pijlch \vec{x}_6$ and
  their counterparts $\cev{x}_3 \pijlch \cev{x}_2$,
  $\cev{x}_3 \pijlch \cev{x}_4$, and $\cev{x}_6 \pijlch \cev{x}_5$.
\end{exa}

\noindent
Below, Theorem~\ref{theo:CoPaBran} states that \cop-bisimilarity of
two elements of a closure model~$\model$ coincides with branching
bisimilarity of their corresponding states
in $\ltsT(\model)$. In preparation of the proof the theorem
we first establish two lemmas regarding the structure
of~$\ltsT(\model)$.
In turn, in aid of proving the first lemma, we introduce the notion of
the depth of a state in an~\lts\ and note a number of properties of it.

Let a finite \lts\ $\LTS = (S,\act,{\rightarrow})$ and an
action~$\tau \in \act$ be given. For~$s \in S$, we define the
depth~$|s|$ of~$s$ in~$\LTS$ by
\begin{displaymath}
  |s| = \max \ZET{n \in \mathbb{N}}{\exists \mkern1mu s_0, \ldots, s_n
    \in S \colon \forall \mkern1mu i ,\, 0 \leqslant i < n \colon (s_i
    \pijlstertau s_{i{+}1}) \land \neg \mkern1mu (s_{i{+}1} \pijlstertau s_i)}
\end{displaymath}
By finiteness of~$S$, the above is well-defined; in the sequence $s_0,
\ldots, s_n$ each state~$s_i$ can only occur once.

For $s,t \in S$ we define $s \equiv_\tau t$ iff
$(s \pijlstertau t) \land (t \pijlstertau s)$.  A straightforward
verification shows that the relation
${\equiv_\tau} \subseteq {S \times S}$ is a branching bisimulation
relation on~$\LTS$. Therefore, (i)~$s \equiv_\tau t$
implies~$s \bis_b t$. Also, (ii)~if $|s| = |t|$ and $s \pijlstertau t$
then we have $s \equiv_\tau t$. For, if $s \not\equiv_\tau t$, we have
$\neg \mkern1mu (t \pijlstertau s)$ and $|s| > |t|$ would follow.
Thus, (iii)~if $|s| = 0$ and~$s \pijlstertau t$ it follows
that~$s \equiv_\tau t$.  Finally, from the definition of depth one
directly obtains that (iv)~$s \pijlstertau t$ implies
$|s| \geqslant |t|$.

\begin{lem} 
\label{la:factI}
Let $\model = (X,\closure,\peval)$ be a finite \cm\@. 
For $x \in X$, it holds in $\ltsT(\model)$ that
\begin{displaymath}
\text{if $\vec{x} \pijltau \vecp{x}$ and $\vec{x} \bis_b \vecp{x}$ then $\cevp{x} \pijltau \cev{x}$ and $\cev{x}\bis_b \cevp{x}$}
\end{displaymath}
\end{lem}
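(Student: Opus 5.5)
The first conjunct follows directly from Definition~\ref{def:QdCM2LTS}. A transition $\vec{x} \pijltau \vecp{x}$ can only be generated by the $\tau$-clause of the encoding, so $x' \in \closure(x)$ and $x \eqV x'$. The same clause then yields $\cevp{x} \pijltau \cev{x}$. The real content is the second conjunct, $\cev{x} \bis_b \cevp{x}$.

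For this I would define the symmetric relation
\[
B \;=\; {\bis_b} \,\cup\, \ZET{(\cev{y},\cevp{y}),(\cevp{y},\cev{y})}{\vec{y} \pijltau \vecp{y},\ \vec{y} \bis_b \vecp{y}}
\]
and show that $B$ is a branching bisimulation. Pairs already in $\bis_b$ are handled by the transfer property of $\bis_b$, since ${\bis_b} \subseteq B$. For a new pair the transitions of $\cevp{y}$ are of three kinds. First, the transition $\cevp{y} \pijltau \cev{y}$ is matched by clause~(i), because $B$ is reflexive. Second, the transition $\cevp{y} \pijl{\eD} \vecp{y}$ is matched by $\cev{y} \pijl{\eD} \vec{y}$, using $\vec{y} \bis_b \vecp{y}$. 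Third, there are the $\tau$- and $\ch$-transitions $\cevp{y} \pijl{\alpha} \cev{z}$ with $y' \in \closure(z)$. In the opposite direction, transitions of $\cev{y}$ must be matched by $\cevp{y}$; since $\cevp{y} \pijltau \cev{y}$, this is done by prefixing that $\tau$-step, with $\bar t = \cev{y}$.

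The third kind is the main obstacle, because $\cevp{y}$ may have backward predecessors that $\cev{y}$ lacks. To handle it I would exploit $\vec{y} \bis_b \vecp{y}$ together with $\vecp{y} \pijl{\eC} \cevp{y}$. By branching bisimilarity, $\vec{y} \pijlstertau \vec{w}$ for some $w$ with $\vec{w} \bis_b \vecp{y}$ and $\cev{w} \bis_b \cevp{y}$; the target of the $\eC$-step must be $\cev{w}$, since only $\vec{w} \pijl{\eC} \cev{w}$. Reversing the forward $\tau$-path gives $\cev{w} \pijlstertau \cev{y}$. Now $\cev{w} \bis_b \cevp{y}$ lets $\cev{w}$ simulate $\cevp{y} \pijl{\alpha} \cev{z}$, and we must transport this simulation down the $\tau$-path from $\cev{w}$ to $\cev{y}$.

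For this transport I would proceed by induction on the depth $\depth{\cev{y}}$ (or on the length of the $\tau$-path), using the facts (i)--(iv) about depth. If all states on the path $\cev{w} \pijlstertau \cev{y}$ are $\equiv_\tau$-equivalent, then $\cev{y} \bis_b \cev{w} \bis_b \cevp{y}$ by fact~(i) and transitivity of $\bis_b$, and we are done. Otherwise depth strictly decreases along the path. In that case the induction hypothesis, applied to the intermediate forward $\tau$-steps (which are between $\bis_b$-related forward states by a stuttering argument on $\vec{y} \pijlstertau \vec{w}$), makes the intermediate backward states related. These relations let $\cev{y}$ reach, via $\tau$-steps, a state related to $\cevp{y}$ that can perform the $\alpha$-step. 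I expect the delicate point to be the justification of the stuttering argument inside the forward $\tau$-path, i.e.\ showing that every intermediate forward state is branching bisimilar to $\vec{y}$. I would first try to obtain it from the standard stuttering property of branching bisimilarity combined with $\vec{w} \bis_b \vecp{y} \bis_b \vec{y}$.
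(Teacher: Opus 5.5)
There is a genuine gap, and it sits in the hard case: the transitions $\cevp{y} \pijl{\alpha} \cev{z}$. It concerns the well-foundedness of your induction, not the stuttering step you flag; that step is the standard stuttering property and is fine, as is your treatment of the $\eD$-step and of the reverse direction.

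Write your matching path as $\vec{y} = \vec{u}_0 \pijltau \cdots \pijltau \vec{u}_n = \vec{w}$. What you actually need is $\cev{y} \bis_b \cev{w}$, that is, the lemma itself for every step $\vec{u}_i \pijltau \vec{u}_{i+1}$. Once you have that, $\cev{y} \bis_b \cev{w} \bis_b \cevp{y}$ already is the conclusion, so the relation $B$ buys you nothing.

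\begin{itemize}
\item Your measure $\depth{\cev{y}}$ goes the wrong way. Since $\cev{u}_i \pijlstertau \cev{y}$, fact~(iv) gives $\depth{\cev{u}_i} \geqslant \depth{\cev{y}}$, so the induction hypothesis never applies to these steps.
\item The length of the $\tau$-path is no better as a measure: each recursive use produces a fresh matching path of arbitrary length.
\item Relatedly, $\cev{y}$ cannot reach the $\cev{u}_i$ by $\tau$-steps. They lie above it, so your last sentence has the direction backwards.
\item Even with the paper's measure, the forward depth $\depth{\vec{y}}$, your path starts at $\vec{y}$ itself. Take the first index $k$ with $\depth{\vec{u}_k} > \depth{\vec{u}_{k+1}}$. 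The step $\vec{u}_k \pijltau \vec{u}_{k+1}$ has a source of depth $\depth{\vec{y}}$, is not $\equiv_\tau$, and is not covered by the hypothesis. For $k=0$ it is simply an instance of the lemma for the same source $\vec{y}$.
\end{itemize}

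The missing idea is to apply the $\eC$-transfer from the lower end of a $\tau$-step, so that the resulting path lies strictly below the current depth. The paper inducts on $\depth{\vec{x}}$ and matches $\vec{x} \pijl{\eC} \cev{x}$ from $\vecxp$, giving $\vecxp = \vec{u}_0 \pijltau \cdots \pijltau \vec{u}_n$.

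\begin{itemize}
\item On the equal-depth prefix, fact~(ii) gives $\cev{u}_k \equiv_\tau \cevxp$.
\item On the tail, the induction hypothesis gives $\cev{u}_{k+1} \bis_b \cev{u}_n \bis_b \cev{x}$.
\item For the critical step, the paper matches $\vec{u}_k \pijl{\eC} \cev{u}_k$ from the lower state $\vec{u}_{k+1}$. This yields $\vec{u}_{k+1} = \vec{v}_0 \pijltau \cdots \pijltau \vec{v}_m$, which lies entirely below $\depth{\vec{x}}$, so the hypothesis gives $\cev{u}_{k+1} \bis_b \cev{v}_m \bis_b \cev{u}_k$.
\end{itemize}

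A shorter repair of your own argument also works. Match $\vec{y} \pijl{\eC} \cev{y}$ from $\vecp{y}$ rather than the other way round.

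\begin{itemize}
\item If $\depth{\vecp{y}} = \depth{\vec{y}}$, then $\vec{y} \equiv_\tau \vecp{y}$ by~(ii), hence $\cev{y} \equiv_\tau \cevp{y}$.
\item Otherwise every step of the matching path starts strictly below $\depth{\vec{y}}$, and the hypothesis chains $\cevp{y} \bis_b \cdots \bis_b \cev{y}$.
\end{itemize}
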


\begin{proof}
  For $x,x' \in X$, if $\vec{x} \pijltau \vecp{x}$ then
  $\cevp{x} \pijltau \cev{x}$ by construction of~$\ltsT(\model)$.
  We verify the conclusion $\cev{x} \bis_b \cevp{x}$ of the lemma by
  induction on the depth~$\depth{\vec{x}}$.

  Basis, $\depth{\vec{x}}=0$: 
  Because $\depth{\vec{x}} = 0$ and $\vec{x} \pijltau \vecp{x}$, we
  have by~(iii) that $\vec{x} \equiv_\tau \vecp{x}$.
  By construction of~$\ltsT(\model)$, $\cevp{x} \equiv_\tau \cev{x}$,
  and therefore $\cevp{x} \bis_b \cev{x}$ by~(i).

  Induction step, $\depth{\vec{x}}>0$: 
  Assume $\vec{x} \pijltau \vecp{x}$ and $\vec{x} \bis_b \vecp{x}$.
  We have $\vec{x} \pijl{\eC} \cev{x}$ by construction
  of~$\ltsT(\model)$.
  Because $\vec{x} \bis_b \vecp{x}$, a matching computation
  for~$\vecxp$ exists,
  $\vecp{x} = \vec{u}_0 \pijltau {} \cdots \pijltau \vec{u}_n
  \pijl{\eC} \cev{u}_n$ say, where $n \geqslant 0$ and
  $u_0, \ldots, u_n \in X$ are such that $\vec{u}_{i} \bis_b \vec{x}$
  for $0 \leqslant i \leqslant n$ and $\cev{u}_n \bis_b \cev{x}$.
  (Note that $\vec{u}_0, \ldots, \vec{u}_n$ rather than
  $\cev{u}_0, \ldots, \cev{u}_n$ are involved, since
  $\tau$-transitions preserve direction, \ie, if $s \pijltau t$
  in~$\ltsT(\model)$ then either $s = \vec{x}, t = \vec{y}$ or
  $s = \cev{x}, t = \cev{y}$, for suitable $x,y \in X$.)
  Since $\vec{u}_0 \pijltau {} \cdots \pijltau \vec{u}_n$, it holds by~(iv) that
  $\depth{\vec{u}_0} \geqslant \cdots \geqslant \depth{\vec{u}_n}$.
  We distinguish two cases.

  Case~I, $\depth{\vec{u}_0} = \depth{\vec{u}_n}$:
  We have $\cev{u}_0 \equiv_\tau \cev{u}_n$ by~(ii).
  So,~$\vec{u}_0 \bis_b \vec{u}_n$ by~(i) and hence $\cev{u}_0 \bis_b
  \cev{u}_n$ by construction of~$\ltsT(\model)$.
  Using $\cevp{x} = \cev{u}_0$ and $\cev{u}_n \bis_b \cev{x}$ it
  follows that
  $\cevp{x} \bis_b \cev{u}_0 \bis_b \cev{u}_n \bis_b \cev{x}$.

  Case~II, $\depth{\vec{u}_0} \neq \depth{\vec{u}_n}$:
  We have $\depth{\vec{u}_0} > \depth{\vec{u}_n}$ by~(iv).
  Let~$k$, $0 \leqslant k < n$, be the minimal index such that
  $\depth{\vec{u}_k} > \depth{\vec{u}_{k{+}1}}$.
  By hypothesis $\vecx \pijltau \vecxp$.
  So,
  $\depth{\vecx} \geqslant \depth{\vecxp} = \depth{\vec{u}_0} =
  \depth{\vec{u}_k} > \depth{\vec{u}_\ell}$ for
  $k{+}1 \leqslant \ell \leqslant n$.
  By choice of~$u_n$ we have
  \begin{equation}
    \label{eq:41}
    \cevx \bis_b \cev{u}_n
  \end{equation}
  Because $\vec{u}_{k{+}1} \pijltau {} \cdots {} \pijltau \vec{u}_n$
  and $\vec{u}_{k{+}1} \bis_b {} \cdots {} \bis_b \vec{u}_n$, it
  follows by induction hypothesis that $\cev{u}_{k{+}1} \bis_b {}
  \cdots \bis_b {} \cev{u}_n$.
  Hence,
  \begin{equation}
    \label{eq:42}
    \cev{u}_n \bis_b \cev{u}_{k{+}1}
  \end{equation}
  Since $\vec{u}_k \bis_b \vec{u}_{k{+}1}$ and
  $\vec{u}_k \pijl{\eD} \cev{u}_k$, a matching sequence
  for~$\vec{u}_{k{+}1}$ exists, which is, by definition
  of~$\ltsT(\model)$, of the form
  $\vec{u}_{k{+}1} = \vec{v}_0 \pijltau {} \cdots {} \pijltau
  \vec{v}_m \pijl{\eD} \cev{v}_m$ for suitable $m \geqslant 0$ and
  $v_0, \ldots, v_m \in X$ such that
  $\vec{v}_0 , \ldots, \vec{v}_m \bis_b \vec{u}_k$ and
  $\cev{v}_m \bis_b \cev{u}_k$.
  From $\vec{u}_{k{+}1} \pijlstertau \vec{v}_j$ we get, by~(iv) that 
  $\depth{\vec{u}_{k{+}1}} \geqslant \depth{\vec{v}_j}$ for $ 0 \leqslant j \leqslant m$.
  Moreover, we know that 
  $\depth{\vec{x}} \geqslant \depth{\vec{u_0}} = \depth{\vec{u_k}} > \depth{\vec{u_{k{+}1}}}$.
 Thus, we get that
  $\depth{\vecx} \geqslant \depth{\vec{u}_0} = \depth{\vec{u}_k} >
  \depth{\vec{u}_{k{+}1}} \geqslant \depth{\vec{v}_j}$ for
  $0 \leqslant j \leqslant m$.
  By induction hypothesis we obtain $\cev{u}_{k{+}1} \bis_b \cev{v}_0
  , \ldots, \cev{v}_m$.
  In particular,
  \begin{equation}
    \label{eq:43}
    \cev{u}_{k{+}1} \bis_b \cev{v}_m
  \end{equation}
  By choice of~$v_m$ we have
  \begin{equation}
    \label{eq:44}
    \cev{v}_m \bis_b \cev{u}_k
  \end{equation}
  Finally, from
  $\depth{\vecxp} = \depth{\vec{u}_0} = \depth{\vec{u}_k}$ and
  $\vecxp = \vec{u}_0 \pijlstertau \vec{u}_k$ one derives
  $\vecxp \equiv_\tau \vec{u}_k$ using~(ii).
  Therefore, $\cevxp \equiv_\tau \cev{u}_k$, which implies
  \begin{equation}
    \label{eq:45}
    \cevxp \bis_b \cev{u}_k
  \end{equation}
  by~(i).
  Combining equations (\ref{eq:41}) to~(\ref{eq:45}) gives
  $\cevx \bis_b \cev{u}_n \bis_b \cev{u}_{k{+}1} \bis_b \cev{v}_m
  \bis_b \cev{u}_k \bis_b \cevxp$. So~$\cevx \bis_b \cevxp$, as was to
  be shown.
\end{proof}

\noindent
The next lemma states branching bisimilarity of pairs of forward
states and of backward states in an encoding \lts{} are related.

\begin{lem}
  \label{la:forwardbackwardequivalence}
  Let $\model = (X, \closure,\peval)$ be a finite \cm. Then
  $\vecx \bis_b \vecy \Leftrightarrow \cevx \bis_b \cevy$ in
  $\ltsT(\model)$ for $x, y \in X$.
\end{lem}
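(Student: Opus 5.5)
The plan is to show the two implications by building, for each direction, a branching bisimulation that contains the pair we need. The structure of $\ltsT(\model)$ makes this natural. Every forward state $\vecx$ has exactly one non-$\tau$, non-proposition, non-$\ch$ move to the backward world, namely $\vecx \pijl{\eC} \cevx$. Every backward state $\cevx$ has exactly one move back, $\cevx \pijl{\eD} \vecx$. Moreover, $\tau$- and $\ch$-transitions never change direction.

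\textbf{Direction ($\Rightarrow$).} Suppose $\vecx \bis_b \vecy$ and consider the transition $\vecx \pijl{\eC} \cevx$. Since $\eC \neq \tau$, clause (ii) of Definition~\ref{def:dbs} must apply. So there is a computation $\vecy = \vec{u}_0 \pijltau \cdots \pijltau \vec{u}_n \pijl{\eC} \cev{u}_n$ with $\vec{u}_i \bis_b \vecx$ for all $i$ and $\cev{u}_n \bis_b \cevx$. The $\tau$-steps stay among forward states, and the only $\eC$-transition from $\vec{u}_n$ leads to $\cev{u}_n$, which justifies the shape of this computation.

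It remains to show $\cevy \bis_b \cev{u}_n$. Since $\vec{u}_0 = \vecy \bis_b \vecx \bis_b \vec{u}_i$, every step $\vec{u}_i \pijltau \vec{u}_{i{+}1}$ is a $\tau$-step between branching bisimilar forward states. Lemma~\ref{la:factI} then gives $\cev{u}_i \bis_b \cev{u}_{i{+}1}$. Chaining these yields $\cevy = \cev{u}_0 \bis_b \cev{u}_n \bis_b \cevx$. This uses transitivity of $\bis_b$, which is standard (branching bisimilarity is an equivalence).

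\textbf{Direction ($\Leftarrow$).} This direction is symmetric, but it needs the dual of Lemma~\ref{la:factI}:
\[
\text{if } \cev{x} \pijltau \cevp{x} \text{ and } \cev{x} \bis_b \cevp{x}, \text{ then } \vecp{x} \pijltau \vecx \text{ and } \vecx \bis_b \vecxp .
\]
I would obtain it by rerunning the depth-induction proof of Lemma~\ref{la:factI} with the roles of forward and backward states swapped, and with $\eC$ and $\eD$ swapped. The encoding is invariant under this swap, except that proposition selfloops sit only on forward states. That proof never uses the proposition labels, only $\tau$, $\eC$ and $\eD$ transitions together with facts (i)--(iv) about depth, so the swap goes through verbatim.

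With the dual lemma in hand, assume $\cevx \bis_b \cevy$. Match $\cevx \pijl{\eD} \vecx$ by a computation $\cevy = \cev{u}_0 \pijltau \cdots \pijltau \cev{u}_n \pijl{\eD} \vec{u}_n$ with $\cev{u}_i \bis_b \cevx$ and $\vec{u}_n \bis_b \vecx$. Applying the dual lemma to each step gives $\vec{u}_{i{+}1} \bis_b \vec{u}_i$. Hence $\vecy \bis_b \vec{u}_n \bis_b \vecx$.

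\textbf{Main obstacle.} The delicate step is checking that the dual of Lemma~\ref{la:factI} really holds. In particular, the proof of Lemma~\ref{la:factI} matches an $\eD$-transition from a forward state. Under the swap this becomes matching an $\eC$-transition from a backward state. I need to confirm that each such move is actually available in the encoding. Here it suffices that $\cev{u}_k \pijl{\eD} \vec{u}_k$ exists, so the swapped argument is fine.

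Once the dual lemma is established, both directions are routine applications of clause (ii) of branching bisimulation.
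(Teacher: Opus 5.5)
Your proof is correct and follows the paper's argument. The ($\Rightarrow$) direction is exactly the paper's argument via Lemma~\ref{la:factI}. For ($\Leftarrow$), which the paper dismisses as ``similar'', you rightly make explicit that it needs the mirrored version of Lemma~\ref{la:factI}. The same depth induction yields that mirrored version, since it never uses the proposition self-loops. Strictly, the swap does not leave the encoding invariant: it maps $\ltsT(\model)$ to the encoding of the converse relation with the loops moved to backward states, but this is harmless.

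One small slip in your last paragraph: the transition matched in the proof of Lemma~\ref{la:factI} is really $\vec{u}_k \pijl{\eC} \cev{u}_k$, and the $\eD$ printed there is a typo. Its mirror is therefore precisely the $\cev{u}_k \pijl{\eD} \vec{u}_k$ you end up using, and no $\eC$-transition from a backward state is ever involved.
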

  
\begin{proof}
  ($\Rightarrow$) Let $x,y \in X$ such that $\vec{x} \bis_b \vec{y}$.
  For the transition $\vecx \pijl{\eC} \cevx$ of~$\vecx$, a matching
  computation of~$\vec{y}$ exists, which is of the form
  $\vecy = \vec{u}_0 \pijltau \vec{u}_1 \pijltau {} \cdots {} \pijltau
  \vec{u}_n \pijl{\eC} \cev{u}_n$ by definition of~$\ltsT(\model)$ and
  satisfies $\vec{u}_0, \ldots, \vec{u}_n \bis_b \vecx$ and
  $\cev{u}_n\bis_b \cevx$. Repeated application of
  Lemma~\ref{la:factI} yields
  $\cev{u}_n \pijltau {} \cdots {} \pijltau \cev{u}_0 = \cevy$ and
  $\cev{u}_n \bis_b \cevy$. Thus $\cevx \bis_b\cev{u}_n \bis_b \cevy$,
  as was to be shown.
  ($\Leftarrow$) Similar to the above.
\end{proof}

\noindent
The encoding of Definition~\ref{def:QdCM2LTS} preserves
\cop-bisimilarity of a \qdcm~$\model$ and reflects branching
bisimilarity of the \lts~$\ltsT(\model)$.

\begin{thm}
  \label{theo:CoPaBran}
  Let $\model = (X,\closure,\peval)$ be a finite \cm. For $x, y \in
  X$, it holds that
  \begin{displaymath}
    \text{$x \copbis y$ in~$\model$ iff
      $\vecx \bis_b \vecy$ in~$\ltsT(\model)$}
  \end{displaymath}
\end{thm}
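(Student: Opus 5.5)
We prove the two implications separately, each by exhibiting a suitable relation and checking its transfer conditions.

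\emph{From left to right.} Assume $x \copbis y$. We take
\[
B = \ZET{(\vec{u},\vec{v}),\,(\cev{u},\cev{v})}{u \copbis v}
\]
and show that $B$ is a branching bisimulation on $\ltsT(\model)$. Symmetry of $B$ follows from symmetry of $\copbis$. We then go through the outgoing transitions of a state $\vec{u}$ with $u \copbis v$.
\begin{enumerate}[label=(\alph*)]
\item A $p$-selfloop on $\vec{u}$ is matched by the $p$-selfloop on $\vec{v}$, since $u \eqV v$.
\item A $\eC$-transition $\vec{u}\pijl{\eC}\cev{u}$ is matched directly by $\vec{v}\pijl{\eC}\cev{v}$; similarly for $\eD$ from backward states.
\item For $\vec{u}\pijltau\vecp{u}$ or $\vec{u}\pijlch\vecp{u}$ with $u'\in\closure(u)$, we use the forward transfer condition of Definition~\ref{def:CoPabisimilarity} on the two-point path $(u,u')$. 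This yields a compatible path $(v_j)_{j=0}^m$ from $v$. From the monotone matching functions we extract an index $k$ such that $v_j\copbis u$ for $j\leqslant k$ and $v_j \copbis u'$ for $j>k$ (or $v_m\copbis u'$ if the path has one zone).
\item[] \emph{Case $u\eqV u'$ with action $\tau$.} All steps along $v_0,\dots,v_m$ are between points with equal valuation, hence $\tau$-steps. If there is one zone (all $v_j$ related to both $u$ and $u'$), clause (i) of Definition~\ref{def:dbs} applies, because $u\copbis u'$ by transitivity. Otherwise $\vec v\pijlstertau\vec v_k\pijltau\vec v_{k+1}$ gives clause (ii).
\item[] \emph{Case action $\ch$.} Here $u\neqV u'$ forces exactly two zones, so the path is $\tau$-steps inside the first zone followed by one $\ch$-step $\vec v_k\pijlch\vec v_{k+1}$. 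The remaining $\tau$-steps after that are harmless: we simply truncate the path at $k+1$.
\item Backward states are handled symmetrically, using the backward transfer condition on paths \emph{to} $u$. The reversed edge direction in the encoding turns paths to $v$ into $\tau^\ast$ computations starting at $\cev v$.
\end{enumerate}

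\emph{From right to left.} Define $B'=\ZET{(u,v)}{\vec u\bis_b\vec v}$. By Lemma~\ref{la:forwardbackwardequivalence} this also equals $\ZET{(u,v)}{\cev u\bis_b\cev v}$. We verify the three conditions of Lemma~\ref{lem-single-step-bisimilarity-suffices}.
\begin{enumerate}[label=(\arabic*)]
\item For $u \eqV v$: we use the $p$-selfloops. If $\vec u\pijl p\vec u$, branching bisimilarity gives $\vec v\pijlstertau\bar t\pijl p t'$. The only $p$-transitions are selfloops, and $\tau$-steps preserve the valuation, so $v\in\peval(p)$. Symmetry gives the converse.
\item Given $u'\in\closure(u)$, we take the transition $\vec u\pijltau\vecp u$ or $\vec u\pijlch\vecp u$.
\begin{itemize}
\item If the $\tau$-case falls under clause (i), the one-point path $(v)$ suffices, since $\vecp u \bis_b \vec v$.
\item Otherwise clause (ii) gives $\vec v=\vec v_0\pijlstertau\vec v_k\pijl{\alpha}\vec v_{k+1}$ with $\vec v_j\bis_b\vec u$ along the $\tau$-prefix. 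This requires branching bisimilarity's stuttering property to propagate relatedness to every intermediate state of $\pijlstertau$, which follows because $\bis_b$ is an equivalence and intermediate states lie between two related states. The endpoint satisfies $\vec v_{k+1}\bis_b\vecp u$.
\end{itemize}
Since the $\tau$-steps and the $\ch$-step are forward edges, this is a path in $\model$.
\item Given $u\in\closure(u')$, we use $\cev u\pijl{\alpha}\cevp u$ and Lemma~\ref{la:forwardbackwardequivalence}. The backward computation from $\cev v$ reverses into a path to $v$.
\end{enumerate}
Symmetry of $B'$ is clear, so Lemma~\ref{lem-single-step-bisimilarity-suffices} makes $B'$ a \cop-bisimulation.

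\emph{Expected obstacle.} The delicate point is the stuttering step in the right-to-left direction. Intermediate states of the $\tau$-prefix must be $\bis_b$-related to $\vec u$; this is the standard stuttering lemma for branching bisimilarity. Lemma~\ref{la:forwardbackwardequivalence} is what makes the backward transfer work.
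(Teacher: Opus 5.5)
Your proposal is correct and follows the paper's proof essentially step for step. It uses the same two relations, the same transition-by-transition check based on the two-point forward and backward transfer paths, and, for the converse, the same combination of Lemma~\ref{lem-single-step-bisimilarity-suffices} with Lemma~\ref{la:forwardbackwardequivalence}. The one point to tighten is that relatedness of the intermediate states on the $\tau$-prefix is the stuttering lemma for branching bisimilarity; it does not follow merely from $\bis_b$ being an equivalence. The paper relies on the same fact implicitly.
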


\begin{proof}
  ($\Leftarrow$) Define the relation $B \subseteq {X \times X}$ by
  $x \mopB y$ if $\vecx \bisb \vecy$ for $x,y \in X$.  We verify
  that the relation~$B$ is a \cop-bisimulation
 using Lemma~\ref{lem-single-step-bisimilarity-suffices}.
  Suppose $x,y \in X$
  satisfy $x \mopB y$. Thus~$\vecx \bisb \vecy$.

%%% proposition condition %%%
  
  (i)~Let $p \in \ap$. If $x \in \peval(p)$, then $\vecx \pijlp \vecx$
  in~$\ltsT(\model)$.
  Since $\vecx \bisb \vecy$ in~$\ltsT(\model)$,
  exist $s,t \in S$
  such that $\vecy \pijlstertau s \pijlp t$ with $s,t \bisb
  \vec{x}$. By definition of~$\ltsT(\model)$, it must be that
  $s,t = \vecyp$ for some $y' \in \peval(p)$. Moreover,
  $\vecy \pijlstertau \vecyp$ implies, again by definition
  of~$\ltsT(\model)$, that $y =_\peval y'$. Hence~$y \in
  \peval(p)$. Symmetrically, $y \in \peval(p)$ implies $x \in
  \peval(p)$. So, we conclude $x \eqV y$.
  
%%% forward transfer condition %%%
  
  (ii)~Suppose $x' \in \closure(x)$ for some~$x' \in X$.
It suffices to show that a path
  $(y_i)_{i{=}0}^n$ from~$y$ exists such that $x \mopB y_i$ for $0
  \leqslant i < n$ and~$x' \mopB y_n$. The assumption
  $x' \in \closure(x)$ implies that $\ltsT(\model)$ has either the
  transition $\vecx \pijltau \vecxp$ or the transition $\vecx \pijlch
  \vecxp$.
  In the case of a $\tau$-transition,
  by branching bisimilarity of $\vecx$ and~$\vecy$,
  a computation $\vecy_0 \pijltau {} \cdots {} \pijltau \vecy_n$
  from~$\vecy$ with~$n \geqslant 0$ exists such that
  $\vecx \bis_b \vecy_0, \ldots, \vecy_{n{-}1}$
  and~$\vecxp \bis_b \vecy_n$.
  If~$n > 0$, by definition of~$\ltsT(\model)$, it holds that
  $y_{i{+}1} \in \closure(y_i)$ for $0 \leqslant i < n$.
  So, $(y_i)_{i{=}0}^n$ is a
  path from~$y$ in~$X$ as required.
  If~$n = 0$, we have both $x \mopB y$ and~$x' \mopB y$, and then the two
  elements path $(y,y)$ is a path from~$y$ in~$X$ as required.
  In the case of a $\ch$-transition,
  by branching bisimilarity of $\vecx$
  and~$\vecy$, a computation
  $y_0 \pijltau {} \cdots {} \pijltau y_{n{-}1} \pijlch y_n$
  from~$\vecy$ with~$n \geqslant 1$ exists such that
  $\vecx \bis_b y_0, \ldots, y_{n{-}1}$ and~$\vecxp \bis_b y_n$.
  Also here by definition of~$\ltsT(\model)$, $(y_i)_{i{=}0}^n$ is a
  path from~$y$ in~$X$ as required.

%%% backward transfer condition %%%
  
  (iii)~Now suppose $x \in \closure(x')$ for some~$x' \in X$.
We have to show that a path $(y_i)_{i{=}0}^n$ to~$y$ exists such that
  $x' \mopB y_0$ and $x \mopB y_i$ for $0 < i \leqslant n$.
  We will use that~$\cevx \bis_b \cevy$, which follows from~$\vecx
  \bis_b \vecy$ by Lemma~\ref{la:forwardbackwardequivalence}. It holds that in
  $\ltsT(\model)$ either the transition $\cevx \pijltau \cevxp$ or the
  transition $\cevx \pijlch \cevxp$ exists.
  In the case of a $\tau$-transition,
  by branching bisimilarity of $\cevx$ and~$\cevy$,
  $\ltsT(\model)$ has a computation
  $\cevy_0 \pijltau {} \cdots {} \pijltau \cevy_n$ from~$\cevy$
  with~$n \geqslant 0$ such that
  $\cevx \bis_b \cevy_0, \ldots, \cevy_{n{-}1}$ and
  $\cevxp \bis_b \cevy_n$.
  Lemma~\ref{la:forwardbackwardequivalence} yields
  $\vecx \bis_b \vecy_0, \ldots, \vecy_{n{-}1}$ and
  $\vecxp \bis_b \vecy_n$, thus $x \mopB y_i$ for $0 \leqslant i < n$ and
  $x' \mopB y_n$.
  Note, by definition of~$\ltsT(\model)$, $y_{i} \in
  \closure(y_{i{+}1})$ for $0 \leqslant i < n$.
  Therefore, if~$n > 0$ then, in reverse order, $(y_{n{-}i})_{i{=}0}^n$ is a
  path to~$y$ in~$X$ as required.
  If~$n = 0$, we have both $x' \mopB y$ and~$x \mopB y$.
  Consequently, the path $(y,y)$ is a path to~$y$ in~$X$ satisfying
  what is required.
  In the case of a $\ch$-transition,
  by branching bisimilarity of $\cevx$
  and~$\cevy$, a computation
  $\cevy_0 \pijltau {} \cdots {} \pijltau \cevy_{n{-}1} \pijlch
  \cevy_n$ from~$\cevy$ with~$n \geqslant 1$ exists such that
  $\cevx \bis_b \cevy_0, \ldots, \cevy_{n{-}1}$ and
  $\cevxp \bis_b y_n$.
  Again we obtain $\vecx \bis_b \vecy_0, \ldots, \vecy_{n{-}1}$ and
  $\vecxp \bis_b \vecy_n$, thus $x \mopB y_i$ for $0 \leqslant i < n$ and
  $x' \mopB y_n$ by application of Lemma~\ref{la:forwardbackwardequivalence}.
  Moreover, $y_{i} \in \closure(y_{i{+}1})$ for $0 \leqslant i < n$ by
  definition of~$\ltsT(\model)$.
  Thus, we have that, again in reversed order, $(y_{n{-}i})_{i{=}0}^n$
  is a path to~$y$ in~$X$ as required.

  \bigskip

  ($\Leftarrow$)
  We verify that the relation $B \subseteq S \times S$ given by
  $B = \ZET{ (\vecx, \vecy), (\cevx, \cevy)}{x \copbis y}$ is a
  branching bisimulation relation on~$\ltsT(\model)$. Suppose
  $x,y \in X$ satisfy $x \copbis y$.

  We check that each
  transition of~$\vecx$ can be matched by~$\vecy$.
  (i)~If $\vecx \pijl{p} \vecx$ for some proposition $p\in \ap$, then
  $x \in \peval(p)$. Hence, by \cop-bisimilarity, $y \in \peval(p)$.
  So, $\vecy \pijl{p} \vecy$, and the single-step computation
  $\vecy \pijl{p} \vecy$ in $\ltsT(\model)$ matches the transition
  $\vecx \pijl{p} \vecx$.
  (ii)~The transition $\vecx \pijl{\eC} \cevx$ is matched by the
  single-step computation $\vecy \pijl{\eC} \cevy$, by construction of~$\ltsT(\model)$.
  (iii)~If $\vecx \pijltau \vecxp$ for some~$x' \in X$, then
  $x' \in \closure(x)$ and $x \eqV x'$. Thus, $(x,x')$ is a
  path from~$x$ in~$X$. By \cop-bisimilarity of $x$ and~$y$, it
  follows that in~$X$ a path~$(y_i)_{i{=}0}^n$ from~$y$ exists
  such that $x \copbis y_0, \ldots, y_{n{-}1}$ and~$x' \copbis
  y_n$. Then we have $x \eqV y_0, \ldots, y_{n{-}1}$
  and also $x \eqV x' \eqV y_n$. Thus, both
  $y_{i{+}1} \in \closure(y_i)$ and $y_i \eqV y_{i{+}1}$ for
  $0 \leqslant i < n$. Hence, in $\ltsT(\model)$ we have a computation
  $\vecy_0 \pijltau {} \cdots {} \pijltau \vecy_n$ such that
  $\vecx \mopB \vecy_i$ for $0 \leqslant i < n$ and
  $\vecxp \mopB \vecy_n$. Therefore, $\vecy$~matches with this
  computation the $\tau$-transition of~$\vecx$.
  (iv)~Similarly, if $\vecx \pijl{\ch} \vecxp$ for some~$x' \in X$,
  then we find in~$\ltsT(\model)$ a computation
  $\vecy_0 \pijltau {} \cdots {} \pijltau \vecy_{n{-}1} \pijl{\ch}
  \vecy_n$ from~$\vecy$ which matches the $\ch$-transition of~$\vecx$.

  Next we check that each transition of~$\cevx$ can be matched
  by a computation of~$\cevy$.
  (i)~The transition $\cevx \pijl{\eD} \vecx$ is matched by the
  single-step computation $\cevy \pijl{\eD} \vecy$.
  (ii)~If $\cevx \pijltau \cevxp$ for some~$x' \in X$, a reasoning
  similar to the corresponding case for~$\vecx$ applies. By definition
  of~$\ltsT(\model)$, it holds that $x \in \closure(x')$ and
  $x \eqV x'$. Thus, $(x',x)$ is a path to~$x$ in~$X$. By
  \cop-bisimilarity of $x$ and~$y$, it follows that in~$X$ a
  path~$(y_i)_{i{=}0}^n$ to~$y$ exists that satisfies $x' \copbis y_0$
  and~$x \copbis y_1, \ldots, y_n$. Then we have $x \eqV x' \eqV y_0$
  and also $x \eqV y_i$ for $0 < i \leqslant n$. Thus, both
  $y_{i{+}1} \in \closure(y_{i})$ and $y_i \eqV y_{i{+}1}$ for
  $0 \leqslant i < n$. Hence, by definition of~$\ltsT(\model)$, a
  computation $\cevy_n \pijltau {} \cdots {} \pijltau \cevy_0$
  of~$\cevy$ exists which satisfies $\cevx \mopB \cevy_i$ for
  $n \geqslant i > 0$ and $\cevxp \mopB \cevy_0$. Therefore, this
  computation of~$\cevy$ matches the $\tau$-transition of~$\cevx$.
  (iii)~Finally, a transition $\cevx \pijl{\ch} \cevxp$ for
  some~$x' \in X$ is matched by~$\cevy$.
  In fact, if $\cevx \pijl{\ch} \cevxp$ for some~$x' \in X$, then it holds that
  $x \in \closure(x')$ and~$x \neqV x'$. Thus $(x',x)$ is a path
  to~$x$ in~$\model$. Let by \cop-bisimilarity of $x$ and~$y$, the
  path $( y_i)_{i{=}0}^n$ be a matching path to~$y$.
  Note~$n \geqslant 1$ as $x \not\copbis x'$. We have
  $x' \copbis y_0$ and $x \copbis y_1, \ldots, y_n$. Thus,
  $y_0 \eqV x' \neqV x \eqV y_1$ and $y_1 \eqV \cdots \eqV y_n$. So,
  in~$\ltsT(\model)$ the computation $\cevy_n \pijltau {} \cdots
  \pijltau \cevy_1 \pijl{\ch} \cevy_0$ exists that
  satisfies $x \mopB y_i$ for $n \geqslant i > 0$
  and~$x' \mopB y_0$. We see, this computation of~$\cevy$ matches the
  $\ch$-transition of~$\cevx$ as was to be shown.
\end{proof}

%\begin{figure}
%  \centering
%  \small
%  \input{fig-encoding-example-minimized}
%  \caption{Minimized \lts{} for the \lts~of Figure~\ref{fig:LTSQdCM1}b; 
% % (b)~the corresponding~\qdcm\@.
%  }
%  \label{fig:LTSMin}
%\end{figure}

\subsection{\lts\ encoding  of symmetric \qdcm{s}}
\label{sec:OptimisedEncoding}

\mbox{} \smallskip

\noindent
For finite \cm{s} that are symmetric a simplified version of the
encoding can be given. These \cm{s} naturally arise as representations
of digital images where points are related via an adjacency relation
as discussed in Section~\ref{sec:introduction}. Because of the
relevance these instances we discuss below a concise version of the encoding
of the previous subsection.

\begin{defi}[Encoding, symmetric case]
\label{def:SQdCM2LTS}
Let $\model = (X,\closure,\peval)$ be a symmetric \cm.
Define the labelled transition system
$\ltsTS(\model) = (X,\calA,{\rightarrow})$ by
\begin{enumerate} [label=(\roman*)]
%\item $S=\ZET{\vecev{s}}{s\in X}$;
\item $\act = \ap \cup \SET{\tau,\ch}$;
\item the transition relation~${\rightarrow} \subseteq {X \times \act
    \times X}$ contains exactly the
  following transitions:
\def\arraystretch{1.2}
\begin{displaymath}
\begin{array}{cl}
x \pijl{p} x & \text{for $x \in X$, $p \in \ap$ with $x \in \peval(p)$} \\
x \pijl{\tau} x' & \text{if $x' \in \closure(x) \backslash
                   \SET{x}$ and $x \eqV x'$} \\
x \pijl{\ch} x' & \text{if $x' \in \closure(x)$ and $x \neqV x'$} \\
\end{array}
\end{displaymath}
\def\arraystretch{1.0}
\end{enumerate}
\closedefi
\end{defi}

\input{fig-QdCM2}

\noindent
As an example, consider the symmetric finite \qdcm{} of Figure~\ref{fig:QdCM2} and its LTS encoding in Figure~\ref{fig:LTSQdCM2}, obtained with the encoding given in Definition~\ref{def:SQdCM2LTS}. 
The minimised model of the symmetric \qdcm\ of Figure~\ref{fig:QdCM2} is shown in Figure~\ref{fig:QdCM2min}. The equivalence classes are the same as in the example of Figure~\ref{fig:QdCM1} and for the same reasons.
  
\input{fig-LTS-QdCM2}
\input{fig-QdCM2min}
  
\begin{thm}\label{theo:SCoPaBran} 
  Let $\model =(X,\closure,\peval)$ be a finite and symmetric \cm. For 
  $x, y \in X$ it holds that
  \begin{displaymath}
    \text{$x \copbis y$ in~$\model$ iff
      $x \bis_b y$ in~$\ltsTS(\model)$}
  \end{displaymath}
\end{thm}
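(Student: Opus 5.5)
The plan is to mirror the proof of Theorem~\ref{theo:CoPaBran}, but directly, without the forward/backward machinery of Lemmas~\ref{la:factI} and~\ref{la:forwardbackwardequivalence}. In a symmetric model these lemmas become unnecessary because the backward transfer condition reduces to the forward one. If $x \in \closure(x')$ then $x' \in \closure(x)$. Moreover, every path $(x_i)_{i=0}^n$ read in reverse order is again a path.

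For ($\Rightarrow$ from branching bisimilarity to \cop), define $B$ on $X$ by $x \mopB y$ iff $x \bis_b y$ in $\ltsTS(\model)$, and verify the three conditions of Lemma~\ref{lem-single-step-bisimilarity-suffices}.
\begin{enumerate}
\item Condition~(1) follows exactly as before. A $p$-selfloop at $x$ must be matched by $y \pijlstertau y' \pijl{p} y'$. Since $\tau$-steps preserve the valuation, $y \eqV y'$, and hence $y \in \peval(p)$.
\item Condition~(2) with $x' \in \closure(x)$ splits into three cases.
\begin{itemize}
\item If $x' = x$, take the path $(y,y)$.
\item If $x' \neq x$ and $x \eqV x'$, there is a $\tau$-transition $x \pijltau x'$. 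It is matched by a $\tau$-computation $y = y_0 \pijltau \cdots \pijltau y_n$ with $x \bis_b y_i$ for $i<n$ and $x' \bis_b y_n$. This is a path; if $n=0$, use $(y,y)$.
\item If $x \neqV x'$, there is a $\ch$-transition, matched by $y_0 \pijltau\cdots\pijltau y_{n-1}\pijlch y_n$, which again is a path.
\end{itemize}
\item Condition~(3) with $x \in \closure(x')$ uses symmetry: $x' \in \closure(x)$, so the argument for~(2) yields a path $(y_i)_{i=0}^n$ from $y$ with $y_i \mopB x$ for $i<n$ and $y_n \mopB x'$. Its reversal $(y_{n-i})_{i=0}^n$ is a path to $y$, by symmetry of $\closure$. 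It starts in a point $B$-related to $x'$, and all later points are $B$-related to $x$, as required.
\end{enumerate}

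For ($\Leftarrow$ from \cop\ to branching bisimilarity), show that ${\copbis}$, viewed as a relation on the states $X$, is a branching bisimulation on $\ltsTS(\model)$.
\begin{itemize}
\item A $p$-loop at $x$ is matched by the $p$-loop at $y$, since $x \eqV y$.
\item For $x \pijltau x'$, the path $(x,x')$ from $x$ has, by forward transfer, a compatible path from $y$. Since I need a finite single-step witness, I would take the matching functions and note that points in zone $1$ are ${\copbis} x$ and points in zone $2$ are ${\copbis} x'$. This yields $(y_i)_{i=0}^n$ with $y_i \copbis x$ for $i<n$ and $y_n \copbis x'$, all with equal valuation.
\item Consecutive equal points $y_i = y_{i+1}$ are deleted, since there is no $\tau$-selfloop in $\ltsTS$ but repeated points are allowed in paths. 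Each remaining distinct consecutive pair gives a $\tau$-transition.
\item If, after deletion, the path collapses to the single point $y$, then $y \copbis x'$, and clause~(i) of Definition~\ref{def:dbs} applies ($x' \copbis y$ and the action is $\tau$).
\item For $x \pijlch x'$, the compatible path must change valuation exactly once, at the zone boundary. After deletion of repeats, the prefix gives $\tau$-steps between points ${\copbis} x$, and the boundary step gives $y_{n-1} \pijlch y_n$ with $y_n \copbis x'$. Here $n\ge1$ because $x \neqV x'$ forces two zones.
\end{itemize}
Symmetry of ${\copbis}$ gives the symmetric requirement.

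The main obstacle is the bookkeeping in this direction: $\ltsTS$ omits $\tau$-selfloops, whereas \cop-paths may stutter, and the single-transition transfer must be extracted from a general compatible path with possibly more than two zones. I would handle this as in Lemma~\ref{lem-single-step-bisimilarity-suffices}. Compatibility with the two-point path $(x,x')$ forces $N\le 2$. With $N=1$, all $y_i$ are ${\copbis}$ both $x$ and $x'$, so the collapsed or $\tau$-chain case applies. Notably, no depth argument is needed here, because the symmetric encoding has no separate backward states.
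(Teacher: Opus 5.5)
Your proof is correct, but it takes a different route from the paper. The paper never reasons about \cop-bisimilarity in this proof. Instead, it shows that $\vecx \bis_b \vecy$ in $\ltsT(\model)$ iff $x \bis_b y$ in $\ltsTS(\model)$, and then concludes with Theorem~\ref{theo:CoPaBran}. To get the equivalence it checks two things: that $\ZET{\langle \vecx,\vecy\rangle, \langle \cevx,\cevy\rangle}{x \bis_b y \text{ in } \ltsTS(\model)}$ is a branching bisimulation on $\ltsT(\model)$, and that $\ZET{\langle x,y\rangle}{\vecx \bis_b \vecy \text{ in } \ltsT(\model)}$ is one on $\ltsTS(\model)$.

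In the paper's argument, symmetry is used only to turn a matching computation $y_0 \pijltau \cdots \pijltau y_n$ of $\ltsTS(\model)$ into a computation $\cevy_0 \pijltau \cdots \pijltau \cevy_n$ of a backward state in $\ltsT(\model)$. Stuttering is absorbed by the $\tau$-selfloops of $\ltsT(\model)$, which are matched via clause~(i).

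You instead redo the argument of Theorem~\ref{theo:CoPaBran} directly on $\ltsTS(\model)$, using symmetry at the level of paths: backward transfer is forward transfer followed by reversal. As a result, Lemmas~\ref{la:factI} and~\ref{la:forwardbackwardequivalence}, and the depth induction behind them, are not needed at all. The price is that you must do the zone and stuttering bookkeeping by hand ($N \leqslant 2$, and deleting repeated points because $\ltsTS(\model)$ has no $\tau$-selfloops); you do this correctly.

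The two routes buy different things. The paper's route reuses an established theorem and additionally shows that both encodings induce the same equivalence on forward states. Yours is self-contained and makes it transparent why backward states are superfluous in the symmetric case.

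One step deserves an explicit justification. In the direction from branching bisimilarity to \cop-bisimilarity, you read off $\tau$-computations $y = y_0 \pijltau \cdots \pijltau y_n$ with $x \bis_b y_i$ for \emph{all} $i < n$, as condition~(2) of Lemma~\ref{lem-single-step-bisimilarity-suffices} requires. Definition~\ref{def:dbs}, however, only guarantees this for the last state $\bar{t}$ before the visible step. The missing step is the standard stuttering property of branching bisimilarity. The paper also uses it tacitly, so it is not a gap, but you should cite it.

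Minor: your $\Rightarrow$/$\Leftarrow$ labels are swapped relative to the statement, although the content of each direction is unambiguous.
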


\begin{proof}
  We prove the statement $\vec{x} \bis_b\vec{y}$ in~$\ltsT(\model)$
  iff $x \bis_b y$ in~$\ltsTS(\model)$ for~$x,y \in X$. Together with
  Theorem~\ref{theo:CoPaBran}, leads to the assertion.

  ($\Leftarrow$) We verify that the relation~$B$ given by
  \begin{math}
    B = \ZET{
      \langle \vec{x},\vec{y} \,\rangle
      ,\, \langle \cev{x},\cev{y} \,\rangle} 
    {x \bis_b y} 
  \end{math}
  is a branching bisimulation relation, where branching bisimilarity
  of $x$ and~$y$ is considered in~$\ltsTS(\model)$.
  Let $x, y \in X$ be such that $x \bis_b y$. Then $\vecx \mopB \vecy$
  and $\cevx \mopB \cevy$. We first analyze the transitions of~$\vecx$
  in~$\ltsT(\model)$.
  \begin{enumerate}[label=(\roman*)]
  \item Regarding the transition $\vecx \pijl{\eC} x'$ in
    $\ltsT(\model)$ we note that $\vecy \pijl{\eC} \cevy$ and that
    $\cevx \mopB \cevy$.
  \item If $\vecx \pijl{p} \vecy$ in $\ltsT(\model)$ then
    $x \in \peval(p)$ for $p \in \ap$.  Also, $x \pijl{p} x$ in
    $\ltsTS(\model)$.  Because $x \bis_b y$, we have
    $y = y_0 \pijl{\tau} \cdots \pijl{\tau} y_{n{-}1} \pijl{p} y_n$
    for some $n > 0$ and $y_0, \ldots, y_n \in X$ with
    $x \bis_b y_0, \ldots, y_{n}$.
      Because $y_{n{-}1} \in \peval(p)$ it follows from the
      construction of~$\ltsTS(\model)$ that $y_{n{-}1} = y_n$ and
      $y_{n{-}1} \in \peval(p)$. As $y_0 \pijl{\tau} \cdots
      \pijl{\tau} y_{n{-}1}$ it follows from the
      construction of~$\ltsTS(\model)$ as well that $y_0 \eqV \cdots
      \eqV y_{n{-}1}$. 
    We conclude that $y = y_0 \in \peval(p)$ and therefore
    $\vecy \pijl{p} \vecy$ by construction of $\ltsT(\model)$.
  \item If $\vecx \pijl{\tau} \vecxp$ in $\ltsT(\model)$ for
    some~$x'\in X$, then it holds that $x = x'$, or $x' \in
    \closure(x) \backslash \SET{x}$ and $x \eqV x'$. 
    In the first case, i.e.\ $x = x'$, we have
    $\vecxp \mathop{=} \vecx \mopB \vecy$. 
    In the second case, $x \pijl{\tau} x'$ in
    $\ltsTS(\model)$. As~$x \bis_b y$, a computation
    $y = y_0 \pijl{\tau} \cdots \pijl{\tau} y_{n{-}1} \pijl{\tau} y_n$
    exists in $\ltsTS(\model)$ with $n \geqslant 0$,
    $y_0, \ldots, y_{n{-}1} \bis_b x$, and~$y_n \bis_b x'$. We have,
    without loss of generality,
    $y_{i{+}1} \in \closure(y_i) \backslash \SET{y_i}$ and
    $y_i \eqV y_{i{+}1}$ for $0 \leqslant i < n$. Thus,
    $\vecy = \vecy_0 \pijl{\tau} \cdots \pijl{\tau} \vecy_{n{-}1}
    \pijl{\tau} \vecy_n$ is a computation in $\ltsT(\model)$, and, by
    definition of~$B$, it holds that
    $\vecx \mopB \vecy_0, \ldots, \vecy_{n{-}1}$, and
    $\vecxp \mopB \vecy_n$.
    \item If $\vecx \pijl{\ch} \vecxp$ in $\ltsT(\model)$ for
    some~$x'\in X$, then it holds that $x' \in \closure(x)$ and
    $x \neqV x'$. So, $x \pijl{\ch} x'$ in $\ltsTS(\model)$.
    As~$x \bis_b y$, a computation
    $y = y_0 \pijl{\tau} \cdots \pijl{\tau} y_{n{-}1} \pijl{\ch} y_n$
    exists in $\ltsTS(\model)$ with $n > 0$,
    $y_0, \ldots, y_{n{-}1} \bis_b x$, and~$y_n \bis_b x'$. We have,
    without loss of generality,
    $y_{i{+}1} \in \closure(y_i) \backslash \SET{y_i}$ for
    $0 \leqslant i < n$ and
    $y_0 \eqV \cdots \eqV y_{n{-}1} \neqV y_n$. Thus,
    $\vecy = \vecy_0 \pijl{\tau} \cdots \pijl{\tau} \vecy_{n{-}1}
    \pijl{\ch} \vecy_n$ is a computation in $\ltsT(\model)$, and, by
    definition of~$B$, it holds that
    $\vecx \mopB \vecy_0, \ldots, \vecy_{n{-}1}$, and
    $\vecxp \mopB \vecy_n$.
  \end{enumerate}
  Next, we consider transitions of~$\cevx$ in $\ltsT(\model)$.
  \begin{enumerate}[label=(\roman*)]
  \item Regarding the transition $\cevx \pijl{\eD} \cevx$ in
    $\ltsT(\model)$ we note that $\cevy \pijl{\eD} \cevy$ and
    $\cevx \mopB \cevy$.
  \item If $\cevx \pijl{\tau} \cevxp$ in $\ltsT(\model)$ for
    some~$x' \in X$, then it holds that $x = x'$, or $x \in
    \closure(x') \backslash \{x\}$
    and~$x \eqV x'$. 
    Thus, $x = x'$ or $x \pijl{\tau} x'$ in $\ltsTS(\model)$,
    by symmetry of~$\model$. 
    In case $x = x'$ we are done since
      $\cevxp \mathop{=} \cevx \mopB \cevy$. In the other case, we
      reason as follows.
    Since $x \bis_b y$, a computation $y = y_0 \pijl{\tau} \cdots
    \pijl{\tau} y_{n{-}1} \pijl{\tau} y_n$ 
    exists in $\ltsTS(\model)$ with $n \geqslant 0$,
    $y_0, \ldots, y_{n{-}1} \bis_b x$, and $y_n \bis_b x'$.
    From $y_i \pijl{\tau} y_{i{+}1}$ in $\ltsTS(\model)$, for $0 \leqslant i < n$, we have, by construction of
    $\ltsTS(\model)$ (see Definition~\ref{def:SQdCM2LTS}), that $ y_{i{+}1} \in \closure(y_i)$. %,  y_{i{+}1} \not= y_i$ and $y_i \eqV y_{i{+}1}$.
    In addition, by symmetry of $\model$, we get that also $ y_{i} \in \closure(y_{i{+}1})$ holds for $0 \leqslant i < n$.
    Thus, by construction of $\ltsT(\model)$ (see Definition~\ref{def:QdCM2LTS}), we get that $\vecy_{i{+}1} \pijl{\tau} \vecy_i$ 
    is a transition of $\ltsT(\model)$, for $0 \leqslant i < n$ and then 
    $\vecy_n \pijl{\tau} \vecy_{n{-}1} \pijl{\tau} \cdots \pijl{\tau} \vecy_0$ is a computation
    of $\ltsT(\model)$. By construction of of $\ltsT(\model)$ we also get that 
     $\cevy = \cevy_0 \pijl{\tau} \cdots \pijl{\tau} \cevy_{n{-}1} \pijl{\tau} \cevy_n$ is a computation in $\ltsT(\model)$.
%    Since $x \bis_b y$, a computation
%    $y = y_0 \pijl{\tau} \cdots \pijl{\tau} y_{n{-}1} \pijl{\tau} y_n$
%    exists in $\ltsTS(\model)$ with $n \geqslant 0$,
%    $y_0, \ldots, y_{n{-}1} \bis_b x$, and $y_n \bis_b x'$.
%%
%    \evreplaced{Reversing the transitions}{Taking transitions in the
%      other direction}, as allowed by Definition~\ref{def:SQdCM2LTS},
%    yields that
%    $y_n \pijl{\tau} y_{n{-}1} \pijl{\tau} \cdots \pijl{\tau} y_0$ is
%    also a computation in $\ltsTS(\model)$.
%%
%    Hence,
%    $y_{i{-}1} \in \closure(y_i)$ and $y_i \eqV y_{i{-}1}$ for
%    $n \geqslant i > 0$. Thus
%    $\vecy_n \pijl{\tau} \vecy_{n{-}1} \pijl{\tau} \cdots \pijl{\tau}
%    \vecy_0$ is a computation in $\ltsT(\model)$. By again reversing
%    the transitions as allowed by Definition~\ref{def:QdCM2LTS}, we
%    obtain that
%    $\cevy = \cevy_0 \pijl{\tau} \cdots \pijl{\tau} \cevy_{n{-}1}
%    \pijl{\tau} \cevy_n$ is a computation in $\ltsT(\model)$. 
 %   
    Because $y_0, \ldots, y_{n{-}1} \bis_b x$ and $y_n \bis_b x'$, we have
    $\cevy_0, \ldots, \cevy_{n{-}1} \mopB \cevx$ and
    $\cevy_n \mopB \cevx'$ and we are done.   
    
  \item
    If $\cevx \pijl{\ch} \cevxp$ in $\ltsT(\model)$ for
    some~$x' \in X$, we proceed in a similar way as in the previous case.
    It holds that $x \in \closure(x')$, $x \neq x'$, and~$x \neqV x'$. 
    By symmetry of~$\model$ it follows
    that $x' \in \closure(x)$. Thus, $x \pijl{\ch} x'$ in
    $\ltsTS(\model)$. 
    
    Since $x \bis_b y$, a computation $y = y_0 \pijl{\tau} \cdots \pijl{\tau} y_{n{-}1} \pijl{\ch} y_n$
    exists in $\ltsTS(\model)$ with $n \geqslant 0$,
    $y_0, \ldots, y_{n{-}1} \bis_b x$, and $y_n \bis_b x'$.
    From $y_i \pijl{\tau} y_{i{+}1}$ in $\ltsTS(\model)$, for $0 \leqslant i < n$, we have, by construction of
    $\ltsTS(\model)$ (see Definition~\ref{def:SQdCM2LTS}), that $ y_{i{+}1} \in \closure(y_i)$. %,  y_{i{+}1} \not= y_i$ and $y_i \eqV y_{i{+}1}$.
    In addition, by symmetry of $\model$, we get that also $ y_{i} \in \closure(y_{i{+}1})$ holds for $0 \leqslant i < n$.
    Thus, by construction of $\ltsT(\model)$ (see Definition~\ref{def:QdCM2LTS}), 
    we get that $\vecy_{i{+}1} \pijl{\tau} \vecy_i$ 
    is a transition of $\ltsT(\model)$, for $0 \leqslant i < n-1$ as well as $\vecy_{n} \pijl{\ch} \vecy_{n{-}1}$.
    
    Consequently,
    $\vecy_n \pijl{\ch} \vecy_{n{-}1} \pijl{\tau} \cdots \pijl{\tau} \vecy_0$ is a computation
    of $\ltsT(\model)$. By construction of of $\ltsT(\model)$ we also get that 
     $\cevy = \cevy_0 \pijl{\tau} \cdots \pijl{\tau} \cevy_{n{-}1} \pijl{\ch} \cevy_n$ is a computation in $\ltsT(\model)$.  
    Because $y_0, \ldots, y_{n{-}1} \bis_b x$ and $y_n \bis_b x'$, we have
    $\cevy_0, \ldots, \cevy_{n{-}1} \mopB \cevx$ and
    $\cevy_n \mopB \cevx'$ and we are done.
%  
%    If $\cevx \pijl{\ch} \cevxp$ in $\ltsT(\model)$ for
%    some~$x' \in X$, we reason similarly to the previous case. It
%    holds that $x \in \closure(x')$, $x \neq x'$, and~$x \neqV x'$. By
%    symmetry of~$\model$ it follows that $x' \in \closure(x)$. Thus,
%    $x \pijl{\ch} x'$ in $\ltsTS(\model)$. Since $x \bis_b y$, a
%    computation
%    $y = y_0 \pijl{\tau} \cdots \pijl{\tau} y_{n{-}1} \pijl{\ch} y_n$
%    exists in $\ltsTS(\model)$ with $n > 0$,
%    $y_0, \ldots, y_{n{-}1} \bis_b x$, and $y_n \bis_b x'$.
%%
%    \evreplaced{Reversing the transitions}{Taking transitions in the
%      other direction}, as allowed by Definition~\ref{def:SQdCM2LTS},
%    yields that
%    $y_n \pijl{\ch} y_{n{-}1} \pijl{\tau} \cdots \pijl{\tau} y_0$ is a
%    computation in $\ltsTS(\model)$.
%%
%    Hence, it holds that
%    $y_{i{-}1} \in \closure(y_i)$ for $n \geqslant i > 0$,
%    $y_n \neqV y_{n{-}1}$, and $y_i \eqV y_{i{-}1}$ for $n > i >
%    0$. Thus
%    $\vecy_n \pijl{\tau} \vecy_{n{-}1} \pijl{\tau} \cdots \pijl{\tau}
%    \vecy_0$ is a computation in $\ltsT(\model)$. Now reversing
%    transitions as supported by Definition~\ref{def:QdCM2LTS}, we
%    obtain that
%    $\cevy = \cevy_0 \pijl{\tau} \cdots \pijl{\tau} \cevy_{n{-}1}
%    \pijl{\ch} \cevy_n$ is a computation in $\ltsT(\model)$. Because
%    $y_0, \ldots, y_{n{-}1} \bis_b x$ and $y_n \bis_b x'$, we also have
%    here that $\cevy_0, \ldots, \vecy_{n{-}1} \bis_b \cevx$ and
%    $\cevy_n \bis_b \cevx'$ and we are done.
  \end{enumerate}
  This finishes the proof from right to left. 

  ($\Rightarrow$) We verify that the relation
  $B = \ZET{\langle x, y \rangle}{\text{$\vecx \bis_b \vecy$ in
      $\ltsT(\model)$}}$ is a branching bisimulation for
  $\ltsTS(\model)$. So, let $x,y \in X$ such that $x \mopB y$.
  \begin{enumerate} [label=(\roman*)]
  \item If $x \pijl{p} x$ in $\ltsTS(\model)$ for $p \in \ap$, then
    $x \in \peval(p)$. Therefore, $\vecx \pijl{p} \vecx$ in
    $\ltsT(\model)$. We have $\vecx \bis_b \vecy$ in
    $\ltsT(\model)$. So, we can find a matching computation
    $\vecy = \vecy_0 \pijl{\tau} \cdots \pijl{\tau} \vecy_{n{-}1}
    \pijl{p} \vecy_n$ for~$\vecy$ with $n \geqslant 0$ and
    $\vecy_0, \ldots, \vecy_n \bis_b \vecx$. Apparently in view of
    Definition~\ref{def:QdCM2LTS}, $\vecy_{n{-}1} = \vecy_n$, and
    therefore $y_{n{-}1} \in \peval(p)$. As above, an inductive
    argument shows that $y_{n{-}1}, \ldots, y_0 \in \peval(p)$. Hence,
    $y = y_0 \in \peval(p)$. We conclude that $y \pijl{p} y$ in
    $\ltsTS(\model)$ and this computation for~$y$ matches the
    transition $x \pijl{p} x$.
  \item If $x \pijl{\tau} x'$ in $\ltsTS(\model)$ for some~$x' \in X$,
    then $x' \in \closure(x)$ and $x \eqV x'$. Thus, according to
    Definition~\ref{def:QdCM2LTS}, $\vecx \pijl{\tau} \vecxp$ is a
    transition of $\ltsT(\model)$. Let
    $\vecy = \vecy_0 \pijl{\tau} \cdots \pijl{\tau} \vecy_{n{-}1}
    \pijl{\tau} \vecy_n$ for~$\vecy$ with $n \geqslant 0$,
    $\vecy_0, \ldots, \vecy_{n{-}1} \bis_b \vecx$, and
    $\vecy_n \bis_b \vecxp$ be a matching computation of~$\vecy$ in
    $\ltsT(\model)$. Thus, $y_{i{+}1} \in \closure(y_i)$ and
    $y_i \eqV y_{i{+}1}$ for $0 \leqslant i < n$. Following
    Definition~\ref{def:SQdCM2LTS} we conclude that
    $y = y_0 \pijl{\tau} \cdots \pijl{\tau} y_{n{-}1} \pijl{\tau} y_n$
    is a computation of~$y$ in $\ltsTS(\model)$. Moreover,
    $y_0, \ldots , y_{n{-}1} \mopB x$ and~$y_n \mopB x'$. So, this
    computation of~$y$ matches the transition $x \pijl{\tau} x'$
    of~$x$.
  \item If $x \pijl{\ch} x'$ in $\ltsTS(\model)$ for some~$x' \in X$,
    then $x' \in \closure(x)$ and $x \neqV x'$. Therefore
    $\vecx \pijl{\ch} \vecxp$ is a transition of~$\vecx$ in
    $\ltsT(\model)$. Let
    $\vecy = \vecy_0 \pijl{\tau} \cdots \pijl{\tau} \vecy_{n{-}1}
    \pijl{\ch} \vecy_n$ for~$\vecy$ with $n \geqslant 0$,
    $\vecy_0, \ldots, \vecy_{n{-}1} \bis_b \vecx$, and
    $\vecy_n \bis_b \vecxp$ be a matching computation of~$\vecy$ in
    $\ltsT(\model)$. Then it holds that $y_{i{+}1} \in \closure(y_i)$ and
    for $0 \leqslant i < n$, $y_i \eqV y_{i{+}1}$ for $0 \leqslant i <
    n{-}1$, and~$y_{n{-}1} \neqV y_n$. Thus, according to
    Definition~\ref{def:SQdCM2LTS}, we have that
    $y = y_0 \pijl{\tau} \cdots \pijl{\tau} y_{n{-}1} \pijl{\ch} y_n$
    is a computation of~$y$ in $\ltsTS(\model)$. Also we have that
    $y_0, \ldots , y_{n{-}1} \mopB x$ and~$y_n \mopB x'$. Thus, the
    computation of~$y$ is matching the transition $x \pijl{\tau} x'$
    of~$x$ as was to be shown.
  \end{enumerate}
\end{proof}

\begin{figure}[t!]
        \centering
        \subfloat[][]
        {
                \begin{minipage}{2in}
                        \raisebox{0.385in}{\includegraphics[height=2in]{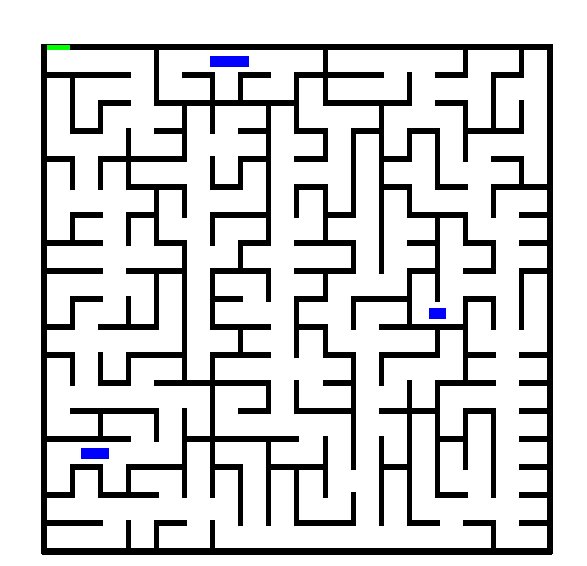}}
                \end{minipage}
                \label{fig:Maze}
        }\quad\quad
        \centering
        \subfloat[][]
        {
                \begin{minipage}{2in}                        
                        \begin{tikzpicture}
                                \tikzset{<->}
                                \tikzstyle{point}=[circle,draw=black,fill=white,inner sep=0pt,minimum width=4mm]
                                \begin{scope}[scale=0.6, transform shape]
                                        \node (p1)[point,draw=black,fill=blue] at (0,0) {};
                                        \node (p2)[point,draw=black,fill=white] at (2,0) {};
                                        \node (p3)[point,draw=black,fill=black] at (4,0) {};
                                        \node (p4)[point,draw=black,fill=white] at (6,0) {};
                                        \node (p5)[point,draw=black,fill=blue] at (8,0) {};
                                        \node (p6)[point,draw=black,fill=green] at (3,2.5) {};
                                        \node (p7)[point,draw=black,fill=white] at (4,-2) {};

                                        \node (p1c)[point,draw=black,fill=lightgray] at (1,1) {};
                                        \node (p2c)[point,draw=black,fill=lightgray] at (3,1) {};
                                        \node (p3c)[point,draw=black,fill=lightgray] at (5,1) {};
                                        \node (p4c)[point,draw=black,fill=lightgray] at (7,1) {};
                                        \node (p5c)[point,draw=black,fill=lightgray] at (9,1) {};
                                        \node (p6c)[point,draw=black,fill=lightgray] at (4,3.5) {};
                                        \node (p7c)[point,draw=black,fill=lightgray] at (5,-1) {};
                                        \draw [thick] (p1) edge[below,thick] node{$\ch$} (p2);
                                        %\draw [thick] (p1) edge[below,thick] node{$cn$} (p1c); 
                                        %\draw [thick] (p2) edge[below,thick,bend left] node{$ch$} (p1); 

                                        \draw [thick] (p2) edge[below,thick] node{$\ch$} (p3);
                                        %\draw [thick] (p3) edge[below,thick,bend left] node{$ch$} (p2); 

                                        \draw [thick] (p3) edge[below,thick,pos=0.70] node{$\ch$} (p4);
                                        %\draw [thick] (p4) edge[below,thick,bend left] node{$ch$} (p3); 

                                        \draw [thick] (p4) edge[below,thick] node{$\ch$} (p5);
                                        %\draw [thick] (p5) edge[below,thick,bend left] node{$ch$} (p4);

                                        \draw [thick, bend left] (p2) edge [left,thick, pos=0.80] node{$\ch$} (p6);
                                        %\draw [thick] (p6) edge[right,thick,bend left] node{$ch$} (p2);

                                        \draw [thick] (p3) edge[left,thick] node{$\ch$} (p7);
                                        %\draw [thick] (p7) edge[left,thick,bend left] node{$ch$} (p3);

                                        \draw [thick, bend right=20] (p3) edge[right,thick,pos=0.90] node{$\ch$} (p6);
                                        %\draw [thick] (p6) edge[right,thick,bend left] node{$ch$} (p3);

                                        \draw [thick] (p1c) edge[above,thick,pos=0.30] node{$\ch$} (p2c);
                                        \draw [thick] (p2c) edge[above,thick,pos=0.70] node{$\ch$} (p3c);
                                        \draw [thick] (p3c) edge[above,thick] node{$\ch$} (p4c);
                                        \draw [thick] (p4c) edge[above,thick] node{$\ch$} (p5c);

                                        \draw [thick,bend right] (p2c) edge[left,thick,pos=0.30] node{$\ch$} (p6c);
                                        \draw [thick,bend right=20] (p3c) edge[right,thick] node{$\ch$} (p6c);

                                        \draw [thick] (p3c) edge[left,thick,pos=0.70] node{$\ch$} (p7c);

                                        \node (p6d) [point,draw=black,fill=green] at (3,2.5) {};
                                        
                                        \tikzset{->}

                                        \draw [thick] (p1c) edge[right,thick, bend left] node{$\eD$} (p1);
                                        \draw [thick] (p1) edge[left,thick, bend left] node{$\eC$} (p1c);

                                        \draw [thick] (p2c) edge[right,thick, bend left] node{$\eD$} (p2);
                                        \draw [thick] (p2) edge[left,thick, bend left,pos=0.70] node{$\eC$} (p2c);

                                        \draw [thick] (p3c) edge[left,thick, bend left,pos=0.30] node{$\eD$} (p3);
                                        \draw [thick] (p3) edge[left,thick, bend left,pos=0.70] node{$\eC$} (p3c);

                                        \draw [thick] (p4c) edge[right,thick, bend left] node{$\eD$} (p4);
                                        \draw [thick] (p4) edge[left,thick, bend left] node{$\eC$} (p4c);

                                        \draw [thick] (p5c) edge[right,thick, bend left] node{$\eD$} (p5);
                                        \draw [thick] (p5) edge[left,thick, bend left] node{$\eC$} (p5c);

                                        \draw [thick] (p6c) edge[right,thick, bend left, pos=0.80] node{$\,\eD$} (p6);
                                        \draw [thick] (p6) edge[left,thick, bend left] node{$\eC$} (p6c);

                                        \draw [thick] (p7c) edge[right,thick, bend left] node{$\eD$} (p7);
                                        \draw [thick] (p7) edge[left,thick, bend left,pos=0.80] node{$\eC$} (p7c);
                                \end{scope}

                        \end{tikzpicture}  \\
                        \begin{tikzpicture}
                                \tikzset{<->}
                                \tikzstyle{point}=[circle,draw=black,fill=white,inner
                                sep=0pt, minimum width=4mm]
                                \begin{scope}[scale=0.6, transform shape]
                                        \node (p1)[point,draw=black,fill=blue] at (0,0) {};
                                        \node (p2)[point,draw=black,fill=white] at (2,0) {};
                                        \node (p3)[point,draw=black,fill=black] at (4,0) {};
                                        \node (p4)[point,draw=black,fill=white] at (6,0) {};
                                        \node (p5)[point,draw=black,fill=blue] at (8,0) {};
                                        \node (p6)[point,draw=black,fill=green] at (3,2) {};
                                        \node (p7)[point,draw=black,fill=white] at (4,-2) {};
                                        \draw [thick] (p1) edge[below,thick] node{$\ch$} (p2);
                                        % \draw [thick] (p2) edge[below,thick,bend left] node{$ch$} (p1); 

                                        \draw [thick] (p2) edge[below,thick] node{$\ch$} (p3);
                                        %\draw [thick] (p3) edge[below,thick,bend left] node{$ch$} (p2); 

                                        \draw [thick] (p3) edge[below,thick] node{$\ch$} (p4);
                                        %\draw [thick] (p4) edge[below,thick,bend left] node{$ch$} (p3); 

                                        \draw [thick] (p4) edge[below,thick] node{$\ch$} (p5);
                                        % \draw [thick] (p5) edge[below,thick,bend left] node{$ch$} (p4);

                                        \draw [thick,bend left] (p2) edge[left,thick] node{$\ch$} (p6);
                                        % \draw [thick] (p6) edge[right,thick,bend left] node{$ch$} (p2);

                                        \draw [thick] (p3) edge[left,thick] node{$\ch$} (p7);
                                        %\draw [thick] (p7) edge[left,thick,bend left] node{$ch$} (p3);

                                        \draw [thick,bend right] (p3) edge[right,thick] node{$\ch$} (p6);
                                        % \draw [thick] (p6) edge[right,thick,bend left] node{$ch$} (p3);
                                \end{scope}
                        \end{tikzpicture}
                \end{minipage}
                \label{fig:MiniMaze}
        }
        \caption{An image of a 2D maze, with green exit (upper-left), blue starting points, black walls and white walking areas (Figure~\ref{fig:Maze}),
                its minimal \lts{} using the general encoding of Definition~\ref{def:QdCM2LTS} (Figure~\ref{fig:MiniMaze} - top), and that obtained using the optimised encoding of Definition~\ref{def:SQdCM2LTS}
                (Figure~\ref{fig:MiniMaze} - bottom). 
                For readability, self-loops labelled by atomic propositions are not shown; the corresponding states are shown in the colour represented by the omitted label; symmetric transition pairs are drawn as doubly-headed arrows.
        }
        \label{fig:MazeMinimisation}
\end{figure}

\noindent
The 2D maze in Figure~\ref{fig:Maze}, which will also be part of our feasibility study in Section~\ref{sec:FisStudy}, exemplifies the significance of \cop-minimization on images. Each state of the \lts{} represents an area of interest in the image: exit (green), walls (black), walking areas (white) and starting points (blue). The three white states, as an example, represent three different kinds of white walking areas: the ones from which neither an exit nor a starting point can be reached (without crossing walls), the ones from which a starting point can be reached (but not the exit), and the ones from which a starting point and the exit can be reached.

%% file: fig-encoding-example.tex
  \begin{tikzpicture}[%
    every state/.append style={label distance=-1mm},
    every loop/.append style={out=120, in=60, looseness=8} ]

    \node [draw=none] at (0,1.5) {(a)} ;
    
    \node [state, label=60:{$r$}] (s1) at (1,1) {$x_1$} ;
    \node [state, label=60:{$r$}] (s2) at (2.75,1) {$x_2$} ;
    \node [state, label=60:{$g$}] (s3) at (4.5,1) {$x_3$} ;
    \node [state, label=60:{$b$}] (s4) at (6.25,1) {$x_4$} ;
    \node [state, label=60:{$r$}] (s5) at (8.5,1) {$x_5$} ;
    \node [state, label=60:{$g$}] (s6) at (10.25,1) {$x_6$} ;
    \draw (s1) edge (s2) ;
    \draw (s2) edge (s3) ;
    \draw (s4) edge (s3) ;
    \draw (s5) edge (s6) ;            
    \begin{scope}[shift={(0,-4.25)},
      every state/.append style={inner sep=1pt}]

      \node [draw=none] at (0,3.75) {(b)} ;

      \node[state] (s1D) at (1,3) {$\vec{x}_1$} ;
      \node[state] (s2D) at (2.75,3) {$\vec{x}_2$} ;
      \node[state] (s3D) at (4.5,3) {$\vec{x}_3$} ;
      \node[state] (s4D) at (6.25,3) {$\vec{x}_4$} ;
      \node[state] (s5D) at (8.5,3) {$\vec{x}_5$} ;
      \node[state] (s6D) at (10.25,3) {$\vec{x}_6$} ;
      \node[state, below of=s1D] (s1C) {$\cev{x}_1$} ;
      \node[state, below of=s2D] (s2C) {$\cev{x}_2$} ;
      \node[state, below of=s3D] (s3C) {$\cev{x}_3$} ;
      \node[state, below of=s4D] (s4C) {$\cev{x}_4$} ;
      \node[state, below of=s5D] (s5C) {$\cev{x}_5$} ;
      \node[state, below of=s6D] (s6C) {$\cev{x}_6$} ;
      \draw (s1D) edge [loop] node [above] {$r$} (s1D) ;
      \draw (s2D) edge [loop] node [above] {$r$} (s2D) ;
      \draw (s3D) edge [loop] node [above] {$g$} (s3D) ;
      \draw (s4D) edge [loop] node [above] {$b$} (s4D) ;
      \draw (s5D) edge [loop] node [above] {$r$} (s5D) ;
      \draw (s6D) edge [loop] node [above] {$g$} (s6D) ;
      \draw (s1D) edge node [above] {$\tau$} (s2D) ;
      \draw (s2D) edge node [above] {$\ch$} (s3D) ;
      \draw (s4D) edge node [above] {$\ch$} (s3D) ;
      \draw (s5D) edge node [above] {$\ch$} (s6D) ;
      \draw (s2C) edge node [below] {$\tau$} (s1C) ;
      \draw (s3C) edge node [below] {$\ch$} (s2C) ;
      \draw (s3C) edge node [below] {$\ch$} (s4C) ;
      \draw (s6C) edge node [below] {$\ch$} (s5C) ;
      \tikzset{bend right, pos=0.65}
      \draw (s1D) edge node [left=-1pt] {$\eC$} (s1C) ;
      \draw (s2D) edge node [left=-1pt] {$\eC$} (s2C) ;
      \draw (s3D) edge node [left=-1pt] {$\eC$} (s3C) ;
      \draw (s4D) edge node [left=-1pt] {$\eC$} (s4C) ;
      \draw (s5D) edge node [left=-1pt] {$\eC$} (s5C) ;
      \draw (s6D) edge node [left=-1pt] {$\eC$} (s6C) ;
      \draw (s1C) edge node [right=-1pt] {$\eD$} (s1D) ;
      \draw (s2C) edge node [right=-1pt] {$\eD$} (s2D) ;
      \draw (s3C) edge node [right=-1pt] {$\eD$} (s3D) ;
      \draw (s4C) edge node [right=-1pt] {$\eD$} (s4D) ;
      \draw (s5C) edge node [right=-1pt] {$\eD$} (s5D) ;
      \draw (s6C) edge node [right=-1pt] {$\eD$} (s6D) ;
    \end{scope}
  \end{tikzpicture}

%% file: fig-QdCM2.tex
\begin{figure}[tbh]
  \centering
  \begin{tikzpicture}[%scale=0.7,
    every state/.append style={inner sep=1pt, minimum size=6mm},
    every loop/.style={min distance=10mm, out=120, in=60, looseness=5}]

    \node[state, label=above:{$\SET{r}$}] (s1) {$x_1$};
    \node[state, label=above:{$\SET{r}$}, right of=s1] (s2) {$x_2$};
    \node[state, label=above:{$\SET{g}$}, right of=s2] (s3) {$x_3$};
    \node[state, label=above:{$\SET{b}$}, right of=s3] (s4) {$x_4$};
    \node[state, label=above:{$\SET{r}$}, right of=s4, xshift=5mm] (s5) {$x_5$};
    \node[state, label=above:{$\SET{g}$}, right of=s5] (s6) {$x_6$};
    \draw   
    (s1) edge [thick,bend right] node{} (s2)
    (s2) edge [thick,bend right] node{} (s3)
    (s4) edge [thick,bend right] node{} (s3)            
    (s5) edge [thick,bend right] node{} (s6)
    (s2) edge [thick,bend right] node{} (s1)
    (s3) edge [thick,bend right] node{} (s2)
    (s3) edge [thick,bend right] node{} (s4)            
    (s6) edge [thick,bend right] node{} (s5);            
  \end{tikzpicture}
  \caption{A \emph{symmetric} \cm.}
  \label{fig:QdCM2}
\end{figure}

%% file: fig-LTS-QdCM2.tex
\begin{figure}[tbh]
  \centering 
  \begin{tikzpicture}[% {scale=0.7}
    every state/.append style={inner sep=1pt, minimum size=6mm},
    every loop/.style={min distance=10mm, out=120, in=60, looseness=5}]

    \node[state] (s1D) {$x_1$};
    \node[state, right of=s1D] (s2D) {$x_2$};
    \node[state, right of=s2D] (s3D) {$x_3$};
    \node[state, right of=s3D] (s4D) {$x_4$};
    \node[state, right of=s4D, xshift=5mm] (s5D) {$x_5$};
    \node[state, right of=s5D] (s6D) {$x_6$};

    \draw   
    (s1D) edge[thick,loop above] node{$r$} (s1D)
    (s1D) edge[thick,bend right,below] node{$\tau$} (s2D)
    (s2D) edge[thick,bend right,above] node{$\tau$} (s1D)
    (s2D) edge[thick,loop above] node{$r$} (s2D)
    (s2D) edge[thick,bend right,below] node{$\ch$} (s3D)
    (s3D) edge[thick,bend right,above] node{$\ch$} (s2D)
    (s3D) edge[thick,loop above] node{$g$} (s3D)
    (s4D) edge[thick,bend right,above] node{$\ch$} (s3D)
    (s3D) edge[thick,bend right,below] node{$\ch$} (s4D)
    (s4D) edge[thick,loop above] node{$b$} (s4D)
    (s5D) edge[thick,loop above] node{$r$} (s5D)
    (s5D) edge[thick,bend right,below] node{$\ch$} (s6D)
    (s6D) edge[thick,bend right,above] node{$\ch$} (s5D)
    (s6D) edge[thick,loop above] node{$g$} (s6D)
    ;            
  \end{tikzpicture}
  \caption{\lts{} resulting from the application of the encoding in  Definition~\ref{def:SQdCM2LTS} to the symmetric \qdcm{} of Figure~\ref{fig:QdCM2}.}
  \label{fig:LTSQdCM2}
\end{figure}

%% file: fig-QdCM2min.tex
\begin{figure}[tbh]
  \centering
  \begin{tikzpicture}[%
    every state/.append style={inner sep=1pt, minimum size=6mm},
    every loop/.style={min distance=10mm, out=120, in=60, looseness=5}]

    \node[state, label=above:{$\SET{r}$}] (s1) {$C_1$};
    \node[state, label=above:{$\SET{g}$}, right of=s1] (s3) {$C_2$};
    \node[state, label=above:{$\SET{b}$}, right of=s3] (s4) {$C_3$};
    \node[state, label=above:{$\SET{r}$}, right of=s4, xshift=5mm] (s5) {$C_4$};
    \node[state, label=above:{$\SET{g}$}, right of=s5] (s6) {$C_5$};
    \draw   
    (s1) edge [thick,bend right] node{} (s3)
    % (s2) edge [thick,bend right] node{} (s3)
    (s4) edge [thick,bend right] node{} (s3)            
    (s5) edge [thick,bend right] node{} (s6)
    % (s2) edge [thick,bend right] node{} (s1)
    (s3) edge [thick,bend right] node{} (s1)
    (s3) edge [thick,bend right] node{} (s4)            
    (s6) edge [thick,bend right] node{} (s5);            
  \end{tikzpicture}
  \caption{The minimal \emph{symmetric} \cm{} of that in Fig.~\ref{fig:QdCM2}}
  \label{fig:QdCM2min}
\end{figure}

%% file: FeasibilityStudy.tex
\section{Spatial Model Checking of Digital Images via Minimised Models} %\section{Feasibility Study}
\label{sec:FisStudy}

In this section we present the \voxminx\ toolchain for spatial model checking of digital images, that exploits the minimisation procedure described in Section~\ref{sec:translation}.  We first present a high-level overview of the toolchain, followed by a more detailed discussion in subsequent subsections. We then describe an experimental evaluation showing the feasibility of the proposed toolchain and compare its results with that of the spatial model checker \voxlogica\ for a selected set of benchmark images.

%% tikz-toolchainext-ev-2.tex

\newcommand*\circled[1]{\tikz[baseline=(char.base)]{
    \node[shape=circle, draw, minimum size=4.25mm, inner sep=0.25pt]
    (char) {\small \textbf{#1}};}}

\tikzstyle {data} = [chamfered rectangle, draw, minimum height=12mm, minimum width=24mm]
\tikzstyle {comp} = [rectangle, draw, minimum height=12mm, minimum width=24mm]

\begin{figure}[t!]
%\centering
\noindent
\scalebox{0.80}{%
\begin{tikzpicture}[every text node part/.style={align=center}, line width=1pt]
  \node [data, fill=Dandelion!25, draw=Dandelion] (formula) at (1,5) {input \\ formula} ;
  \node [data, fill=Dandelion!25, draw=Dandelion] (image) at (1,3) {input \\ image} ;
  \node [comp, fill=Thistle!25, draw=Thistle] (encoding) at (4,3) {encoding} ;
  \node [comp, rounded corners=2mm, fill=NavyBlue!25, draw=NavyBlue] (ltsconv) at (7,3) {\texttt{ltsconvert}} ;
  \node [comp, rounded corners=2mm, fill=NavyBlue!25, draw=NavyBlue] (ltsinfo) at (7,1) {\texttt{ltsinfo}} ;
  \node [comp, fill=Thistle!25, draw=Thistle] (lts2graph) at (10,3) {lts2graph} ; % shade of purple
  \node [data, fill=orange!25, draw=orange] (graphmodel) at (13,3) {graph\,model} ;
  \node [data, fill=RedOrange!35, draw=RedOrange] (decgraphmodel) at (16,5) {decorated \\ CS\,model} ;
  \node [data, fill=orange!25, draw=orange] (classpixelrel) at (11.5,1) {class/pixel \\ relation} ;
  \node [comp, fill=Thistle!25, draw=Thistle] (graphlogica) at (13,5) {GraphLogica} ;
  \node [comp, fill=Thistle!25, draw=Thistle] (presenter) at (16,1) {presenter} ;
  \node [data, fill=red!25, draw=red] (output) at (16,-1) {output \\ image} ;
  \tikzset{line width=0.5pt, >=stealth'}
  \draw [->] (image) -- (encoding) ;
  \draw [->] (encoding) -- (ltsconv) ;
  \draw [->] (ltsconv) -- (lts2graph) ;
  \draw [->] (ltsconv) -- (ltsinfo) ;
  \draw [->] (lts2graph) -- (graphmodel) ;
  \draw [->] (graphmodel) -- (graphlogica) ;
  \draw [->] (formula) -- (graphlogica) ;
  \draw [->] (graphlogica) -- (decgraphmodel) ;
  \draw [->] (ltsinfo) -- (classpixelrel) ;
  \draw [->] (decgraphmodel) -- (presenter) ;
  \draw [->] (classpixelrel) -- (presenter) ;
  \draw [->] (presenter) -- (output) ;
  \node at ([xshift=-14mm,yshift=+7mm]formula) {\circled{6}} ;
  \node at ([xshift=-15mm,yshift=+7mm]graphlogica) {\circled{7}} ;
  \node at ([xshift=-14mm,yshift=+7mm]decgraphmodel) {\circled{8}} ;
  \node at ([xshift=-13mm,yshift=+7mm]image) {\circled{1}} ;
  \node at ([xshift=-15mm,yshift=+7mm]encoding) {\circled{2}} ;
  \node at ([xshift=-15mm,yshift=+7mm]ltsconv) {\circled{3}} ;
  \node at ([xshift=-15mm,yshift=+7mm]lts2graph) {\circled{4}} ;
  \node at ([xshift=-14mm,yshift=+7mm]graphmodel) {\circled{5}} ;
  \node at ([xshift=-15mm,yshift=+7mm]ltsinfo) {\circled{9}} ;
  \node at ([xshift=-14mm,yshift=+7mm]classpixelrel) {\circled{10}} ;
  \node at ([xshift=-15mm,yshift=+7mm]presenter) {\circled{11}} ;
  \node at ([xshift=-14mm,yshift=+7mm]output) {\circled{12}} ;
\end{tikzpicture}
} %% scalebox
%}
\caption{\voxminx\ toolchain for model checking via model minimisation and projection of results onto the original image. Parts 3 and 9 in blue
  are command line operations of the \mCRLT{} toolset. Parts 2, 4, 7 and 11 in purple are developed in Python in the context of the current
  paper. The orange parts 5, 8 and 10 are intermediate data structures. Parts 1 and 6 in yellow are the input and part 12 in red is the output of the procedure.}\label{fig:toolchainext} 
\end{figure}

\subsection{The \voxminx\ toolchain} \mbox{} \smallskip

\noindent
The major components of the \voxminx\ toolchain are shown in Figure~\ref{fig:toolchainext}. As input \voxminx\ takes a digital image (box~1) and a spatial logic specification based on \icrl\ (box 6). As output, an adaptation of the original image is provided in which the pixels that satisfy the specification are highlighted and the others are shaded. Figure~\ref{fig:mazef1results} shows an example for the verification of an image of a maze of 4096x4096 pixels against the \icrl\ formula
\begin{displaymath}
 \phi^{maze}_1=\lstothru\,\mathit{blue}[\mathit{white}] \land \lstothru\,\mathit{green}[\mathit{white}].
\end{displaymath}
Formula $\phi^{maze}_1$ expresses the fact that a point (pixel) in the image should be white and should be the starting point both of a path passing by white points only reaching a blue point and of a path, passing by white points only, reaching a green point (situated in the upper-left corner of the maze) that represents the exit of the maze. In other words, the formula holds for all white pixels on a path from a blue starting point to the green exit.
 
The original image of the maze is shown at the left in Figure~\ref{fig:mazef1results}; the adapted image resulting from model checking is shown in the middle. In the latter, the white pixels satisfy the formula, whereas the black pixels  do not. At the right in Figure~\ref{fig:mazef1results} an alternative visualisation of the same results are presented in which the pixels satisfying the formula are highlighted and those that do not satisfy the formula are shaded (i.e.\ including the shaded black, green and blue ones). These are only two examples of how the results can be presented. Depending on the application area, other presentations could be more common or adequate. For example, in medical imaging, regions of interest (ROI) in images  
are often shown as a semi-transparent overlay, in a colour of choice, overlapping the original image. 
An example is shown in Figure~\ref{fig:brain}, where the pixels satisfying the logic specification 
that identifies tumour and oedema tissue, are shown in semi-transparent green, whereas the pixels in red are part of the ground truth provided by human experts. Note that the green and red overlays are mostly overlapping and that the differences between the two tumour segmentations are shown as (a small number of) bright red and bright green pixels. The latter feature greatly facilitates the assessment of the quality of the segmentation obtained by spatial model checking with respect to the ground truth. 
Such alternative visual presentations can be easily accommodated by adjusting the back-end software of the toolchain or via the user interface settings of the visualisation tool that has been developed for \voxlogica{}~\cite{strippoli2025}.

%% EV was here, 26 May 2026

\begin{figure}[t!]
        \centering
        \subfloat[][]
        {
                \begin{minipage}{0.3\textwidth}
                        \raisebox{0.385in}{\includegraphics[width=\textwidth]{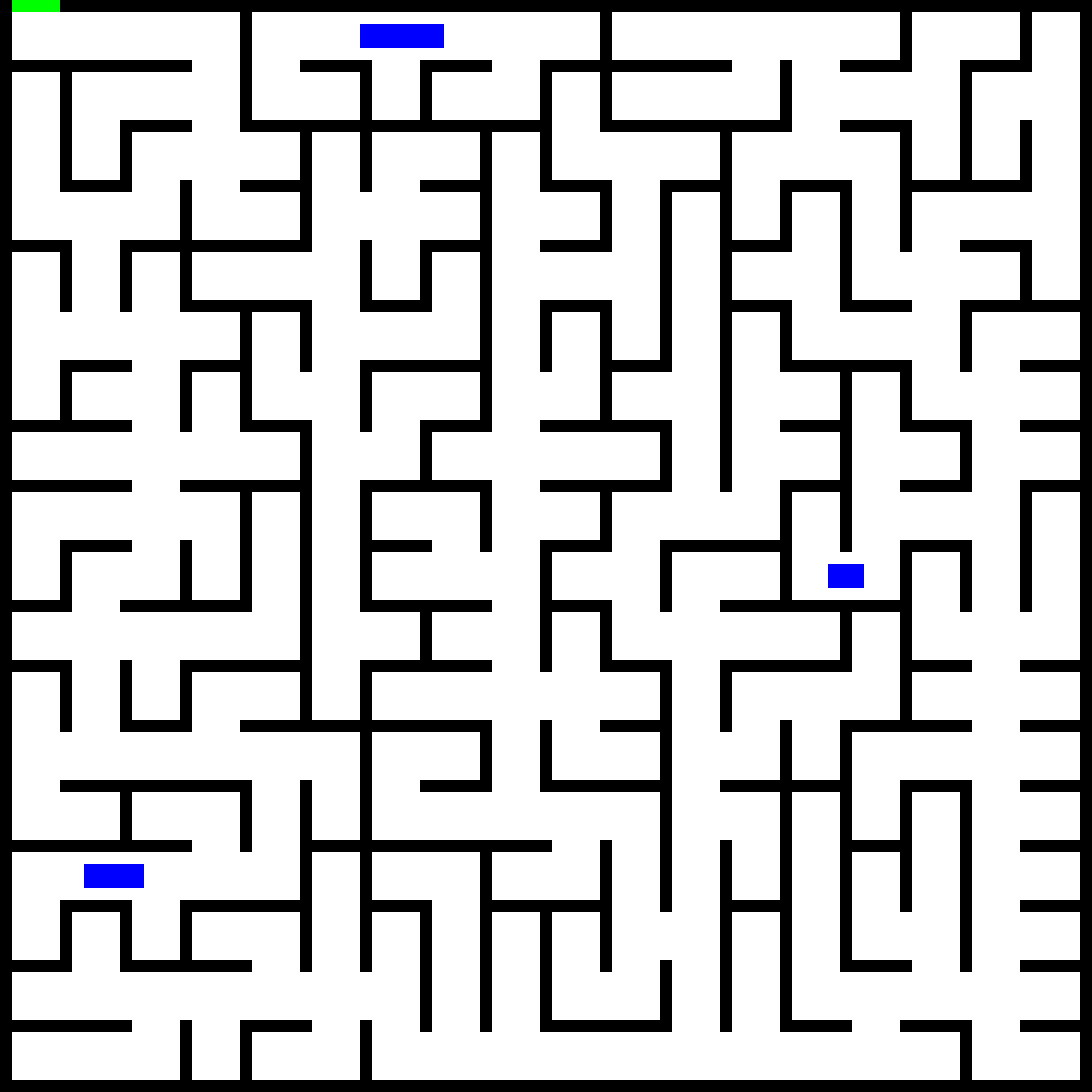}}
                \end{minipage}
                \label{fig:maze4096}
        }\quad
        \centering
        \subfloat[][]
        {\begin{minipage}{0.3\textwidth}
                        \raisebox{0.385in}{\includegraphics[width=\textwidth]{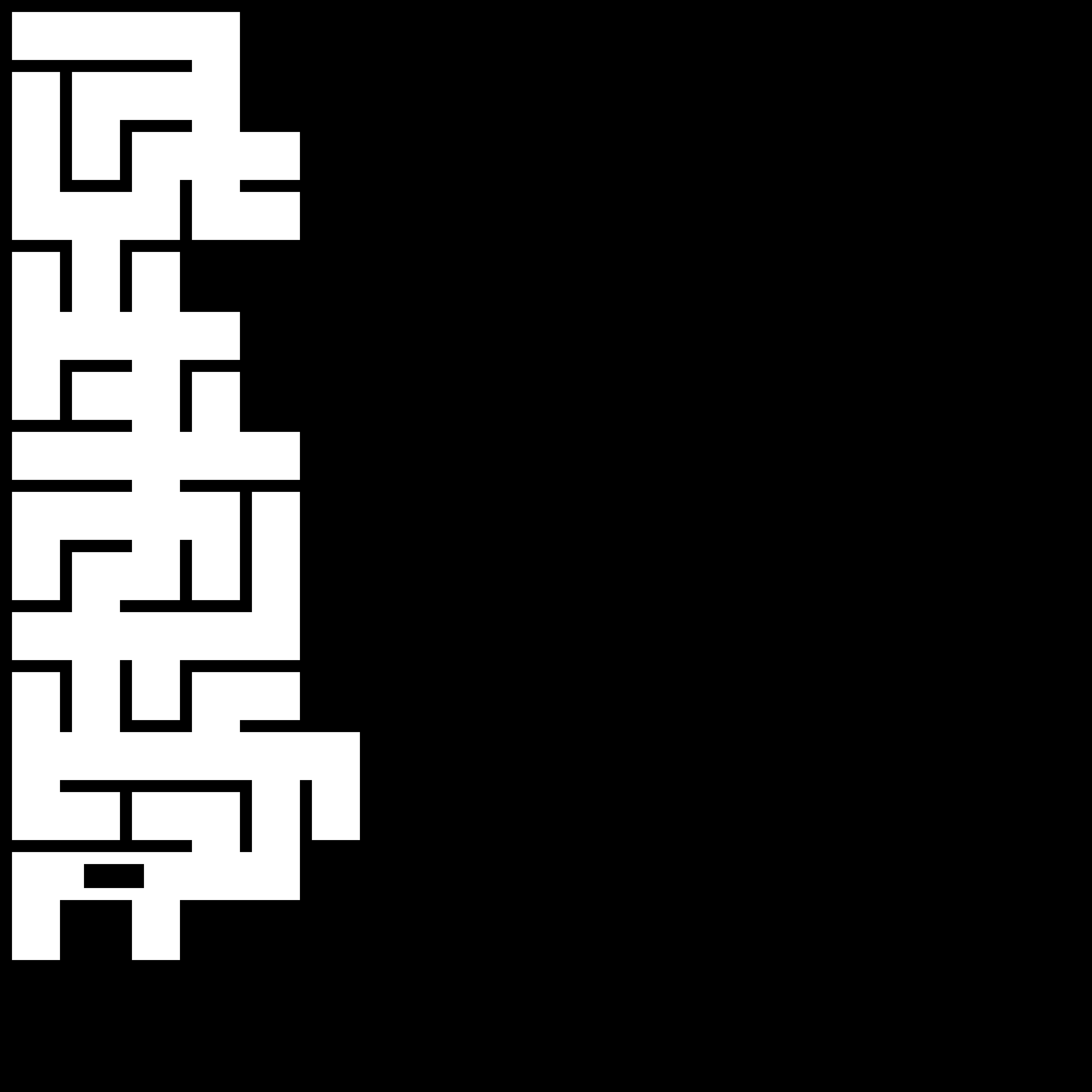}}
                \end{minipage}
                \label{fig:maze4096form1bw}
        }\quad
        \centering
        \subfloat[][]
        {\begin{minipage}{0.3\textwidth}
                        \raisebox{0.385in}{\includegraphics[width=\textwidth]{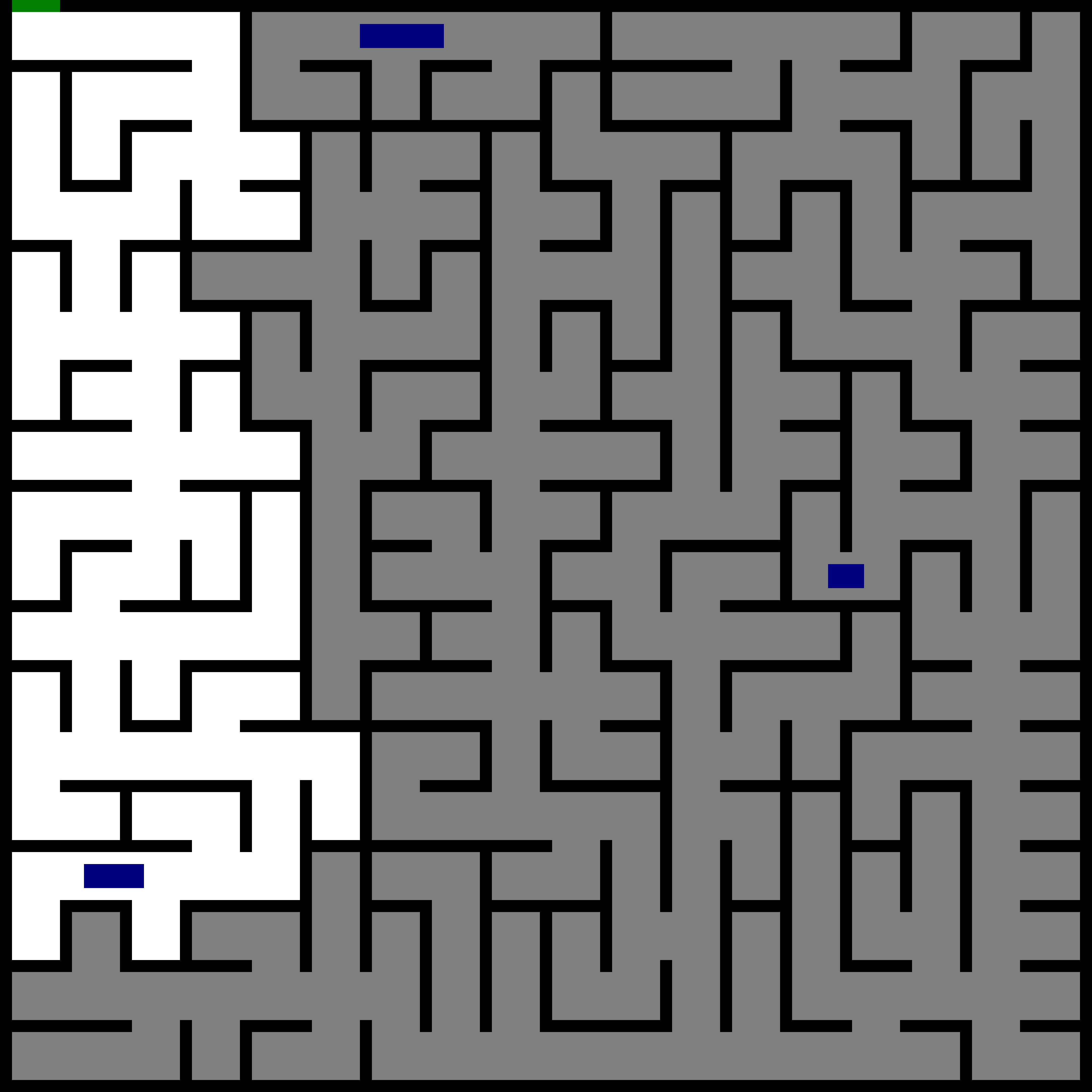}}
                \end{minipage}
                \label{fig:maze4096form1high}}
\caption{Maze example (4096x4096). Figure~\ref{fig:maze4096} original image. Figure~\ref{fig:maze4096form1bw} is showing the model checking results in which pixels satisfying $\phi^{maze}_1$ are shown in white, others are shown in black. Figure~\ref{fig:maze4096form1high} shows an alternative visualisation of the same results where pixels satisfying formula $\phi^{maze}_1$ are highlighted (white), others are grey-shaded. The green exit is situated in the top left corner of the maze in Figure~\ref{fig:maze4096}.}\label{fig:mazef1results}
\end{figure}

{In Figure~\ref{fig:pacman200ghostsresults} a more colourful and complex image is shown of a scene of the well-known video game Pac-Man\footnote{Pac-Man Official Website -- History: \url{ https://pacman.com/en/history/}. Accessed on September 2, 2025.} that is checked against the nested \icrl\ formula 
\begin{equation}
\lstothru\, (\lstothru\, \mathit{blue}
[\mathit{white}])\,[\neg(\mathit{black} \lor \mathit{blue})]
\label{eqn:eyes-formula}
\end{equation}
This formula holds for all the so-called Ghost Monster figures in this picture. Observe that the eyes of the ghosts are composed of a blue and a white part and that each Ghost Monster has at least one eye in which the blue part does not touch the black background but does touch the white part of the eye instead. Such a spatial configuration only occurs in the ghosts, and therefore the formula uniquely identifies them. The original image of the Pac-Man scene is shown on the left; the image in the middle is showing model checking results where pixels satisfying the formula are shown in white and the others in black. An alternative visualisation is shown on the right, where the highlighted pixels (identifying the ghosts) satisfy the formula, and the shaded pixels do not.

\begin{figure}[h!]
        \centering
        \subfloat[][]
        {
                \begin{minipage}{0.3\textwidth}
                        \raisebox{0.385in}{\includegraphics[width=\textwidth]{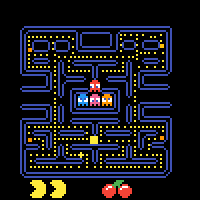}}
                \end{minipage}
                \label{fig:pacman200}
        }\quad
        \centering
        \subfloat[][]
        {\begin{minipage}{0.3\textwidth}
                        \raisebox{0.385in}{\includegraphics[width=\textwidth]{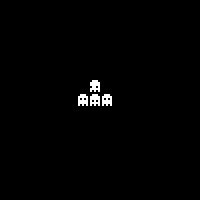}}
                \end{minipage}
                \label{fig:pacman200ghostsbw}
        }\quad
        \centering
        \subfloat[][]
        {\begin{minipage}{0.3\textwidth}
                        \raisebox{0.385in}{\includegraphics[width=\textwidth]{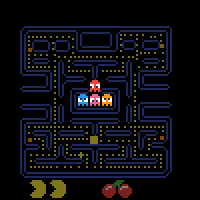}}
                \end{minipage}
                \label{fig:pacman200ghostshigh}}
\caption{Pacman example (200x200 pixels). Figure~\ref{fig:pacman200} original image. Figure~\ref{fig:pacman200ghostsbw} shows pixels satisfying formula {\sf ghosts} in white, others are shown in black. Figure~\ref{fig:pacman200ghostshigh} pixels satisfying formula {\sf ghosts} as highlighted colours, others are shown in shaded colours.}\label{fig:pacman200ghostsresults}
\end{figure}

\bigskip

\noindent Next, let us have a closer look at the other parts of the toolchain in Figure~\ref{fig:toolchainext}. First the input image (box~1) is converted into a \qdcm\ representation and fed through an encoder (box~2), which converts the image into an LTS, implementing the encoding defined in Definition~\ref{def:SQdCM2LTS}. This LTS has one state for each pixel of the image with a self-loop recording the colour of the pixel in its label. The LTS is subsequently minimised with respect to branching bisimilarity using an existing minimisation tool available from the \mCRLT\ tool suite~\cite{Bu+19} (box~3). Care is taken that the resulting equivalence classes, represented by the states of the minimal LTS, remain associated to their respective set of pixels in the original image. The minimal LTS is then transformed into an annotated closure space model representing the minimal \qdcm\ modulo \cop-bisimilarity (box~5). This transformation is the implementation of what is established by Theorem~\ref{prop:logical-minimization}. The closure space model (box~5) and the input formula (box~6) are provided as input to the \graphlogica\ model checker (box~7, discussed below). The model checking result (box~8) is an annotated version of the input closure space model in which the points that satisfy the formula are indicated by a specific label. These points are representatives of equivalence classes of pixels present in the input image. This outcome of the model checker (box~8) is then combined with the information linking the equivalences classes to the sets of pixels they represent (box~10) by the presenter (box~11) and used to produce the adapted input image as output (box~12). We refer the reader interested in further details on each of the steps in the toolchain, and the particular file formats involved, to the small running example presented in Appendix~\ref{apx:running}.

In the previous section, in Figure~\ref{fig:Maze} we have already seen the example of the maze and the minimal models that are obtained applying the general encoding (see Definition~\ref{def:QdCM2LTS}) and the optimised encoding (see Definition~\ref{def:SQdCM2LTS}) followed by branching bisimulation minimisation. Let us here consider a more complex example using \voxminx.
Figure~\ref{fig:pacmin} shows the minimal LTS, obtained after an optimised encoding, for the scene of the Pac-Man game shown in Figure~\ref{fig:pacman200}. This minimal model has 35~states, each representing an equivalence class (of pixels) modulo \icrl. It is not difficult to recognise some groups of states representing specific parts of the image. For example, at the left one observes a group of two white, two red and a green state (states 1, 5, 9, 33,~34). This group represents the two cherries at the bottom of the original Pac-Man image. The single black node in the middle (state~4) represents all pixels of the black background. Note that there is only one such state because \voxminx, like \voxlogica, constructs a regular graph model from a digital image based on an 8-adjacency relationship\footnote{In digital image processing the basic relationships between pixels are the 4-adjacency and the 8-adjacency relationship. The former considers four neighbours of each pixel in the adjacency relationship, namely its left and right neighbours and its neighbours directly above and below it. The latter considers also the four diagonal neighbours of a pixel. This holds for all pixels except for those at the border of the image that have a reduced number of neighbours depending on their position.} between pixels, so also those pixels touching diagonally are considered adjacent in the graph.  Therefore all black pixels are adjacent. The four groups of states at the top of the figure represent the four ghosts: On the left, states 10, 24, 29, 30, 31, and~32 represent the red ghost; next of that group on the right the states 3, 13, and~14 represent the cyan ghost; next of the latter group the states 2, 6, and~8 represent the pink ghost, and on the right the states 7, 11, and~25 represent the orange ghost. Note that each ghost has indeed a blue state (representing blue pixels of the eye) that has no transition to the black state (state~4). We used this observation to find a formula to identify the ghosts. With the red ghost three blue states are associated. Two of them have a transition to the black state (state~4) but they represent different equivalence classes of pixels. State 30 represents blue pixels that are not part of an eye but are actually part of the blue stripes marking the sides of paths in the Pac-Man game that happen to touch the red part of the ghost (situated on top of its head). State~29, instead, represents pixels of the only eye of the red ghost (its right eye) that has a blue part touching the black background. At the bottom of Figure~\ref{fig:pacmin} we can observe the states 12, 21 and~22, that represent a pale orange cookie, coloured slightly different from the other three orange cookies, in the top-right of the original image. That specific cookie does touch a blue pixel (belonging to the border of a path) that, in turn, touches a yellow pellet (diagonally). 
%Recall that \voxmin\ works with an 8-adjacency relationship between pixels. 
The single yellow state (state~0) represents both the pixels of the yellow Pac-Man items at the bottom of the image, and the pixels of most of the yellow pellets. Note that these objects are all surrounded by black pixels belonging to the background. 

\definecolor{myred}{HTML}{F44336}
\definecolor{mypink}{HTML}{F06292}
\definecolor{myviolet}{HTML}{3F51B5}
\definecolor{myorange}{HTML}{FFC107}
\definecolor{myorange2}{HTML}{FF9800}
\definecolor{mygreen}{HTML}{4CAF50}
\definecolor{myyellow}{HTML}{FFEB3B}
%\definecolor{mycursor}{HTML}{c3F51B5}
\definecolor{mycursor}{HTML}{3F51B5}
\definecolor{myblue}{HTML}{2196F3}

\begin{figure}
\scalebox{0.4}{%
\begin{tikzpicture}
  \tikzstyle{state}=[circle, draw, minimum size=8mm, font=\Large]
  \tikzstyle{stateviolet}=[circle, draw, minimum size=8mm, fill=myviolet, text=white, font=\bfseries\Large]
  \tikzstyle{stategreen}=[circle, draw, minimum size=8mm, fill=mygreen, text=white, font=\bfseries\Large]
  \tikzstyle{stateblue}=[circle, draw, minimum size=8mm, fill=myblue, text=white, font=\bfseries\Large]
  \tikzstyle{statepink}=[circle, draw, minimum size=8mm, fill=mypink, text=white, font=\bfseries\Large]
  \tikzstyle{statered}=[circle, draw, minimum size=8mm, fill=myred, text=white, font=\bfseries\Large]
  \tikzstyle{stateyellow}=[circle, draw, minimum size=8mm, fill=myyellow, text=black, font=\bfseries\Large]
  \tikzstyle{statewhite}=[circle, draw, minimum size=8mm, fill=white, text=black, font=\bfseries\Large]

  \tikzstyle{edge}=[thick, <->, double, mark=at position 1 with {\arrow[scale=3,>=stealth]{>}} ]
  \tikzstyle{edgedotted}=[thick, dotted,<->, double, mark=at position 1 with {\arrow[scale=3,>=stealth]{>}}]
%  \tikzstyle{edge}=[thick, <->, decoration={markings,mark=at position 1 with
%    {\arrow[scale=3,>=stealth]{>},\arrow[scale=3,stealth-<]{<}}},postaction={decorate}]
%    \tikzstyle{edge}=[thick, <->, decoration={markings,mark=at position 1 with
%    {\arrow[scale=3,>=stealth]{>}}},postaction={decorate}]
%  \tikzstyle{edgedotted}=[thick, dotted]
  
  % \tikzstyle{initstate}=[state,fill=green]
  % \tikzstyle{inactive}=[dashed]
  % \tikzstyle{prob}=[line width = 2.5pt]
  % \tikzstyle{transition}=[->,arrows={-Stealth[scale=2,inset=1pt]}]
   
%% background black, initial state
\node [state, fill=black, text=white, font=\bfseries\Large] at (-4,10) (state4) {4};

\begin{scope}[shift={(-2,0)}]
%% isolated wall bricks
\node [stateviolet] at (6,7) (state15) {15};
\draw [edgedotted] (state4) -- (state15) ;

%% cursor
\node [stateviolet] at (10,7) (state16) {16};
\node [stateyellow] at (10,13) (state18) {18};

\draw [edgedotted] (state4) -- (state16) ;
\draw [edgedotted] (state4) -- (state18) ;
\draw [edge] (state16) -- (state18) ;

%% pac-man
\node [stateyellow] at (6,13) (state0) {0};
\draw [edgedotted] (state4) -- (state0) ;
\end{scope}

%% cherries
\begin{scope}[shift={(1,0)}]
\node [stategreen] at (-16,10)  (state1) {1};
%% white and red of left cherry (not connected to black)
\node [statewhite] at (-18,7)  (state33) {33};
\node [statered] at (-14,7) (state9) {9};
%% white and red of right cherry (connected to black)
\node [statewhite] at (-18,13)  (state5) {5};
\node [statered] at (-14,13)  (state34) {34};

\draw [edgedotted] (state4) -- (state1) ;
\draw [edgedotted] (state4) -- (state5) ;
\draw [edgedotted] (state4) -- (state9) ;
\draw [edgedotted] (state4) -- (state34) ;
\draw [edge] (state1) -- (state34) ;
\draw [edge] (state1) -- (state5) ;
\draw [edge] (state1) -- (state9) ;
\draw [edge] (state1) -- (state33) ;
\draw [edge] (state1) -- (state34) ;
\draw [edge] (state5) -- (state34) ;
\draw [edge] (state9) -- (state33) ;
\end{scope}

%% wall bricks (connected to ghosts)
\node [stateviolet] (state23) at (-4,16) {23} ;

%% red ghost
\begin{scope}[shift={(1,-1)}]
\node [statered] at (-16,20) (state10) {10};
\node [statewhite] at (-14,23) (state24) {24};
\node [stateviolet] at (-18,17) (state29) {29};
\node [stateviolet] at (-13,20)  (state30) {30};
\node [stateviolet] at (-18,23) (state31) {31};
\node [statewhite] at (-14,17)  (state32) {32};

\draw [edgedotted] (state4) -- (state10) ;
\draw [edgedotted] (state4) -- (state23) ;
\draw [edgedotted] (state4) -- (state24) ;
\draw [edgedotted] (state4) -- (state29) ;
\draw [edgedotted] (state4) -- (state30) ;
\draw [edgedotted] (state4) -- (state32) ;
\draw [edge] (state23) -- (state10) ;
\draw [edge] (state10) -- (state24) ;
\draw [edge] (state10) -- (state29) ;
\draw [edge] (state10) -- (state30) ;
\draw [edge] (state10) -- (state31) ;
\draw [edge] (state10) -- (state32) ;
\draw [edge] (state24) -- (state31) ;
\draw [edge] (state29) -- (state32) ;
\end{scope}

%% blue ghost
\begin{scope}[shift={(0.5,-1)}]
\node [stateblue] at (-10,20) (state3) {3};
\node [stateviolet, state] at (-8,23) (state14) {14};
\node [statewhite] at (-6,20) (state13) {13};

\draw [edgedotted] (state4) -- (state3) ;
\draw [edgedotted] (state4) -- (state13) ;
\draw [edge] (state23) -- (state3) ;
\draw [edge] (state3) -- (state13) ;
\draw [edge] (state3) -- (state14) ;
\draw [edge] (state13) -- (state14) ;
\end{scope}

%% pink ghost
\begin{scope}[shift={(-0.5,-1)}]
\node [statepink] at (-2,20) (state2) {2};
\node [statewhite] (state6) at (2,20) {6};
\node [stateviolet] at (0,23) (state8) {8};

\draw [edgedotted] (state4) -- (state2);
\draw [edgedotted] (state4) -- (state6);
\draw [edgedotted] (state4) -- (state2);
\draw [edge] (state2) -- (state6);
\draw [edge] (state2) -- (state8);
\draw [edge] (state6) -- (state8);
\end{scope}

%% orange ghost
\begin{scope}[shift={(-2,-1)}]
\node [fill=myorange2, state] at (6,20) (state11) {11};
\node [stateviolet] at (8,23) (state25) {25};
\node [statewhite] at (10,20)  (state7) {7};

\draw [edgedotted] (state4) -- (state7) ;
\draw [edgedotted] (state4) -- (state11) ;
\draw [edge] (state23) -- (state11) ;
\draw [edge] (state11) -- (state7) ;
\draw [edge] (state11) -- (state25) ;
\draw [edge] (state7) -- (state25) ;
\end{scope}

%% orange block (top-left)
\begin{scope}[shift={(2,1)}]
\node [state, fill=myorange2] at (-18,3) (state20) {20} ;
\node [stateviolet] at (-16,0) (state26) {26};

\draw [edgedotted] (state4) -- (state20) ;
\draw [edgedotted] (state4) -- (state26) ;
\draw [edge] (state20) -- (state26) ;
\end{scope}

%% isolated orange block (bottom-left)
\begin{scope}[shift={(1,1)}]
\node [fill=myorange2, state] at (-10,3) (state19) {19} ;
\draw [edgedotted] (state4) -- (state19) ;
\end{scope}

%% orange block (bottom-right)
\begin{scope}[shift={(-1,1)}]
\node [state, fill=myorange2] at (-2,3) (state17) {17} ;
\node [stateviolet] at (0,0) (state27) {27} ;
\node [stateyellow] at (2,3) (state28) {28} ;

\draw [edgedotted] (state4) -- (state17) ;
\draw [edgedotted] (state4) -- (state27) ;
\draw [edgedotted] (state4) -- (state28) ;
\draw [edge] (state17) -- (state27) ;
\draw [edge] (state27) -- (state28) ;
\end{scope}

%% light orange block top-right
\begin{scope}[shift={(-2,1)}]
\node at (6,3) [state, fill=myorange] (state12) {12};
\node at (8,0) [stateviolet] (state21) {21};
\node [stateyellow]at (10,3)  (state22) {22};

\draw [edgedotted] (state4) -- (state12) ;
\draw [edgedotted] (state4) -- (state21) ;
\draw [edgedotted] (state4) -- (state22) ;
\draw [edge] (state12) -- (state21) ;
\draw [edge] (state21) -- (state22) ;
\end{scope}

\end{tikzpicture}
} %% scalebox
\caption{Minimal LTS of the Pac-Man image shown in Figure~\ref{fig:pacman200}. State colours correspond to colours in the image and states are numbered for convenience of reference. Each state corresponds to an equivalence class. There are 35 classes. The class in the centre (state 4) is black and corresponds to the background of the image. The label denoting `change' at all transitions, as well as all self-loops and related labels have been omitted. The latter labels have been replaced by colouring the states themselves instead. Furthermore, the states have been numbered for easy reference.}\label{fig:pacmin}
\end{figure}

\subsection{Experimental Evaluation on Digital Images}
\mbox{} \smallskip \\
\noindent
In this subsection we present an experimental evaluation of \voxminx\
on three sets of digital images to provide insight in the feasibility and performance of the spatial model checking approach exploiting label transition systems and their minimization. 
In particular, we first provide a more detailed description of the tools that form part of the \voxminx\ toolchain and of others that are used in the experiments. 
Then we present the setup of the experimental evaluation of \voxminx, including the equipment used, the test images, and the logical specifications used for each test image. 
Section~\ref{sec:vmx} presents the performance results, also comparing those obtained with \voxminx\ and those obtained with \voxlogica.

\smallskip

The following tools are involved in the experiments: 
On the one hand \voxminx, which includes \graphlogica\ and \mCRLT,  and on the other hand \voxlogica. 
The spatial model checker \graphlogica\ is a model checker that can be
applied to general graphs, but it is less suitable for analysis of digital images as it has not been optimised for their specific models. 
\voxlogica\ is a state-of-the-art spatial model checker, used to perform spatial model checking of full digital images, i.e.\ without use of minimisation.  
Therefore we use \voxlogica\ to compare the model checking results with those of \voxminx. 
Differently from \graphlogica, the \voxlogica\ model checker is not dealing with general graphs but specialised for images. 
We briefly describe some of the main features of these tools below. 

\emph{The \graphlogica\ model checker.}
In \voxminx\ we employ the spatial model checker \graphlogica. 
It implements a basic spatial model checking algorithm for the finitary fragment of \icrl\ on general graphs, based on the analysis of connected components for \cm{s} which are not necessarily symmetric. 
The reachability operators of \icrl\ are provided explicitly by the specification language \grql\ of \graphlogica\ (\grql\ is short for Graph Query Language). 
In~\grql, one or more logical formulas can be specified that are to be checked against the input model. 
When multiple formulas are given, they are checked one after the other in the same model checking session. 
Moreover, \graphlogica\ provides the implementation of the encodings presented in Section~\ref{sec:translation}.

\emph{The \mCRLT\ toolset.} The \mCRLT\ toolset~\cite{GrooteM2014,Bu+19} is a collection of tools for the analysis of models of distributed and concurrent systems. 
Its process language is based on the Algebra of Communicating Processes~\cite{BaW1990}, its property specification language is based on the modal $\mu$-calculus~\cite{BS07:handbook}}. 
The toolset includes a temporal model checker to formally verify and analyse models of complex software and protocols. 
It also includes operations for the minimisation of models based on various types of bisimilarity equivalences, in particular, the branching bisimilarity minimisation using the optimized algorithm proposed in~\cite{Gr+17}.  

\emph{The \voxlogica\ model checker.}
In~\cite{Be+19,BelmonteCM25} the \voxlogica\ model checker an
implementation of the model checking algorithm is presented. 
\voxlogica\ is specialised for digital images.
In particular, it employs efficient procedures\footnote{In particular, this implementation exploits very efficient algorithms from the Insight Tool Kit  (\textsf{ITK}) via the \textsf{SimpleITK} glue, see \url{https://itk.org} and \url{http://www.simpleitk.org} library~\cite{LowekampCIB13,McC+14} specially designed for (medical) digital image processing.} for the analysis of connected components. 
The tool \voxlogica\ implements, among others, a spatial model checking procedure for a logical operator for conditional reachability, included in \imgql, the specification language of \voxlogica. 

\smallskip

In order to measure the model checking speed-up of the \voxminx\ toolchain, with respect to direct spatial model checking of the full original images, we use \graphlogica\ for checking the minimal model, and \voxlogica\ to check the full model\footnote{We underline again that \voxlogica{} is inherently much faster than \graphlogica{} as it is specialised for images, exploiting state-of-the-art imaging libraries and automatic parallelisation. This poses a further challenge to the speed-up via minimisation and is the reason why we use~\voxlogica{} instead of \graphlogica{} for the full model. }.

The tests have been performed on a MacBook Pro equipped with an Apple M2~Pro processor and 32GB of~RAM running macOS Sequoia~15.7. 
The \mCRLT\ toolset version mcrl2-202507.0.66927898fc\_arm64 was used. 
In this version \mCRLT\ uses a non-recursive procedure to find strongly connected components that does not require a large stack. %~\cite{Wieger Wesselink et al}.
Full data, source code and tools needed to reproduce the maze and monoscope experiments can be found in the Zenodo repository~\cite{Zenodo}.\footnote{The results are currently available at a public repository on GitHub at: \\\url{https://github.com/VoxLogicA-Project/VoxMinX-Validation}\\ The Zenodo repository will be produced for the final version of the paper based on the GitHub repository.}

For the experimental evaluation we have used a benchmark consisting of three families of images: a family of maze images (see Figure~\ref{fig:Maze}), a family of monoscope image (see Figure~\ref{fig:monoscope}), and a family of images of a snapshot of the Pac-Man game (see Figure~\ref{fig:pacman200}). 
Each family consists of a benchmark image that has been rescaled at various resolutions. 
The names of the images in each family are composed of their name (\textsf{maze}, \textsf{mono},~\textsf{pm}) followed by a numeric indication of the vertical resolution of each image. 
The maze and the Pac-Man images are square, therefore their horizontal resolution coincides with their vertical one. 
For example, \textsf{maze-1024} is a {\sf png} image of the maze of 1024~pixels wide and 1024~pixels high. 
The monoscope image has a $16{:}9$~ratio, thus, e.g., the horizontal resolution of \textsf{mono-1080} is $1920$ pixels.

\begin{figure}[t!]
        \centering
        \includegraphics[width=.45\textwidth]{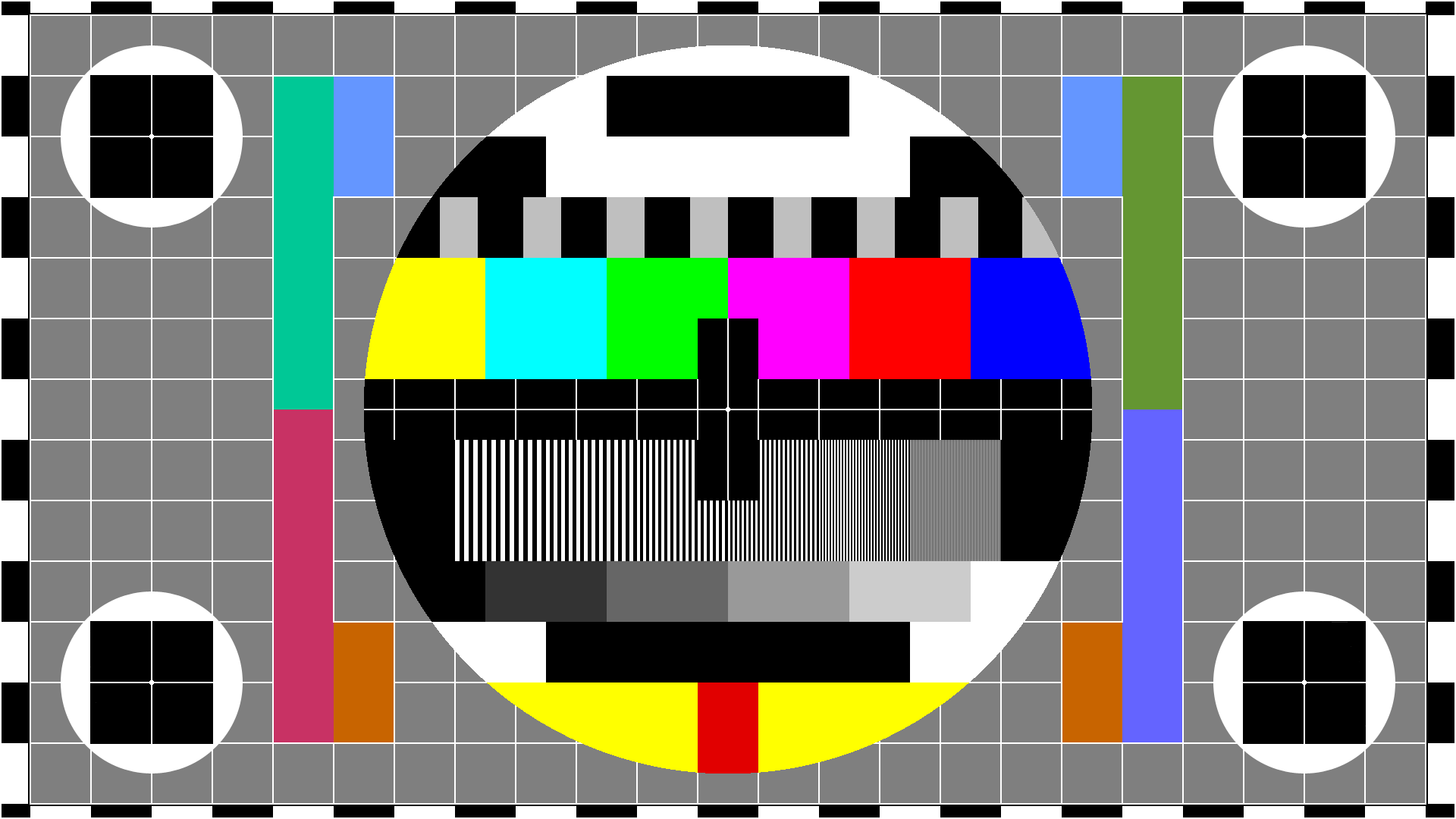}
        \caption{\label{fig:monoscope}Monoscope test pattern Philips PM5544}
\end{figure}
 
For the experiment with the \emph{maze image}, the property
specification consists of the three reachability formulas
that identify
\begin{enumerate}
\item the white points from which both a blue point and a green point can be reached (in other words, the white paths connecting blue points to the green exit)
  \begin{displaymath}
    \varphi^{\mkern1mu maze}_1=
    \lstothru\, \mathit{green}[\mathit{white}] \land
    \lstothru\, \mathit{blue} [\mathit{white}]
  \end{displaymath}
\item  the blue points from which there is \emph{no} white path to the green exit 
  \begin{displaymath}
    \varphi^{\mkern1mu maze}_2=
    \mathit{blue} \land
    \neg \lstothru\,(\lstothru\,\mathit{green}[\mathit{white}])[\mathit{blue}]
  \end{displaymath}
\item the blue points from which, instead, an exit \emph{can} be reached 
  \begin{displaymath}
    \varphi^{\mkern1mu maze}_3= \lstothru\, (
    \lstothru\, \mathit{green}[\mathit{white}] \land
    \lstothru\, \mathit{blue} [\mathit{white}]
    ) [\mathit{blue}] 
  \end{displaymath}
\end{enumerate}
Note that it holds that $\varphi^{\mkern1mu maze}_3 = \lstothru\, \varphi^{\mkern1mu maze}_1  [\mathit{blue}]$.

For the \emph{monoscope image}, the property specification is slightly artificial as it has been designed to be increasingly demanding in terms of computation time (caused by the nesting of sub-formulas). 
A single property~$\varphi^{\mkern1mu mono}$ is used in the experiments,
that characterises the points from which very specific paths start, crossing a number of different colours in a specific order, using $16$ nested reachability constraints, of the form 
\begin{displaymath}
  \varphi^{\mkern1mu  mono} =
  \lstothru ( \lstothru ( \lstothru( \ldots )
  [\mathit{green}]) [\mathit{cyan}]) [\mathit{yellow}]
\end{displaymath}

For the \emph{Pac-Man scene}, the logic specification illustrates how several objects of interest can be identified in the Pac-Man scene. 
These are the eyes of the ghosts, the cherries, the pellets, and the ghosts themselves, exploiting the various colours of the pixels and their relationship with other pixels. 
For the \emph{Pac-Man scene}, the property specification consists of
the reachability formulas $\varphi^{\mkern2mu pm}_1$ to~$\varphi^{\mkern2mu pm}_4$ below.
\begin{enumerate}
\item The eyes of the ghosts can be uniquely identified by white pixels through which blue pixels can be reached. 
  This can be expressed via the formula
  $$\varphi^{\mkern2mu pm}_1 =  \lstothru\,\mathit{blue}[\mathit{white}] $$
  Formula~$\varphi^{\mkern2mu pm}_1$ is also discussed in Equation~(\ref{eqn:eyes-formula}) above.
\item The cherries are uniquely identified by red, white, or green pixels through which green pixels can be reached. 
This can be expressed via formula
\begin{displaymath}
  \varphi^{\mkern2mu pm}_2 =
  \lstothru\,\mathit{green}
  [(\mathit{red} \lor \mathit{white} \lor \mathit{green})]
\end{displaymath}
\item Pellets are the four small orange squares.\footnote{One of the pellets in Figure~\ref{fig:pacman200} has a somewhat fainter orange colour. We will abstract from this here and consider them all the orange colour.} 
  They are not the only orange pixels, because there is also an orange ghost. To distinguish the orange pixels of pellets, we require that they are not those via which one can reach the eyes of the orange ghost. 
  This can be expressed via formula
  \begin{displaymath}
    \varphi^{\mkern2mu pm}_3 =
    \mathit{orange} \land
    \neg \lstothru\,\varphi^{\mkern2mu pm}_1[\mathit{orange}]
  \end{displaymath}
\item The ghosts themselves are uniquely identified by the colour of their pixels (one red, one cyan, one orange, and one pink, which is not black or blue) and the fact that they all have at least one white with blue eye that is not touching the black background. 
  This can be expressed via the formula
  \begin{displaymath}
    \varphi^{\mkern2mu pm}_4 =
    \lstothru\, \varphi^{\mkern2mu pm}_1
    [(\neg (\mathit{black} \lor \mathit{blue}))]
  \end{displaymath}
\end{enumerate}

% Maze specification:
% let form1 = through(N b,w) & through(N g,w)
% let form2 = through(N (w & !through(N g,w)),b) & through(N g,w)
% let form3 = through(N form1,b)
% Monoscope specification (the infix Z has swapped arguments...): 
% y Z c Z g Z m Z r Z b Z gr Z bl Z w Z gr Z bl Z y

\subsection{Performance Results}\label{sec:vmx}

Table~\ref{tab:resultsTable2mazeMacArm64scc}, Table~\ref{tab:resultsTable2monoMacArm64scc}, and Table~\ref{tab:resultsTable2PacManMacArm64scc} report the \voxminx\ results for each test image, for the logical properties specified earlier. 
We have run all the phases of our experiment for each image, also in the cases that produce the same minimal models, for simplicity of the set-up.

Each table shows the results for a specific image and its rescaled versions. Table~\ref{tab:resultsTable2mazeMacArm64scc} shows the results for the maze. 
The first column shows the name of the image, where the number attached to the name indicates the vertical resolution of the image in terms of the number of pixels. 
The second and third column show the time in seconds needed to perform the encoding, the former shows the pure computation time, whereas the latter includes the time for writing the result to file (i.e.\ including IO). 
The fourth, fifth, and sixth columns provide the number of states, transitions, and the size of the full model of the image (i.e.\ before minimisation), respectively. 
Columns seven to eleven provide, respectively, the minimisation time, without and with IO, the number of states of the minimal model, the number of transitions and the time to translate the results back to the original image. 
The last three columns of the table provide the pure model checking time of the full model (performed with \voxlogica), the model checking time of the minimal model (performed with \graphlogica), and the speed-up (gain) in model checking time obtained when using the minimal model compared to model checking the full model, respectively. 
All times are in seconds, rounded to two decimals.

\begin{table}%[h!]
       % \centering
        \scriptsize
        \hspace{-4em}
        %\rotatebox{90}{
        \begin{filecontents*}{experiments/macos_results_arm64_mCRL2sccnew/result_maze_newToolchain/results-table_new.csv}
label,conversionimg2aut,convWIOfull,pixels,transitions,autSizeH,min_computation,minWIO,statesMin,transitionsMin,convertBack,modelCheckingFull,modelCheckingMin,speedupMC
maze-128,0.18,0.21,16.00 K,142.50 K,2.47 MB,0.00,0.11,7,21,0.20,0.72,0.22,3.33
maze-256,0.18,0.24,64.00 K,573.00 K,10.35 MB,0.02,0.38,7,21,0.24,0.71,0.22,3.30
maze-512,0.18,0.39,256.00 K,2.24 M,44.55 MB,0.11,1.66,7,21,0.42,0.72,0.22,3.33
maze-1024,0.19,0.86,1.00 M,8.99 M,184.34 MB,0.48,7.71,7,21,1.13,0.75,0.22,3.48
maze-2048,0.24,2.88,4.00 M,35.98 M,793.73 MB,2.32,33.98,7,21,4.59,0.94,0.22,4.35
maze-4096,0.45,10.69,16.00 M,143.95 M,3.27 GB,9.89,145.63,7,21,19.79,2.46,0.22,11.03
maze-8192,0.97,42.35,64.00 M,575.91 M,13.63 GB,51.90,599.01,7,21,80.82,3.85,0.22,17.77
\end{filecontents*}
\pgfplotstabletypeset[
                % every even column/.style={
                %                 column type/.add={>{\columncolor[gray]{.8}}}{}
                % },
                columns/pixels/.style={column name = states},
                columns/transitions/.style={column name = trans.},
                columns/autSizeH/.style={column name = aut file size},
                %columns/conversion/.style={column name = time},
                %columns/img2aut/.style={column name = time},
                columns/conversionimg2aut/.style={column name = time},
                columns/convWIOfull/.style={column name = t.w.IO},
                columns/statesMin/.style={column name = stat.},
                columns/transitionsMin/.style={column name = trans.},
                columns/min_computation/.style={column name = time},
                columns/minWIO/.style={column name = t.w.IO},
                columns/convertBack/.style={column name = t.back},
                %columns/modelCheckingMin/.style={column name = ch. min},
                columns/modelCheckingMin/.style={column name = t.min},
                %columns/modelCheckingFull/.style={column name = ch. full},
                columns/modelCheckingFull/.style={column name = t.full},
                columns/computation/.style={column name = total},
                columns/speedupMC/.style={column name = gain},
                %columns/total/.style={column name = total},
                col sep=comma,
                string type,
                every column/.style={column type={|r}},
                every last column/.style={column type={|r|}},
                columns/label/.style={column type={|l},column name = name},
                %every last column/label/.style={column type={|l|}}
                every head row/.style={before row=\hline % \toprule
                                & 
                                \multicolumn{2}{c}{Encoding} & %{Conversion} &   
                                \multicolumn{3}{|c}{Full model} & 
                                \multicolumn{5}{|c}{Minimisation} &
                                \multicolumn{3}{|c|}{Model checking} 
                                % &
                                % \multicolumn{1}{|c|}{}                            
                                \\ \hline,
                                after row=\hline},
                every even row/.style={before row={\rowcolor{lightgray!30!white}}},                
                every last row/.style={after row=\hline},
                %outfile=tmp-results-table.tex
        ]{experiments/macos_results_arm64_mCRL2sccnew/result_maze_newToolchain/results-table_new.csv} 
        %} % rotate
        \vspace{6pt}
        \caption{%(Macos\_arm64\_NewSCC) 
        Results for the maze case study. All times are in seconds, rounded to two decimals. From left to right: encoding time for the image ({\sf  png} file) into automata format ({\sf aut} file) conversion, without and with I/O; number of states, transitions, and {\sf aut} file size of full model; minimisation time, without and with I/O; number of states and transitions of minimal \lts; time to convert the minimal ({\sf aut} file) model into a \qdcm\ ({\sf json} file); time for model checking the full model with \voxlogica{}, and the minimal model with \graphlogica{}; model checking speed-up.
        %; sum of times for conversion, minimization, backwards conversion, and model checking the minimal model.
        }\label{tab:resultsTable2mazeMacArm64scc}
\end{table}

The obtained speed-up (last column) is noteworthy, ranging from~3 to more than 17~times faster model checking when using minimised models, especially for the larger images, as shown in the last lines in the right-most column (gain) in the Table~\ref{tab:resultsTable2mazeMacArm64scc}. 
Note that the minimal model has the same number of states and transitions for each of the scaled images. 
This is as expected, as the size of each element in the images does not matter for \cop-bisimilarity, but only the (conditional) reachability between elements of different colour as established by the spatial logic. 
The processing times to produce the encoding of the full model increase with the size of the input image. 
Also the generation of the intermediate result files in the tool chain is rather large and requires time to be produced.
This is an aspect that can be overcome in a subsequent implementation of the toolchain in which intermediate results are stored internally.
The current experimental toolchain was built with the aim to first get insight in the potential gains of the model checking times. 
Note, however, that the experimental version of \voxminx\ is nevertheless already able to handle images of considerable size, i.e.\ 64M~pixels.
The time to translate the model checking results obtained from the minimal model back to the full image is not constant, but varies with the size of the input image. 
This is because this translation has to take the equivalence classes of pixels into account, which tend to be larger in larger images, thus requiring more processing time.

Similar observations can be made for the results for the more involving monoscope image shown in Table~\ref{tab:resultsTable2monoMacArm64scc}, that has the same structure as Table~\ref{tab:resultsTable2mazeMacArm64scc}. 
In this case, a single property is evaluated that is characterised by a deep nesting of conditional reachability operators. 
For the largest of this set of images the speed-up is more than 25~times that of the model checking time for the full (i.e.\ non-minimised) version. 
This more complex image leads to minimal models with a considerable number of states, ranging from~155 for the smallest image, to~945 for the largest image. 
The reason that for this image the minimal models are not all identical is that downscaling of the image reduces the number of details that can be distinguished (e.g.\ some thin lines simply disappear). 
This leads to a smaller number of equivalence classes for the smaller images. 
It also explains why for the largest model, with more details, the
model checking time of the full model is significantly higher, while there is an only limited increase in the model checking time of the minimal model.

\begin{table}%[h!]
       % \centering
        \scriptsize
        \hspace{-4em}
        %\rotatebox{90}{
        \begin{filecontents*}{experiments/macos_results_arm64_mCRL2sccnew/result_mono_newToolchain/results-table_new.csv}
label,conversionimg2aut,convWIOfull,pixels,transitions,autSizeH,min_computation,minWIO,statesMin,transitionsMin,convertBack,modelCheckingFull,modelCheckingMin,speedupMC
mono-130,0.18,0.21,30.47 K,272.05 K,4.83 MB,0.01,0.22,155,899,0.22,0.76,0.25,3.01
mono-260,0.26,0.28,121.88 K,1.07 M,20.27 MB,0.05,0.74,315,1841,0.40,1.06,0.28,3.78
mono-540,0.19,0.54,506.25 K,4.44 M,90.33 MB,0.20,3.31,460,2766,0.69,0.87,0.28,3.05
mono-1080,0.21,1.49,1.98 M,17.78 M,384.28 MB,0.97,14.57,945,6965,2.31,1.26,0.33,3.84
mono-2160,0.28,5.36,7.91 M,71.16 M,1.55 GB,3.79,63.66,945,6965,9.50,2.67,0.33,8.15
mono-4320,0.58,21.13,31.64 M,284.70 M,6.65 GB,15.71,258.92,945,6965,42.15,8.28,0.33,25.21
\end{filecontents*}
\pgfplotstabletypeset[
                % every even column/.style={
                %                 column type/.add={>{\columncolor[gray]{.8}}}{}
                % },
                columns/pixels/.style={column name = states},
                columns/transitions/.style={column name = trans.},
                columns/autSizeH/.style={column name = aut file size},
                %columns/conversion/.style={column name = time},
                %columns/img2aut/.style={column name = time},
                columns/conversionimg2aut/.style={column name = time},
                columns/convWIOfull/.style={column name = t.w.IO},
                columns/statesMin/.style={column name = stat.},
                columns/transitionsMin/.style={column name = trans.},
                columns/min_computation/.style={column name = time},
                columns/minWIO/.style={column name = t.w.IO},
                columns/convertBack/.style={column name = t.back},
                %columns/modelCheckingMin/.style={column name = ch. min},
                columns/modelCheckingMin/.style={column name = t.min},
                %columns/modelCheckingFull/.style={column name = ch. full},
                columns/modelCheckingFull/.style={column name = t.full},
                columns/computation/.style={column name = total},
                columns/speedupMC/.style={column name = gain},
                %columns/total/.style={column name = total},
                col sep=comma,
                string type,
                every column/.style={column type={|r}},
                every last column/.style={column type={|r|}},
                columns/label/.style={column type={|l},column name = name},
                %every last column/label/.style={column type={|l|}}
                every head row/.style={before row=\hline % \toprule
                                & 
                                \multicolumn{2}{c}{Encoding} & %{Conversion} &   
                                \multicolumn{3}{|c}{Full model} & 
                                \multicolumn{5}{|c}{Minimisation} &
                                \multicolumn{3}{|c|}{Model checking} 
                                % &
                                % \multicolumn{1}{|c|}{}                            
                                \\ \hline,
                                after row=\hline},
                every even row/.style={before row={\rowcolor{lightgray!30!white}}},                
                every last row/.style={after row=\hline},
                %outfile=tmp-results-table.tex
        ]{experiments/macos_results_arm64_mCRL2sccnew/result_mono_newToolchain/results-table_new.csv}
        %}%rotate
        \vspace{6pt}
        \caption{%(Macos\_arm64\_NewSCC)
        Results for the monoscope case study. All times are in seconds, rounded to two decimals. From left to right: encoding time for the image ({\sf  png} file) into automata format ({\sf aut} file) conversion, without and with I/O; number of states, transitions, and {\sf aut} file size of full model; minimisation time, without and with I/O; number of states and transitions of minimal \lts; time to convert the minimal ({\sf aut} file) model into a \qdcm\ ({\sf json} file); time for model checking the full model with \voxlogica{}, and the minimal model with \graphlogica{}; model checking speed-up.
        %; sum of times for conversion, minimization, backwards conversion, and model checking the minimal model.
        }\label{tab:resultsTable2monoMacArm64scc}
\end{table}

For what concerns the results for the Pac-Man scene, shown in Table~\ref{tab:resultsTable2PacManMacArm64scc}, rescaling of the image did not change the structure of the image. 
Therefore, in this case, the number of states (and transitions) of the minimal model are the same for all cases, i.e.\ 35~states and 155~transitions. 
The speed-up that is obtained is somewhat smaller compared to that found for the other two cases. 
However, it is still 7~times faster to perform model checking on the minimal model, compared to model checking the full model.

Regarding this experimental evaluation, it is noted that the \voxlogica\ model checker is highly optimised for images, whereas the \graphlogica\ variant is working on general graphs and more of an experimental, and less optimised, nature. 
So the results on speed-up of the model checking times are actually underestimated. Furthermore, once a minimal model has been generated, it can be used for many model checking sessions, which is increasing the advantage considering computation time. 

The plots in Figure~\ref{fig:plotAllVMX} provide a visual presentation of some of the numbers in the previous tables. 
Figure~\ref{fig:statesspeedVMX} shows how the speed-up is related to the size of an image, for the maze (blue), the monoscope (red), and the Pac-Man scene (brown), respectively. 
For the maze, the increase in speed-up and the increase in model checking time of the full model appears not to be linear. 
Presumably, full model checking is relatively fast for the larger models, probably due to some possibilities to reuse partial results in the optimisation of the model checking procedure of \voxlogica. 
Figure~\ref{fig:statesFullMCVMX} shows how the full model checking time relates to the size of the images. 
It shows a similar pattern as in Figure~\ref{fig:statesspeedVMX}. 
Finally, in Figure~\ref{fig:statesMinVMX} it is shown how the minimisation time relates to the size of the images. 
This seems to follow essentially a linear pattern for each case. 

\smallskip

\noindent
In this section we have studied spatial model checking for digital images using model minimisation based on \cop-bisimilarity. 
We have used \voxminx\ on several types of images, ranging from the synthetic 2D maze example to images from the `real world' such as the Pac-Man scene and the monoscope example. 
The analysis shows promising results for what concerns the speed-up in model checking time that can be obtained using minimal models. 
Moreover, it also shows that the approach is applicable to images of a size that is easily found in various application areas, for example, the largest maze consists of 64M pixels. 
At the same time, it also shows that a further integration of the tools that the \voxminx\ toolchain is composed of would be very beneficial to reduce resources needed for the various transformations. 
We envision that such improvements would facilitate the wider applicability of the method. 
In particular, we are interested in its application in the domain of medical imaging. \voxlogica\ has already been applied to that area, see for example~\cite{BelmonteCM25}, where the contouring of brain tumours and the segmentation of white and grey matter of the brain were addressed. 
However, the method could find its way to other medical applications and many other domains in which spatial analysis is of interest~\cite{CianciaLM25,BezhanishviliBCGJLMV26,ACLM2026,BelmonteCLM26}.

\begin{table}%[h!]
       % \centering
        \scriptsize
        \hspace{-4em}
        %\rotatebox{90}{
        \begin{filecontents*}{experiments/macos_results_arm64_mCRL2sccnew/result_pacman_newToolchain/results-table_new.csv}
label,conversionimg2aut,convWIOfull,pixels,transitions,autSizeH,min_computation,minWIO,statesMin,transitionsMin,convertBack,modelCheckingFull,modelCheckingMin,speedupMC
pm-200,0.18,0.22,39.06 K,349.22 K,6.29 MB,0.02,0.26,35,155,0.22,0.71,0.22,3.25
pm-400,0.18,0.30,156.25 K,1.37 M,26.56 MB,0.08,0.95,35,155,0.33,0.73,0.22,3.30
pm-600,0.18,0.46,351.56 K,3.08 M,61.97 MB,0.17,2.40,35,155,0.51,0.75,0.22,3.37
pm-800,0.18,0.63,625.00 K,5.48 M,111.49 MB,0.31,4.57,35,155,0.79,0.75,0.22,3.39
pm-1000,0.19,0.82,976.56 K,8.57 M,175.13 MB,0.46,7.21,35,155,1.12,0.77,0.22,3.49
pm-2000,0.23,2.68,3.81 M,34.31 M,756.49 MB,1.92,31.48,35,155,4.46,0.96,0.22,4.34
pm-4000,0.38,10.30,15.26 M,137.28 M,3.11 GB,7.93,134.44,35,155,18.83,1.62,0.22,7.34
\end{filecontents*}
\pgfplotstabletypeset[
                % every even column/.style={
                %                 column type/.add={>{\columncolor[gray]{.8}}}{}
                % },
                columns/pixels/.style={column name = states},
                columns/transitions/.style={column name = trans.},
                columns/autSizeH/.style={column name = aut file size},
                %columns/conversion/.style={column name = time},
                %columns/img2aut/.style={column name = time},
                columns/conversionimg2aut/.style={column name = time},
                columns/convWIOfull/.style={column name = t.w.IO},
                columns/statesMin/.style={column name = stat.},
                columns/transitionsMin/.style={column name = trans.},
                columns/min_computation/.style={column name = time},
                columns/minWIO/.style={column name = t.w.IO},
                columns/convertBack/.style={column name = t.back},
                %columns/modelCheckingMin/.style={column name = ch. min},
                columns/modelCheckingMin/.style={column name = t.min},
                %columns/modelCheckingFull/.style={column name = ch. full},
                columns/modelCheckingFull/.style={column name = t.full},
                columns/computation/.style={column name = total},
                columns/speedupMC/.style={column name = gain},
                %columns/total/.style={column name = total},
                col sep=comma,
                string type,
                every column/.style={column type={|r}},
                every last column/.style={column type={|r|}},
                columns/label/.style={column type={|l},column name = name},
                %every last column/label/.style={column type={|l|}}
                every head row/.style={before row=\hline % \toprule
                                & 
                                \multicolumn{2}{c}{Encoding} & %{Conversion} &   
                                \multicolumn{3}{|c}{Full model} & 
                                \multicolumn{5}{|c}{Minimisation} &
                                \multicolumn{3}{|c|}{Model checking} 
                                % &
                                % \multicolumn{1}{|c|}{}                            
                                \\ \hline,
                                after row=\hline},
                every even row/.style={before row={\rowcolor{lightgray!30!white}}},                
                every last row/.style={after row=\hline},
                %outfile=tmp-results-table.tex
        ]{experiments/macos_results_arm64_mCRL2sccnew/result_pacman_newToolchain/results-table_new.csv}
        %}%rotate
        \vspace{6pt}
        \caption{%(Macos\_arm64\_NewSCC) 
        Results for the Pac-Man case study. All times are in seconds, rounded to two decimals. From left to right: encoding time for the image ({\sf png} file) into automata format ({\sf aut} file) conversion, without and with I/O; number of states, transitions, and {\sf aut} file size of full model; minimisation time, without and with I/O; number of states and transitions of minimal \lts; time to convert the minimal ({\sf aut} file) model into a \qdcm\ ({\sf json} file); time for model checking the full model with \voxlogica{}, and the minimal model with \graphlogica{}; model checking speed-up.
        %; sum of times for conversion, minimization, backwards conversion, and model checking the minimal model.
        }\label{tab:resultsTable2PacManMacArm64scc}
\end{table}

\begin{figure}
\centering
        \subfloat[][]
        {
\resizebox{!}{1.4in}{
\begin{tikzpicture}
        \begin{axis}[
            xlabel={States},
            ylabel={Speed up},
            grid = major,
            line width = 1pt
        ]
            \addplot+ [sharp plot] coordinates {(16,3.33) (64,3.30) (256,3.33) (1000,3.48) (4000,4.35) (16000,11.03) (64000,17.77)};
            \addplot+ [sharp plot] coordinates {(30.47,3.01) (121.88,3.78) (506.25,3.05) (1980,3.84) (7910,8.15) (31640,25.21)};
            \addplot+ [sharp plot] coordinates {(39.06,3.25) (156.25,3.30) (351.56,3.37) (625,3.39) (976.56,3.49) (3810,4.34) (15260,7.34)};
            %\addplot table [x=convWIO, y=speedupMC, col sep=comma] {experiments/current/results-table1_maze_2025_08_26.csv};
        \end{axis}
\end{tikzpicture}
}\label{fig:statesspeedVMX}
}\quad
\centering
    \subfloat[][]
      {
\resizebox{!}{1.4in}{
\begin{tikzpicture}
        \begin{axis}[
            xlabel={States},
            ylabel={Full MC time [s]},
            grid = major,
            line width = 1pt
        ]
            \addplot+ [sharp plot] coordinates {(16,0.72) (64,0.71) (256,0.72) (1000,0.75) (4000,0.94) (16000,2.46) (64000,3.85)};
            \addplot+ [sharp plot] coordinates {(30.47,0.76) (121.88,1.06) (506.25,0.87) (1980,1.26) (7910,2.67) (31640,8.28)};
            \addplot+ [sharp plot] coordinates {(39.06,0.71) (156.25,0.73) (351.56,0.75) (625,0.75) (976.56,0.77) (3810,0.96) (15260,1.62)};
        \end{axis}
\end{tikzpicture}
}\label{fig:statesFullMCVMX}
}\quad
\centering
    \subfloat[][]
      {
\resizebox{!}{1.4in}{
\begin{tikzpicture}
        \begin{axis}[
            xlabel={States},
            ylabel={Minimisation time [s]},
            grid = major,
            line width = 1pt
        ]
            \addplot+ [sharp plot] coordinates {(16,0.0) (64,0.02) (256,0.11) (1000,0.48) (4000,2.32) (16000,9.89) (64000,51.90)};
            \addplot+ [sharp plot] coordinates {(30.47,0.01) (121.88,0.05) (506.25,0.20) (1980,0.97) (7910,3.79) (31640,15.71)};
            \addplot+ [sharp plot] coordinates {(39.06,0.02) (156.25,0.08) (351.56,0.17) (625,0.31) (976.56,0.46) (3810,1.92) (15260,7.93)};
        \end{axis}
\end{tikzpicture}
}\label{fig:statesMinVMX}
}
\caption{\voxminx{:} Model size in number of states vs. speed-up factor for the maze (blue), the monoscope (red) and the Pac-Man (brown) examples (\ref{fig:statesspeedVMX}). For the same models: number of states vs. model checking of full model with \voxlogica{} (\ref{fig:statesFullMCVMX}) and number of states vs.\ minimisation time without IO (\ref{fig:statesMinVMX}).}\label{fig:plotAllVMX}
\end{figure}

%% file: Conclusions.tex
\section{Conclusions and Future Work}

Traditional model checking is a widely used verification technique for ensuring that a model of system behaviour conforms to a logical specification of desired properties. {\em Spatial} model checking, where a model of space is checked against a spatial logic formula, expressing {\em spatial} properties, has proven a valuable verification technique. It has been successfully applied in various domains, in particular in medical image analysis.  One way to improve the performance of spatial model checking is by model reduction. In this paper we have shown that this can be obtained exploiting the logical characterisation of \cop-bisimilarity by the \icrl{} logic.
(In-)finitary Compatible Reachability Logic includes conjunction and two conditional reachability modalities: one forward and one backward modality.

A practical and feasible minimisation method has been proposed for \cop-bisimilarity for finite closure models. 
The latter are a convenient theoretical framework for model checking spatial logics. 
The method relies on an encoding of (finite) closure models into suitable \lts{s} such that an existing efficient algorithm for branching bisimilarity can be used to obtain a minimal model. 
The encoding has been proven correct, in the sense that two points in the closure models are \cop-bisimilar if and only if the states they are mapped to by the encoding are branching bisimilar in the corresponding \lts. 
The correctness proof exploits induction  based on the novel notion of depth of a state in an \lts. 

An implementation of the encoding has been developed for the special case of symmetric closure models representing digital images.
The implementation is part of a toolchain, \voxminx, that uses a branching bisimilarity minimisation procedure from the \mCRLT\ toolsuit for the minimisation of the \lts\ encoding the original image. 
It also translates the spatial model checking results for the minimal model back to the pixels in the original image that was taken as input. 
Doing so, model checking results can be directly made visible on the input image itself. 
For example, \voxminx\ can highlight the pixels that satisfy the spatial property of interest while showing the other pixels in a shaded way. 
This provides the user with an immediate and informative visual feedback of the results of spatial model checking on images.

A feasibility study has been performed for \voxminx.
This study uses a benchmark of images and formulas and provides insight in the potential of the minimisation method for its use in the analysis of, possibly large, 2D~images. 
The benchmark consists of three series of representative images. Each image has been evaluated at six or seven different resolutions to be able to evaluate the performance for increasing sizes of the images.
For each set of images, the results confirm that a very promising speed-up of spatial model checking can be obtained for single formulas, also for images with a huge, but realistic, size. This insight is particularly relevant for the envisioned use of spatial model checking in the medical domain. 

Of course, minimisation pays off more when multiple formulas are checked on the same model, which is common in formal verification. 
In such a scenario, the multiple model checking time for the full model can in general be expected to be substantially longer than the sum of the conversion, minimisation, backwards conversion, and multiple model checking time of the minimal model, even for the current prototype.
The advantage increases, of course, when multiple properties are checked on a single minimised model. Furthermore, bisimulation-based minimisation of images may also serve as a particular form of compression of large images. Such compression preserves all the spatial properties of the original image that can be expressed by the spatial logic, but requires, in general, much less space to be stored.

Ongoing work, also taking into account the results presented in~\cite{FZMasterThesis}, is devoted to translating spatial-logic properties to the language of \mCRLT\ in order to use its state-of-the-art model checking techniques to verify spatial properties of directed graphs, in order to leverage the obtained speed-up even further.
Future work aims at further optimisations of the representations of the models and an integration of the components present in the \voxminx\ toolchain. 
The basic ingredients for such a mapping, i.e. the sets of states in the equivalence classes of the bisimulation, are readily available using the \mCRLT\ tool suite~\cite{Bu+19}.

%% file: AppendixExample.tex
\newpage
\section{Running Example for \voxminx}\label{apx:running}

To illustrate the various intermediate steps of the \voxmin\ toolchain in Figure~\ref{fig:toolchainext} and the file formats that are involved, we show the toolchain at work on a micro example of an image of 2~by~2 pixels, composed of three white pixels and one blue, as shown in Figure~\ref{fig:running}.

\begin{figure}
\centering
\input{pict/img_grid2by2.tex}
\caption{Running example: 4 pixels, 1 blue and 3 white}\label{fig:running}
\end{figure}

\paragraph{1) Encoding phase}
The encoding (Box 3 in Figure~\ref{fig:toolchainext}) of the image in
Figure ~\ref{fig:running} is produced with the convert-option of
\graphlogica\ that implements the encoding procedure of Section~\ref{sec:OptimisedEncoding}. 
The resulting \lts\ (in the \textsf{aut} format accepted by \mCRLT) is the following:
\begin{verbatim}
des (0,16,4)
(0, cFFFFFF,0)
(0,change,1)
(0,tau,2)
(0,tau,3)
(1, c3F51B5,1)
(1,change,0)
(1,change,2)
(1,change,3)
(2, cFFFFFF,2)
(2,tau,0)
(2,change,1)
(2,tau,3)
(3, cFFFFFF,3)
(3,tau,0)
(3,change,1)
(3,tau,2)
\end{verbatim}
This format reads as follows. The first line is a descriptor stating that the \lts\ has initial state~0, and it is composed of 16~transitions and 4~states. 
In the next 16~lines the transitions are listed as triples composed of the source state, the label and the target state.  
The labels are either \textsf{tau} or \textsf{change}, or one of the two colours in the image, \textsf{cFFFFFF} (denoting white) and \textsf{c3F51B5} (denoting the particular shade of blue). 
Let's call this file \textsf{pm\_1b3w.aut}.

\graphlogica\ also produces a frame file from the images if the {\sf --framefile} option is added. This file, {\sf pm\_1b3w.mcrl}, is generated in the \mCRLT\ model specification format: 
\begin{verbatim}
act 
cFFFFFF,
c3F51B5,
change;
init delta;
\end{verbatim}
The only purpose of this \mCRLT\ file is to define the three action
labels, \textsf{cFFFFFF},  \textsf{c3F51B5} and \textsf{change} (the \textsf{tau}-action is assumed to be present by default). 
The frame file does not specify any behaviour (as denoted by the \textsf{delta} process that stands for inaction).

\paragraph{2) Conversion of the \textsf{aut} file and frame file to the internal \textsf{lts} format}
%% \paragraph{2) Conversion of the {\sf .aut} and frame file into the internal {\sf \lts} format of \mCRLT}
In \voxminx\ the frame file {\sf pm\_1b3w.mcrl}, together with the file {\sf pm\_1b3w.aut}, are converted into an \lts\ in the {\sf lts} format of \mCRLT\ using the {\sf ltsconvert} function with the {\sf -l} option (see Box 4 in Figure~\ref{fig:toolchainext}). 

\paragraph{3) \lts\ minimisation}
The \textsf{pm\_1b3w.lts} file, produced in the previous step, is minimised using the  \textsf{ltsconvert --add-state-as-state-label -ebranching-bisim} operation (also this occurs in Box~4 in Figure~\ref{fig:toolchainext}). 
This results in a minimised \lts\ in \textsf{lts} format (\textsf{pm\_1b3w.min.lts}). 
From the latter, information on the relationship between the states in the full \lts\ and the minimised \lts\ can be obtained using the \textsf{ltsinfo} operator (see Box~9 in Figure~\ref{fig:toolchainext}) on the minimal \lts\ resulting in the following file:
\begin{verbatim}
Number of states: 2.
Number of action labels: 4 (including a tau label).
Number of transitions: 4.
Number of state labels: 2.
LTS is deterministic.
This lts has no probabilistic states.
The state labels of this labelled transition system:
0: (1).
1: (3).
1: (2).
1: (0).
\end{verbatim}
The above file shows that the minimised \lts\ has 2~states: state~0 and state~1.
These represent the two equivalence classes. 
The last four lines report how the states of the minimised \lts\ (states 0 and~1) are related to the original states of the encoded \lts\ (0, 1, 2,~3). 
In particular it says that state~1 of the original model in \textsf{pm\_1b3w.aut} is mapped to equivalence class~0 and the other three states (0, 2 and~3, between brackets) are mapped to the equivalence class~1.  
Note that state~1 in the encoded \lts\ in \textsf{pm\_1b3w.aut} has indeed a transition (self-loop) labeled with {\sf c3F51B5}, i.e.\ denoting blue, and states 0, 2 and~3 have self-loops with label \textsf{cFFFFFF}, i.e.\ denoting white.

\paragraph{4) Producing a graph model for model checking with \graphlogica}
The minimal \lts, in turn, is translated back (see Box~5 in Figure~\ref{fig:toolchainext}), first into the \textsf{aut} format, using the \mCRLT\ operation \textsf{ltsconvert -enone --in=lts pm\_1b3w.min.lts --out=aut pm\_1b3w.min.aut}, resulting in the following file:
\begin{verbatim}
des (1,4,2)
(1,"change",0)
(1,"cFFFFFF",1)
(0,"c3F51B5",0)
(0,"change",1)
\end{verbatim}
Next, this \textsf{aut} file is transformed into a \textsf{json} file, representing a finite closure model, accepted by the model checker \graphlogica\ shown below:
\begin{verbatim}
{
  "nodes": [
    {
      "id": "1",
      "atoms": [
        "cFFFFFF"
      ]
    },
    {
      "id": "0",
      "atoms": [
        "c3F51B5"
      ]
    }
  ],
  "arcs": [
    {
      "source": "1",
      "target": "0"
    },
    {
      "source": "0",
      "target": "1"
    }
  ]
}
\end{verbatim}
The above transformation is provided by the \graphlogica\ tool itself, using the following command \textsf{GraphLogicA --convert pm\_1b3w.min.aut pm\_1b3w.min.json}. 
The latter \textsf{json} file is used by \graphlogica\ for spatial model checking on the closure model. 
In particular, this file is called from within the logical specification file given as input to \graphlogica. 
An example is shown below.
\begin{verbatim}
load graph = "pm_1b3w.min.json"

let white = ap("cFFFFFF")
let blue = ap("c3F51B5")

// wtchb: property "white pixels that are touching blue", defined below
let wtchb = white & touch(white,blue)

save "white.json" white
save "blue.json" blue

save "wtchb.json" wtchb
\end{verbatim}
The first line loads the minimised model as a closure model with points in the model representing the equivalence classes. 
It then defines two atomic propositions, one for the white point (``node" with id=1)  and one for the blue point (``node" with id=0) in the model, and a simple property for white points touching the blue one. 
Finally, three results are saved (in \textsf{json} format). 
The model checking result \textsf{wtchb.json}, for the white points touching blue ones, is as follows.
\begin{verbatim}
{
  "nodes": [
    {
      "id": "1",
      "atoms": [
        "result",
        "cFFFFFF"
      ]
    },
    {
      "id": "0",
      "atoms": [
        "c3F51B5"
      ]
    }
  ],
  "arcs": [
    {
      "source": "1",
      "target": "0"
    },
    {
      "source": "0",
      "target": "1"
    }
  ]
}
\end{verbatim} 
Note that this result file looks very similar to the \textsf{json} model file that \graphlogica\ takes as input, but now the points (i.e.\ equivalence classes) that satisfy the property (\textsf{wtchb} in this case) have an additional label, \textsf{result}. 
In the above file the node with label "\textsf{cFFFFFF}" has this additional label.

\paragraph{5) Translating the result back to the original images}
Finally, we convert the decorated graph model in \textsf{json} format back into an \textsf{aut} file using \graphlogica\ and the convert option. 
We then extract, via the Python script \textsf{resultaut2rts.py}, from the \textsf{aut} file an \textsf{rts} file that lists all the nodes of the minimal model that satisfy the property \textsf{wtchb}. 
There is only one such class, and this is class~1.
\begin{verbatim}
[
  1
]
\end{verbatim}
This result can be projected (see Box~11 in Figure~\ref{fig:toolchainext}) onto the original image via another Python script \textsf{glresults.py} that takes as input the original image, the \textsf{pm\_1b3w.min.info} file and the \textsf{wtchb.rts} file. 
The resulting image has four pixels, the three white ones where the
property \textsf{wtchb} holds, and one black one denoting that
property \textsf{wtchb} does not hold in the blue pixel.
The result is shown in Figure~\ref{fig:runningfinal}.

\begin{figure}
\centering
\input{pict/img_grid2by2MCresult.tex}
\caption{Running example model checking result for atomic proposition {\sf white}: 4 pixels, 1 black and 3 white.}\label{fig:runningfinal}
\end{figure}

The procedure involves various file transformations and the writing and reading of intermediate files. 
This is because the current toolchain is a prototype and allowing to study the time each intermediate step takes. 
In future work we plan to integrate the various steps of the procedure keeping intermediate results in appropriate data structures avoiding time and memory lost in~I/O.

%% file: pict/img_grid2by2.tex
%% file img_grid2by2.tex used in Appendix %%

%\scriptsize
\scalebox{0.9}{
\begin{tikzpicture}[baseline=1,
  <->, >=stealth', semithick, shorten >=0.5pt, shorten <=0.5pt,
  every state/.style={
    %circle, minimum size=12pt, inner sep=0.5pt, draw},
    rectangle, minimum size=42pt, inner sep=0.5pt, draw},
  ]
  
  \node [state, fill=white!50] (x11) at (1.0,1.0) {\phantom{21}};
  \node [state, fill=white!50] (x12) at (1.0,2.5) {\phantom{16}};
%  \node [state, fill=green!50] (x13) at (1.0,4.0) {\phantom{11}};
%  \node [state, fill=green!50] (x14) at (1.0,5.5) {\phantom{6}};
%  \node [state, fill=green!50] (x15) at (1.0,7.0) {\phantom{1}};

  \node [state, fill=white!50] (x21) at (2.5,1.0) {\phantom{22}};
  \node [state, fill=blue!50  ] (x22) at (2.5,2.5) {\phantom{17}};
%  \node [state, fill=red!50  ] (x23) at (2.5,4.0) {\phantom{12}};
%  \node [state, fill=red!50  ] (x24) at (2.5,5.5) {\phantom{7}};
%  \node [state, fill=green!50] (x25) at (2.5,7.0) {\phantom{2}};

%  \node [state, fill=green!50] (x31) at (4.0,1.0) {\phantom{23}};
%  \node [state, fill=red!50  ] (x32) at (4.0,2.5) {\phantom{18}};
%  \node [state, fill=red!50  ] (x33) at (4.0,4.0) {\phantom{13}};
%  \node [state, fill=red!50  ] (x34) at (4.0,5.5) {\phantom{8}};
%  \node [state, fill=green!50] (x35) at (4.0,7.0) {\phantom{3}};
%
%  \node [state, fill=green!50] (x41) at (5.5,1.0) {\phantom{24}};
%  \node [state, fill=red!50  ] (x42) at (5.5,2.5) {\phantom{19}};
%  \node [state, fill=red!50  ] (x43) at (5.5,4.0) {\phantom{14}};
%  \node [state, fill=red!50  ] (x44) at (5.5,5.5) {\phantom{9}};
%  \node [state, fill=green!50] (x45) at (5.5,7.0) {\phantom{4}};
%
%  \node [state, fill=green!50] (x51) at (7.0,1.0) {\phantom{25}};
%  \node [state, fill=green!50] (x52) at (7.0,2.5) {\phantom{20}};
%  \node [state, fill=green!50] (x53) at (7.0,4.0) {\phantom{15}};
%  \node [state, fill=green!50] (x54) at (7.0,5.5) {\phantom{10}};
%  \node [state, fill=green!50] (x55) at (7.0,7.0) {\phantom{5}};

\end{tikzpicture}
}

%% file: pict/img_grid2by2MCresult.tex
%% file img_grid2by2.tex used in Appendix %%

%\scriptsize
\scalebox{0.9}{
\begin{tikzpicture}[baseline=1,
  <->, >=stealth', semithick, shorten >=0.5pt, shorten <=0.5pt,
  every state/.style={
    %circle, minimum size=12pt, inner sep=0.5pt, draw},
    rectangle, minimum size=42pt, inner sep=0.5pt, draw},
  ]
  
  \node [state, fill=white!50] (x11) at (1.0,1.0) {\phantom{21}};
  \node [state, fill=white!50] (x12) at (1.0,2.5) {\phantom{16}};
%  \node [state, fill=green!50] (x13) at (1.0,4.0) {\phantom{11}};
%  \node [state, fill=green!50] (x14) at (1.0,5.5) {\phantom{6}};
%  \node [state, fill=green!50] (x15) at (1.0,7.0) {\phantom{1}};

  \node [state, fill=white!50] (x21) at (2.5,1.0) {\phantom{22}};
  \node [state, fill=black ] (x22) at (2.5,2.5) {\phantom{17}};
%  \node [state, fill=red!50  ] (x23) at (2.5,4.0) {\phantom{12}};
%  \node [state, fill=red!50  ] (x24) at (2.5,5.5) {\phantom{7}};
%  \node [state, fill=green!50] (x25) at (2.5,7.0) {\phantom{2}};

%  \node [state, fill=green!50] (x31) at (4.0,1.0) {\phantom{23}};
%  \node [state, fill=red!50  ] (x32) at (4.0,2.5) {\phantom{18}};
%  \node [state, fill=red!50  ] (x33) at (4.0,4.0) {\phantom{13}};
%  \node [state, fill=red!50  ] (x34) at (4.0,5.5) {\phantom{8}};
%  \node [state, fill=green!50] (x35) at (4.0,7.0) {\phantom{3}};
%
%  \node [state, fill=green!50] (x41) at (5.5,1.0) {\phantom{24}};
%  \node [state, fill=red!50  ] (x42) at (5.5,2.5) {\phantom{19}};
%  \node [state, fill=red!50  ] (x43) at (5.5,4.0) {\phantom{14}};
%  \node [state, fill=red!50  ] (x44) at (5.5,5.5) {\phantom{9}};
%  \node [state, fill=green!50] (x45) at (5.5,7.0) {\phantom{4}};
%
%  \node [state, fill=green!50] (x51) at (7.0,1.0) {\phantom{25}};
%  \node [state, fill=green!50] (x52) at (7.0,2.5) {\phantom{20}};
%  \node [state, fill=green!50] (x53) at (7.0,4.0) {\phantom{15}};
%  \node [state, fill=green!50] (x54) at (7.0,5.5) {\phantom{10}};
%  \node [state, fill=green!50] (x55) at (7.0,7.0) {\phantom{5}};

\end{tikzpicture}
}